\documentclass[reqno,a4paper]{amsart}

\usepackage{graphicx}
\usepackage{multicol}
\usepackage{color}
\usepackage{caption}
\usepackage{setspace}
\usepackage{amssymb,amsmath}  
\numberwithin{equation}{section}   
\usepackage{amsfonts,amsrefs}
\usepackage{amsthm} 
\usepackage{bm}    
\usepackage{bbm}      
\usepackage{tabularx}
\usepackage{enumerate}
\usepackage{algorithmic}
\usepackage{lscape}
\makeatletter
\def\verbatim@font{\linespread{1}\normalfont\ttfamily}
\makeatother
\usepackage{pgfpages}
\usepackage{changepage}
\usepackage{fancyhdr,a4wide}
\usepackage{forest}
\usepackage{listings}
\usepackage{placeins}
\usepackage[foot]{amsaddr} 

\usepackage[utf8]{inputenc}
\usepackage[T1]{fontenc}
\usepackage{amsmath}
\usepackage{tensor}
\usepackage{amssymb}
\usepackage{graphicx}
\usepackage{subfigure}
\usepackage{mathrsfs}
\usepackage{microtype}
\usepackage{rotating}
\usepackage{xspace}
\usepackage{mathtools}
\usepackage{tikz,pgfplots}
\usepackage{comment}
\usepackage{MnSymbol}
\pgfplotsset{compat=1.9}
\usetikzlibrary{external}
\tikzset{every label/.style={font=\footnotesize,inner sep=1.5pt}}
\usepackage{dutchcal} 
\usepackage{geometry}
\usepackage{hyperref}

\theoremstyle{plain}
\newtheorem{thm}{Theorem}[section]
\newtheorem{lem}[thm]{Lemma}
\newtheorem{prop}[thm]{Proposition}

\newtheorem*{thm*}{Theorem}

\theoremstyle{definition}
\newtheorem{Def}[thm]{Definition}

\theoremstyle{remark}
\newtheorem{rem}[thm]{Remark}

\input{Tdef.def}

\allowdisplaybreaks
\def\jf#1{\relax}

\title[Ultraviolet Cascades, Norm Inflation, and the Rendall Instability for the Einstein-Euler System with Positive Cosmological Constant]{Ultraviolet Cascades, Norm Inflation, and the Rendall Instability for the Einstein-Euler System with Positive Cosmological Constant}

\author[E.~Marshall]{Elliot Marshall}
\address{Institute of Applied and Computational Mathematics \\
FORTH \\
70013 Heraklion\\
Greece}
\email{elliot.marshall@iacm.forth.gr}

\begin{document}

\begin{abstract}
    We numerically study perturbations of spatially homogeneous and orthogonal solutions to the Einstein-Euler equations with positive cosmological constant and linear equation of state $p=K\rho$ for $K \in (\frac{1}{3},1)$. Recent numerical and analytic work has demonstrated that the fluid density gradient in these perturbed spacetimes can form steep gradients and blow up as future timelike infinity is approached, a phenomenon now known as the `Rendall instability'. In this article, we study the onset of this instability outside of one spatial dimension for the first time. Our results provide compelling evidence that the Rendall instability causes a forward ultraviolet cascade and $H^{s}$ norm inflation in the energy density of the fluid from small $U(1)$-symmetric perturbations. This is, to the best of our knowledge, the first example of small data norm inflation for solutions of the field equations with a positive cosmological constant.  
\end{abstract}

\maketitle

\section{Introduction}
Exponentially expanding, fluid-filled spacetimes satisfying a linear equation of state $p=K\rho$, $K \in [0,1]$ form the basis of modern cosmology. Due to their physical importance, a significant amount of research has been devoted to understanding the future stability of these models. The first rigorous results in this direction are due to Rodnianski and Speck \cites{RodnianskiSpeck:2013,Speck:2012} who established the non-linear future stability of FLRW solutions to the Einstein-Euler equations with positive cosmological constant and $K \in (0, \frac{1}{3})$. Stability at the endpoints $K=0$ and $K=\frac{1}{3}$ was subsequently established in \cite{HadzicSpeck:2015} and \cite{LubbeKroon:2013}, respectively. Related works have explored alternative topologies, equations of state, and rates of spacetime expansion, see for example \cites{Oliynyk:CMP_2016,Ringstrom:2009,Mondal:2021,Mondal:2024,Fajman_et_al:2021b,FOOW:2023,FOW:2021,Speck:2013,LiuWei:2021,Wei:2018,LeFlochWei:2021,Fajman_et_al:2025,Bernhardt:2025}.  \newline \par 

The aforementioned results all assume that $K \leq \frac{1}{3}$. Indeed, due to influential work of Rendall \cite{Rendall:2004}, it was widely expected that cosmological models with super-radiative ($K>\frac{1}{3}$) equations of state would be unstable. Specifically, Rendall conjectured that the density gradient $\frac{\del_{x^{\Omega}}\rho}{\rho}$ of the fluid would become unbounded towards future timelike infinity where the spatial fluid velocity vanishes. The blow up of density gradient in these models, now known as the `Rendall instability', has subsequently been numerically observed, both towards the future and the past, in Gowdy and $\mathbb{T}^{2}$-symmetry, see for example \cites{BMO:2023,BMO:2024,Marshall:2026,ColeyLim:2012,ColeyLim:2013,ColeyLim:2015,MarshallOliynyk:2022}. Moreover, this instability is of possible physical interest because it is inconsistent with the expected late time behaviour of exponentially expanding cosmologies where the density gradient remains bounded \cites{Lim_et_al:2004,Rendall:2004}. \newline \par

The mechanism for this instability, anticipated by Rendall in \cite{Rendall:2004}, is the difference in the late time behaviour of orthogonal and tilted fluids. Fluid tilt is defined with respect to a fixed timelike vector $u = u^{\mu}\del_{\mu}$. A fluid with  four velocity $v = v^{\mu}\del_{\mu}$ is said to be \textit{orthogonal} with respect to $u$ if $v$ is aligned with $u$ and \textit{tilted} with respect to $u$ otherwise. For a general spacetime without symmetries there is no preferred choice for $u$. However, for spatially homogeneous cosmologies there exists a preferred spacelike foliation by hypersurfaces of homogeneity. In this case, it is natural to take $u$ to be the normal vector of the foliation. \newline \par

The tilt classification is useful because homogeneous orthogonal and tilted fluids can have dramatically different asymptotic behaviour. This has been extensively studied in the dynamical systems literature \cites{EllisWainwright:1997,Sandin:2009,SandinUggla:2008,Hervik_et_al:2010,Hewitt_et_al:2001}, where it was first observed that $K=\frac{1}{3}$ acts a bifurcation point for the fluid dynamics. Fluids with $K<\frac{1}{3}$ approach orthogonal states while for $K>\frac{1}{3}$ the fluid velocity approaches a null vector at future timelike infinity. The latter behaviour is commonly referred to as \textit{extreme tilt}.  \newline \par

In the presence of a cosmological constant, we expect that solutions to the Einstein-Euler equations are pointwise well-approximated by spatially homogeneous solutions\footnote{That is to say, spatial derivatives become negligible towards the future.} towards future timelike infinity. Thus, based on the analysis of spatially homogeneous models in the dynamical systems literature, the fluid should become \textit{asymptotically} orthogonal or extremely tilted at each individual point in space. The Rendall instability occurs when asymptotically orthogonal points are adjacent to asymptotically extremely tilted points. At these two points, the density decays at different rates which causes sharp spiky features, reminiscent of the spikes seen towards the big bang \cites{BergerMoncrief:1993,BergerMoncrief:1998b,BergerMoncrief1998a,GarfinklePretorius:2020,BergerGarfinkle:1998,HernStewart:1998}, to develop. In turn, these sharp features in the density cause the density gradient to blow up. \newline \par

For $K<\frac{1}{3}$, this instability does not occur because the orthogonal state is an attractor for the fluid. However, for $K>\frac{1}{3}$, perturbations of an orthogonal spacetime, such as the FLRW solution, will, in general, contain both points where the fluid is asymptotically orthogonal and points where it is asymptotically extremely tilted. Thus, we expect that generic perturbations of spatially homogeneous and orthogonal solutions to the Einstein-Euler system with positive cosmological constant and $K>\frac{1}{3}$ will develop the Rendall instability. \newline \par

Note that not all cosmological models with $K>\frac{1}{3}$ are unstable. In fact, recent analytic work by the author in collaboration with G. Fournodavlos and T. Oliynyk \cites{Fournodavlos_et_al:2024,FMO:2025} has rigorously established that certain tilted, spatially homogeneous solutions to the Einstein-Euler equations with positive cosmological constant and super-radiative equations of state are non-linearly stable under suitably small inhomogeneous perturbations. In this case, the Rendall instability is avoided because there are no points in space where the fluid develops asymptotically orthogonal behaviour\footnote{That is to say, the fluid approaches extreme tilt at every point in space.}. Recent work by Beyer \cite{Beyer:2026} has also established analogous stability results for tilted fluids on Kasner backgrounds near the big bang. \newline \par

\subsection{Ultraviolet Cascades, Norm Inflation, and the Rendall Instability}
In \cite{Oliynyk:2024}, Oliynyk established a global-in-time, future non-linear stability result for $\mathbb{T}^{2}$-symmetric perturbations of the orthogonal solution to the spatially homogeneous Euler equations on a fixed de Sitter background with super-radiative sound speeds. Moreover, Oliynyk rigorously established that the Rendall instability occurs for this class of solutions. In particular, this result shows that the blow-up of the density gradient occurs at the points in space where the spatial fluid velocity vanishes, exactly as conjectured by Rendall. \newline \par

An interesting aspect of Oliynyk's result is that it requires a restriction on the sound speed of the fluid depending on the Sobolev regularity $s$, 
\begin{align*}
    \frac{1}{3} < K < \frac{1+s}{3s},
\end{align*}
to ensure that the $H^{s}$ norm of the perturbed solutions remains bounded. If this upper bound on the regularity is real, it would imply an $H^{s}$ instability for suitably large $s$ and each fixed $K$. Oliynyk conjectured that this instability, if it exists, is due to \textit{norm inflation}.

\begin{Def}[Norm Inflation]
    We say $H^{s}$ norm inflation occurs if, for fixed $s$ and any $\delta >0$, there exists initial data $u(0)$ with $\|u(0)\|_{H^{s}} = \delta$ such that
    \begin{align}
        \label{eqn:norm_inflation_inequality}
        \|u(T)\|_{H^{s}} > \frac{1}{\delta}.
    \end{align}
    at some finite time $T>0$.
\end{Def}

Norm inflation indicates that energy is being rapidly transferred from low to high frequencies which is known as a \textit{forward ultraviolet cascade}. Thus, Oliynyk's conjecture suggests that the Rendall instability is a mechanism for a spectral cascade in exponentially expanding spacetimes. Indeed, our numerical results support this claim. As discussed in Section \ref{sec:Numerical_Results}, we find that the Rendall instability does in fact lead to norm inflation and an ultraviolet cascade in the energy density of the fluid.  \newline \par

Although similar behaviour has previously been established for perturbations of Anti-de Sitter (AdS) spacetimes, see for example \cites{Moschidis:2020,Moschidis:2023,KehleMoschidis:2026,BizonRostworowski:2011}, our results, to the best of our knowledge, provide the first example of small data norm inflation for solutions of the field equations with a \textit{positive} cosmological constant. It is also important to emphasise that the mechanisms for the spectral cascade in these two settings are fundamentally different. In the context of AdS, the instability is driven by resonant mode mixing, see \cites{Bizon_et_al:2015,Jalmuzna_et_al:2011,BizonRostworowski:2011} for details, while the Rendall instability is driven by sharp spatial features emerging in the fluid energy density at the transition between tilted and orthogonal states.

\subsection{Outline of Paper}
The purpose of this article is to numerically investigate the development of ultraviolet cascades and norm inflation from the Rendall instability in solutions of the $U(1)$-symmetric Einstein-Euler equations with positive cosmological constant and super-radiative equations of state. To this end, we study small perturbations of spatially homogeneous and orthogonal spacetimes. In particular, this is the first study of the Rendall instability outside of one spatial dimension. \newline \par

As part of this article, we develop a new form of the orthonormal frame formalism \cites{UEWE:2003,Ellis:1967,EllisMaccallum:1969,MacCallum:1973,ElstUggla:1997} which is adapted to the presence of a single spacelike Killing vector. This generalises earlier work of Wainwright and collaborators on the orthonormal frame formalism for two spacelike Killing vectors \cites{Wainwright:1979,Wainwright:1981, Elst_et_al:2001,HewittWainwright:1990}. Our numerical simulations lead to the following conclusions:
\begin{enumerate}[(i)]
    \item The Rendall instability can occur outside of one spatial dimension and drives a forward ultraviolet cascade in the energy density of the fluid.
    \item $H^{s}$ norm inflation occurs for the energy density of the fluid, however the resolution of our simulations is insufficient to conclusively test the relationship between the sound speed and Sobolev regularity conjectured by Oliynyk. 
\end{enumerate}

The article is structured as follows. In Section \ref{sec:Orthonormal_Frame_Formalism_U(1)} we derive the Einstein equations for a symmetry adapted orthonormal frame. We then show that these equations can be expressed in a symmetric hyperbolic form in Section \ref{sec:Sym_Hyp_U1_Eqns}, followed by a derivation of the Euler equations, relative to our frame, in Section \ref{sec:U(1)_Euler_Equations}. In Section \ref{sec:Numerical_Implementation}, we cover our numerical implementation, choice of initial data, and code tests. Finally, in Section \ref{sec:Numerical_Results}, we discuss our numerical results.

\subsection{Notation and Indexing Conventions}
Our indexing conventions are as follows: lower case Greek letters, e.g. $\mu,\nu,\kappa$, will label spacetime coordinate indices that run from $0$ to $3$ while upper case Greek letters, e.g. $\Gamma,\Omega,\Sigma$, will label spatial coordinate indices that run from $1$ to $3$. We will primarily work with orthonormal frames $e_{a} = e^{\mu}_{a}\del_{\mu}$. Lower case Latin letters, e.g. $a$, $b$, $c$, that run from $0$ to $3$ will label spacetime frame indices while spatial frame indices will be labelled by upper case
Latin letters, e.g. $A$, $B$, $C$, and run from $1$ to $3$. Lower case calligraphic Latin indices, e.g. $\mathcal{a}$, $\mathcal{b}$, denote frame indices
which run from 1 to 2.

\subsection{Index Operations}
The \textit{symmetrisation}, \textit{anti-symmetrisation}, and \textit{symmetric trace-free} operations on pairs of spatial frame indices are defined by
\begin{gather*}
    \Sigma_{(AB)} = \frac{1}{2}(\Sigma_{AB} + \Sigma_{BA}), \quad \Sigma_{[AB]} = \frac{1}{2}(\Sigma_{AB}-\Sigma_{BA}),\intertext{and}
    \Sigma_{\langle AB \rangle} = \Sigma_{(AB)}- \frac{1}{3}\delta^{CD}\Sigma_{CD}\delta_{AB},
\end{gather*}
respectively.

\subsection{Acknowledgements}
I would like to thank Todd Oliynyk for helpful conversations. I gratefully acknowledge the support of the ERC starting grant 101078061 SINGinGR, under the European Union’s Horizon Europe program for research and innovation. I am also grateful to the referee whose comments greatly improved the clarity of this paper's results.

\section{The U(1)-Symmetric Orthonormal Frame Formalism}
\label{sec:Orthonormal_Frame_Formalism_U(1)}
As mentioned in the introduction, we are interested in studying $U(1)$-symmetric solutions of the Einstein-Euler equations with positive cosmological constant. The purpose of this section is to derive a new form of the orthonormal frame formalism adapted to a single spacelike Killing vector. For the most part, previous work using the frame formalism, outside of spatial homogeneity\footnote{See \cite{EllisWainwright:1997} for details on the orthonormal frame formalism in spatially homogeneous cosmologies.}, has considered either the full $3+1$ system or symmetry reduced $1+1$ systems, see for example \cites{Marshall:2026,BOZ:2025,FMO:2025,Zheng:2026,Garfinkle_et_al:2023,Garfinkle:2004,GarfinklePretorius:2020,Lim_Thesis:2004,ColeyLim:2012,Andersson_et_al:2005}. $U(1)$-symmetry presents a middle-ground between these two options; it is simpler than the $3+1$ system but also allows for more complex dynamical behaviour than $1+1$ systems. Another advantage is that numerical simulations of the $U(1)$ system require significantly less computational power than $3+1$ simulations of comparable resolution. \newline \par

The derivation in this section is somewhat long, but we have included almost all details as we expect it will be a useful resource for future numerical or analytic studies in $U(1)$-symmetry. For the impatient reader, the final symmetric hyperbolic form of the Einstein equations is given in equations \eqref{eqn:sigma_plus_evo}-\eqref{eqn:CH_U1}.

\begin{rem}
A similar approach has recently been utilised by Dong \cite{Dong:2026} to study big bang stability for the class of polarised\footnote{A spacetime is polarised if the Killing vector $\xi$ is hypersurface orthogonal.} $U(1)$-symmetric spacetimes. In this case, a variant of the orthonormal frame formalism adapted to the reduced $2+1$ dimensional spacetime was employed. Instead our formalism works with frames on the full 3+1 spacetime. Furthermore, our formalism is not restricted to polarised $U(1)$-symmetry. 
\end{rem}

\subsection{Group Invariant Frames}
To begin, we consider spacetimes of the form
\begin{align*}
    M = \mathbb{R} \times \Sigma.
\end{align*}
where $\Sigma$ is a three-dimensional, orientable, compact manifold without boundary. These conditions imply that $M$ admits a global frame\footnote{That is, $M$ is parallelisable, see \cite{Parker:1984} for a proof.} $\{e_{a}\}$. As discussed above, we are interested in the case where the spacetime $M$ has a spacelike Killing vector field $\xi$ which generates the $U(1)$-symmetry group.
\newline \par
 
First, we introduce an orthonormal frame $\{e_{i}\}$, $i=0,\cdots,3$, where the timelike vector $e_{0}$ is hypersurface orthogonal. We will assume that $e_{3}$ is tangent to the group orbits, that is
\begin{align}
\label{eqn:orbit_aligned_e3}
    e_{3} = f\xi
\end{align}
for some function $f$.  In particular, since $e_{3}$ is a unit vector, we have that
\begin{align*}
    f = \lambda^{\frac{-1}{2}}, \quad \lambda = \xi_{a}\xi^{a}.
\end{align*}
Frames satisfying \eqref{eqn:orbit_aligned_e3} are called \textit{orbit aligned}. Note that we are always allowed to (locally) rotate our spatial frame such that $e_{3}$ is orbit aligned. Indeed, \eqref{eqn:orbit_aligned_e3} amounts to a gauge choice for the spatial frame. \newline \par

Expressed in this frame, the components of the metric are given by
\begin{align*}
    g_{ab}:= g(e_{a},e_{b}) = \eta_{ab}
\end{align*}
where $\eta_{ab}$ is the Minkowski metric\footnote{Frame indices will always be raised and lowered using the frame metric $g_{ab} = \eta_{ab}$.}. The primary variables in the orthonormal frame formalism are the commutator coefficients, $\tensor{c}{_a^c_b}$, defined by
\begin{align}
    [e_{a},e_{b}] = \tensor{c}{_a^c_b}e_{c}.
\end{align}
Similarly, we can define the connection coefficients $\omega_{abc}$ by
\begin{equation}
\begin{aligned}
\label{eqn:connection_coefficient_defn}
\nabla_{e_{a}}e_{b} = \tensor{\omega}{_a^c_b}e_{c}, \;\;
\omega_{acb} = g_{cd}\tensor{\omega}{_a^d_b}, \;\; 
\omega_{acb} = - \omega_{abc}.
\end{aligned}
\end{equation}
Since the frame $e_{a}$ is orthonormal, the connection and commutator coefficients uniquely determine each other\footnote{Hence, we could equivalently work with the connection coefficients.} via the relations
\begin{align}
\label{eqn:commutator_connection_relations}
    \tensor{c}{_a^c_b} = \tensor{\omega}{_a^c_b} -\tensor{\omega}{_b^c_a}, \;\; \tensor{\omega}{_{abc}} = \frac{1}{2}(c_{bac}-c_{cba}-c_{acb}).
\end{align}

Next, we want to consider frames which are \textit{group invariant}.
\begin{Def}[Group Invariant Frame]
An orthonormal frame is group invariant if
\begin{align}
   \label{eqn:Group_Invariant_Frame_Condition}
    [e_{i},\xi] = 0,
\end{align}
for all $i=0,1,2,3$.
\end{Def}
As we will see, working with an orbit aligned, group invariant frame considerably simplifies the frame equations. \newline \par

In \cite{Geroch:1971} Geroch showed that vector fields on the 2+1 quotient space generated by the U(1) symmetry action are in one-to-one correspondence with vector fields $V$ on $M$ satisfying 
\begin{align*}
    \mathcal{L}_{\xi}V = 0, \quad g(\xi,V) = 0.
\end{align*}
where $g$ is the metric on $M$. Hence, we can \textit{locally} introduce an orthonormal frame on the 2+1 quotient space which pulls back to a collection of group invariant, orthonormal vector fields $\{e_{0},e_{1},e_{2}\}$ on $M$. This can then be completed to an orbit aligned, group invariant frame on $M$ by including $e_{3} = \lambda^{\frac{-1}{2}}\xi$, $\lambda \neq 0$. Now, by construction, 
\begin{align*}
    [e_{a},\xi] = 0, \quad g(e_{a},\xi) = 0, \quad a = 0,1,2.
\end{align*}
Similarly, for $e_{3}$ we have
\begin{align*}
       [e_{3},\xi] = -\xi(\lambda^{\frac{-1}{2}})\xi - f[\xi,\xi] = \frac{1}{2}\lambda^{\frac{-3}{2}}\xi(\lambda)\xi.
   \end{align*}
Hence, we can obtain \eqref{eqn:Group_Invariant_Frame_Condition} for $i=3$ provided that $\xi(\lambda)=0$. This is true since 
\begin{align}
\label{eqn:lambda_invariance}
       \xi(\lambda) = \mathcal{L}_{\xi}\lambda &= \xi^{a}\nabla_{a}\lambda \nonumber \\
       &= \xi^{a}\nabla_{a}(\xi^{b}\xi_{b}) \nonumber \\
       &= \xi^{a}\xi^{b}\big(\nabla_{a}\xi_{b} + \nabla_{b}\xi_{a}\big) \nonumber \\
       &= 0,
\end{align}
where we have used Killing's equation 
\begin{align}
    \label{eqn:Killing_eqn}
        \nabla_{a}\xi_{b} = -\nabla_{b}\xi_{a},
\end{align}
to obtain the last equality. Together the group invariance and orbit alignment of the frame $\{e_{a}\}$ imply
\begin{align}
    \label{eqn:orbit_alignment_simplifications}
    [e_{0},e_{3}] = \tensor{c}{_0^3_3}e_{3}, \quad [e_{1},e_{3}] = \tensor{c}{_1^3_3}e_{3}, \quad
       [e_{2},e_{3}] = \tensor{c}{_2^3_3}e_{3}. 
\end{align}

Thus, we can locally construct an orbit aligned, group invariant frame and, by \eqref{eqn:orbit_alignment_simplifications}, using such a frame implies 
\begin{align}
\label{eqn:Group_Invariant_Commutator_Simplifications}
    \tensor{c}{_a^c_3} = 0, \quad a,c = 0,1,2.
\end{align}
Furthermore, using the Jacobi identity applied to the three vectors $\xi$, $e_{a}$, $e_{b}$ ($a\neq b$) and the group invariance of the frame, we find that the commutator coefficients are constant on the group orbits
\begin{align}
\label{eqn:Jacobi_Identity_KillingVector}
   0 &= \big[\xi, [e_{a},e_{b}]\big] + \big[e_{a}, [e_{b},\xi]\big] + \big[e_{b}, [\xi,e_{a}]\big] \nonumber \\
   &= \big[\xi, \tensor{c}{_a^c_b}e_{c}\big] \nonumber \\
   &= \xi(\tensor{c}{_a^c_b})e_{c} + \tensor{c}{_a^c_b}[\xi,e_{c}] \nonumber \\
   &= \xi(\tensor{c}{_a^c_b})e_{c}.
\end{align}
Combining the above with the fact $e_{3} = \lambda^{\frac{-1}{2}}\xi$, we then obtain
\begin{align}
\label{eqn:e3_commutator_independence}
    e_{3}(\tensor{c}{_a^c_b}) = 0.
\end{align}

\begin{rem}[Polarised $U(1)$-symmetry]
Spacetimes with a hypersurface orthogonal Killing vector $\xi$ are called \textit{polarised} and form an invariant subset of the wider $U(1)$-symmetric class. In this case, $e_{0}$, $e_{1}$, and $e_{2}$ are surface forming,
\begin{align*}
    [e_{a},e_{b}] = \tensor{c}{_a^c_b}e_{c}, \quad a, b, c = 0,1,2,
\end{align*}
which implies
\begin{align}
\label{eqn:HO_Killing_Commutator_Simplifications}
    \tensor{c}{_0^3_1} = \tensor{c}{_0^3_2} = \tensor{c}{_1^3_2} = 0.
\end{align}
Polarised $U(1)$-symmetry is also equivalent to the twist of the Killing vector vanishing, cf. Appendix \ref{app:Frame_Quantities}.
\end{rem}

\subsection{Orthonormal Frame Formalism}
\label{sec:U1_OrthonormalFrame_Formalism}
We now want to express the orthonormal frame formalism using our symmetry adapted frame. See \cites{ElstUggla:1997,EllisWainwright:1997,RöhrUggla:2005} for detailed discussions of the general $3+1$ formalism. First, the commutator coefficients are decomposed into new irreducible variables\footnote{Note, since we have assumed $e_{0}$ is hypersurface orthogonal, our frame is irrotational and $w_{A}=0$.}, 
\begin{align}
\label{eqn:mixed_commutator_decomposition}
    [e_{0},e_{A}] &= \dot{u}_{A}e_{0} - \big[ H\delta_{A}^{B} + \tensor{\sigma}{_A^B} + \tensor{\eps}{_A^B_C}\Omega^{C}\big]e_{B}, \\
\label{eqn:spatial_commutator_decomposition}
    [e_{A},e_{B}] &= \big(\eps_{ABD}n^{CD} + 2a_{[A}\delta^{C}_{B]}\big) e_{C},
\end{align}
where $n_{AB}$ is symmetric, $\sigma_{AB}$ is symmetric and trace-free, and $\eps_{ABC}$ is the totally antisymmetric permutation symbol satisfying
\begin{align*}
    \eps_{123} = 1.
\end{align*}
In the framework of the $3+1$ decomposition, the variables $\dot{u}_{A}$, $H$, and $\tensor{\sigma}{_{AB}}$ correspond to the frame components of the acceleration vector, the trace of the extrinsic curvature, and the trace-free symmetric part of the extrinsic curvature, respectively. Similarly, $n_{AB}$ and $a_{A}$ are related to the Ricci tensor ${}^{(3)}R_{AB}$ of the $t=$ constant spacelike hypersurfaces via the formulas
\begin{align*}
{}^{(3)}S_{AB} &= e_{(A}(a_{B)}) - \frac{1}{3}e_{C}(a^{C})\delta_{AB} - (e_{C} - 2a_{C})n_{D(A}\tensor{\eps}{_{B)}^{CD}} + b_{AB} - \frac{1}{3}\tensor{b}{_C^C}\delta_{AB},\\
{}^{(3)}R &= 4e_{C}(a^{C}) - 6a_{C}a^{C} - \frac{1}{2}\tensor{b}{_C^C}, \\
b_{AB} &= 2\tensor{n}{_A^C}n_{CB} - \tensor{n}{_C^C}n_{AB},
\end{align*}
where ${}^{(3)}R$ and ${}^{(3)}S_{AB}$ and are the trace and trace-free parts of ${}^{(3)}R_{AB}$, respectively. Finally, $\Omega^{A}$ is the local angular velocity of the spatial frame $e_{A}$ with respect to a frame which is Fermi propagated along $e_{0}$. In particular, $\Omega^{A}=0$ when we choose our spatial frame to be Fermi propagated.  \newline \par

Now, by combining the conditions \eqref{eqn:Group_Invariant_Commutator_Simplifications} with the expressions for the frame commutators \eqref{eqn:mixed_commutator_decomposition}-\eqref{eqn:spatial_commutator_decomposition} we find
\begin{gather*}
    \dot{u}_{3} = 0, \quad \sigma_{13} + \Omega_{2} = 0, \quad \sigma_{23} - \Omega_{1} = 0, \\
    n_{22} = 0, \quad  n_{11} = 0, \quad n_{12} + a_{3} = 0, \quad n_{12} - a_{3} = 0.
\end{gather*}
The last two identities together imply
\begin{align*}
    n_{12}=a_{3} = 0.
\end{align*}
To summarise, choosing an orbit aligned, group invariant frame implies
\begin{align}
\label{eqn:U1_Frame_Simplifications}
    \dot{u}_{3}=a_{3}=n_{11}=n_{12}=n_{22}=0, \quad \sigma_{13}+\Omega_{2}=0, \quad \sigma_{23}-\Omega_{1}=0.
\end{align}
Similarly, by combining \eqref{eqn:HO_Killing_Commutator_Simplifications} with \eqref{eqn:mixed_commutator_decomposition}-\eqref{eqn:spatial_commutator_decomposition}, we see that 
\begin{align}
\label{eqn:Polarised_U1_Frame_Simplifications_a}
\sigma_{13}-\Omega_{2} = 0, \quad \sigma_{23}+\Omega_{1} = 0, \quad n_{33} = 0.
\end{align}
when $\xi$ is hypersurface orthogonal. The first two identities, when combined with \eqref{eqn:U1_Frame_Simplifications} then imply
\begin{align}
\label{eqn:Polarised_U1_Frame_Simplifications_b}
    \sigma_{13}=\sigma_{23}=\Omega_{2}=\Omega_{1}=0,
\end{align}
for a polarised spacetime.

\begin{rem}[Berger-Moncrief $U(1)$ Parameterisation]
    In \cites{BergerMoncrief1998a,BergerMoncrief:1998b}, Berger and Moncrief numerically studied $U(1)$-symmetric spacetimes using a parameterisation of the metric derived by Moncrief in \cite{Moncrief:1986}. An explicit example of an orbit aligned, group invariant frame and its corresponding commutator coefficients is given in Appendix \ref{app:BergerMoncrief_FrameExample} for this form of the metric.
\end{rem}

Next, we introduce symmetry-adapted local coordinates such that $\xi = \frac{\del}{\del x^{3}}$ and the local coordinate representation of our frame is given by
\begin{align}
\label{eqn:symmetry_adapted_frame_ansatz}
    e_{0} = \alpha^{-1}(\del_{t} - \beta^{\Omega}\del_{x^{\Omega}}), \quad e_{1} = e_{1}^{1}\del_{x^{1}} + e_{1}^{2}\del_{x^{2}} + e_{1}^{3}\del_{x^{3}}, \quad e_{2} = e_{2}^{1}\del_{x^{1}} + e_{2}^{2}\del_{x^{2}} + e_{2}^{3}\del_{x^{3}}, \quad e_{3} =  e_{3}^{3}\del_{x^{3}},
\end{align}
where $\alpha$ is the lapse, $\beta^{\Omega}$ is the shift vector, and $x^{1}$ and $x^{2}$ are local coordinates on a compact two-manifold. Using the group invariance of the frame, it is straightforward to verify that coordinate components of the frame are constant on the group orbits. Hence, in symmetry adapted local coordinates, all variables depend only on $(t,x^{1},x^{2})$. \newline \par

We can obtain the symmetry reduced frame equations by substituting \eqref{eqn:U1_Frame_Simplifications} and \eqref{eqn:symmetry_adapted_frame_ansatz} into the general 3+1 frame equations \eqref{eqn:frame_evo_3+1}-\eqref{eqn:sigma_evo_3+1} given in Appendix \ref{app:3+1_Frame_Eqns}. Additionally, we assume that the frame components of the stress-energy tensor are constant on the group orbits, 
\begin{align}
\label{eqn:U(1)_Symmetric_Matter_Condition}
    \xi(T_{ab}) = e_{3}(T_{ab}) = 0.
\end{align}
Ultimately, after a series of lengthy calculations, we obtain the following system of symmetry reduced evolution equations 
\begin{align}
\label{eqn:e11_evo}
e_{0}(e^{1}_{1}) &= -(H + \sigma_{11})e^{1}_{1} - (\sigma_{12} + \Omega_{3})e_{2}^{1}, \\
\label{eqn:e12_evo}
e_{0}(e_{1}^{2}) &= -(H + \sigma_{11})e_{1}^{2} - (\sigma_{12} + \Omega_{3})e_{2}^{2}, \\
\label{eqn:e13_evo}
e_{0}(e_{1}^{3}) &= -(H + \sigma_{11})e_{1}^{3} - (\sigma_{12} + \Omega_{3})e^{3}_{2} -2\sigma_{13}e_{3}^{3}, \\
\label{eqn:e21_evo}
e_{0}(e_{2}^{1}) &= -(\sigma_{12} - \Omega_{3})e_{1}^{1} - (H + \sigma_{22})e_{2}^{1}, \\
\label{eqn:e22_evo}
e_{0}(e_{2}^{2}) &= -(\sigma_{12} - \Omega_{3})e_{1}^{2} - (H + \sigma_{22})e_{2}^{2}, \\
\label{eqn:e23_evo}
e_{0}(e_{2}^{3}) &= -(\sigma_{12} - \Omega_{3})e_{1}^{3} - (H+\sigma_{22})e_{2}^{3} - 2\sigma_{23}e_{3}^{3}, \\
\label{eqn:e33_evo}
e_{0}(e_{3}^{3}) &= -(H + \sigma_{33})e_{3}^{3}, \\
\label{eqn:a1_evo}
e_{0}(a_{1}) &= -e_{1}(H) + \frac{1}{2}e_{1}(\sigma_{11}) + \frac{1}{2}e_{2}(\sigma_{12} + \Omega_{3}) - H(\dot{u}_{1}+a_{1}) + \sigma_{11}(\frac{1}{2}\dot{u}_{1}-a_{1}) \nonumber \\ 
&+ (\sigma_{12} + \Omega_{3})(\frac{1}{2}\dot{u}_{2}-a_{2}), \\
\label{eqn:a2_evo}
e_{0}(a_{2}) &=  -e_{2}(H) +\frac{1}{2}e_{2}(\sigma_{22}) + \frac{1}{2}e_{1}(\sigma_{12}-\Omega_{3}) -H(\dot{u}_{2} + a_{2}) + \sigma_{22}(\frac{1}{2}\dot{u}_{2} - a_{2}) \nonumber \\
&+ (\sigma_{12}-\Omega_{3})(\frac{1}{2}\dot{u}_{1}-a_{1}), \\
\label{eqn:n13_evo}
e_{0}(n_{13}) &= \frac{1}{2}e_{1}(\Omega_{3}-\sigma_{12}) + \frac{1}{2}e_{2}(\sigma_{11} - \sigma_{33}) + n_{13}(\sigma_{11} + \sigma_{33} - H) - n_{23}(\Omega_{3} - \sigma_{12}) \nonumber \\
&+ \frac{1}{2}\dot{u}_{1}(\Omega_{3} - \sigma_{12}) + \frac{1}{2}\dot{u}_{2}(\sigma_{11} - \sigma_{33}), \\
\label{eqn:n23_evo}
e_{0}(n_{23}) &= \frac{1}{2}e_{1}(\sigma_{33} - \sigma_{22}) + \frac{1}{2}e_{2}(\sigma_{12} + \Omega_{3}) + n_{23}(\sigma_{33} + \sigma_{22} - H) + n_{13}(\sigma_{12}+\Omega_{3}) \nonumber \\
&+ \frac{1}{2}\dot{u}_{1}(\sigma_{33}-\sigma_{22}) + \frac{1}{2}\dot{u}_{2}(\sigma_{12} + \Omega_{3}), \\
\label{eqn:n33_evo}
e_{0}(n_{33}) &= -2e_{1}(\sigma_{23}) + 2e_{2}(\sigma_{13}) +n_{33}(2\sigma_{33}-H) + 4n_{13}\sigma_{13} + 4n_{23}\sigma_{23} \nonumber \\
&- 2\dot{u}_{1}\sigma_{23} + 2\dot{u}_{2}\sigma_{13}, \\
\label{eqn:H_evo}
e_{0}(H) &= \frac{1}{3}e_{\mathcal{a}}(\dot{u}^{\mathcal{a}})  - H^{2} + \frac{1}{3}\dot{u}_{\mathcal{a}}(\dot{u}^{\mathcal{a}} - 2a^{\mathcal{a}}) - \frac{1}{3}\sigma_{AB}\sigma^{AB} - \frac{1}{6}(T_{00} + \tensor{T}{_A^A}), \\
\label{eqn:sigma11_evo}
e_{0}(\sigma_{11}) &= \frac{2}{3}e_{1}(\dot{u}_{1}-a_{1}) + \frac{1}{3}e_{2}(a_{2}-\dot{u}_{2}) + e_{2}(n_{13}) - 3H\sigma_{11} + \frac{2}{3}(\dot{u}_{1}^{2} + a_{1}\dot{u}_{1}) \nonumber \\
&-\frac{1}{3}(\dot{u}_{2}^{2}  + a_{2}\dot{u}_{2}) + n_{13}(\dot{u}_{2} - 2a_{2}) - 2(\sigma_{12}\Omega_{3} + \sigma_{13}^{2}) \nonumber \\
&+ \frac{1}{3}(4n_{23}^{2} + n_{33}^{2} - 2n_{13}^{2}) + T_{11} - \frac{1}{3}\tensor{T}{_A^A}, \\
\label{eqn:sigma12_evo}
e_{0}(\sigma_{12}) &= \frac{1}{2}e_{1}(\dot{u}_{2}-a_{2}) - \frac{1}{2}e_{1}(n_{13}) + \frac{1}{2}e_{2}(\dot{u}_{1} - a_{1}) + \frac{1}{2}e_{2}(n_{23}) -3H\sigma_{12} \nonumber \\
&+ \frac{1}{2}n_{23}(\dot{u}_{2} - 2a_{2}) + (\sigma_{11}-\sigma_{22})\Omega_{3} -2\sigma_{13}\sigma_{23} - \frac{1}{2}n_{13}(\dot{u}_{1} - 2a_{1}) \nonumber \\
&- 2n_{13}n_{23} +\frac{1}{2}\dot{u}_{1} (\dot{u}_{2} + a_{2}) + \frac{1}{2}\dot{u}_{2}(\dot{u}_{1} + a_{1}) + T_{12} , \\
\label{eqn:sigma13_evo}
e_{0}(\sigma_{13}) &= \frac{1}{2}e_{2}(n_{33}) -3H\sigma_{13} + \frac{1}{2}n_{33}(\dot{u}_{2} -2a_{2}) + \sigma_{13}(\sigma_{11} - \sigma_{33}) + \sigma_{23}( \sigma_{12} - \Omega_{3}) \nonumber \\
&- n_{13}n_{33} + T_{13} , \\
\label{eqn:sigma22_evo}
e_{0}(\sigma_{22}) &= \frac{1}{3}e_{1}(a_{1}-\dot{u}_{1}) - e_{1}(n_{23}) + \frac{2}{3}e_{2}(\dot{u}_{2}-a_{2}) -3H\sigma_{22} - \frac{1}{3}(\dot{u}_{1}^{2} + a_{1}\dot{u}_{1}) \nonumber \\
&+\frac{2}{3}(\dot{u}_{2}^{2} + a_{2}\dot{u}_{2}) - n_{23}(\dot{u}_{1} - 2a_{1}) + 2(\sigma_{12}\Omega_{3} - \sigma_{23}^{2}) + \frac{1}{3}(4n_{13}^{2}  + n_{33}^{2} - 2n_{23}^{2}) \nonumber \\
&+ T_{22} - \frac{1}{3}\tensor{T}{_A^A}, \\
\label{eqn:sigma23_evo}
e_{0}(\sigma_{23}) &= -\frac{1}{2}e_{1}(n_{33}) -3H\sigma_{23} + \frac{1}{2}n_{33}(2a_{1} -\dot{u}_{1}) + \sigma_{23}(\sigma_{22} - \sigma_{33}) + \sigma_{13}(\Omega_{3} + \sigma_{12}) \nonumber \\
&- n_{23}n_{33} + T_{23}, \\
\label{eqn:sigma33_evo}
e_{0}(\sigma_{33}) &= \frac{1}{3}e_{1}(a_{1}-\dot{u}_{1}) + e_{1}(n_{23}) + \frac{1}{3}e_{2}(a_{2}-\dot{u}_{2}) - e_{2}(n_{13}) -3H\sigma_{33} - \frac{1}{3}(\dot{u}_{1}^{2} + a_{1}\dot{u}_{1}) \nonumber \\
&- \frac{1}{3}(\dot{u}_{2}^{2} + a_{2}\dot{u}_{2}) + n_{13}(2a_{2} - \dot{u}_{2}) + n_{23}(\dot{u}_{1} - 2a_{1}) + 2(\sigma_{23}^{2} + \sigma_{13}^{2}) - \frac{2}{3}(n_{13}^{2} + n_{23}^{2} + n_{33}^{2}) \nonumber \\
&+ T_{33} - \frac{1}{3}\tensor{T}{_A^A}.
\end{align}
The variables $\alpha$, $\beta^{\Omega}$, $\dot{u}_{A}$, and $\Omega_{A}$ are gauge variables and do not have evolution equations. Similarly, the constraint equations \eqref{eqn:C1_Constraint_3+1}-\eqref{eqn:Hamiltonian_Constraint_3+1} reduce to
\begin{align}
\label{eqn:C1_Constraint_112}
0 =\tensor{(\mathcal{C}_{1})}{_1^1_2} &:= 2e_{[1}(e_{2]}^{1}) - 2a_{[1}e_{2]}^{1} - (n^{13}e_{1}^{1} + n^{23}e_{2}^{1}) , \\
\label{eqn:C1_Constraint_122}
0 =\tensor{(\mathcal{C}_{1})}{_1^2_2} &:= 2e_{[1}(e_{2]}^{2}) - 2a_{[1}e_{2]}^{2} - (n^{13}e_{1}^{2} + n^{23}e_{2}^{2}) , \\
\label{eqn:C1_Constraint_132}
0 =\tensor{(\mathcal{C}_{1})}{_1^3_2} &:= 2e_{[1}(e_{2]}^{3}) - 2a_{[1}e_{2]}^{3} - (n^{13}e_{1}^{3} + n^{23}e_{2}^{3} + n^{33}e_{3}^{3}) , \\
\label{eqn:C1_Constraint_133}
0 = \tensor{(\mathcal{C}_{1})}{_1^3_3} &:= e_{1}(e_{3}^{3}) - a_{1}e_{3}^{3} + n^{23}e_{3}^{3}, \\
\label{eqn:C1_Constraint_233}
0 = \tensor{(\mathcal{C}_{1})}{_2^3_3} &:= e_{2}(e_{3}^{3}) - a_{2}e_{3}^{3} - n^{13}e_{3}^{3}, \\
\label{eqn:C2_Constraint_12}
0 = (\mathcal{C}_{2})_{12} &:= 2e_{[1}(\dot{u}_{2]}) + 2a_{[1}\dot{u}_{2]} + n^{13}\dot{u}_{1} + n^{23}\dot{u}_{2}, \\
\label{eqn:C3_Constraint_1}
0 = (\mathcal{C}_{3})_{1} &:= \dot{u}_{1} - \alpha^{-1}e_{1}(\alpha), \\
\label{eqn:C3_Constraint_2}
0 = (\mathcal{C}_{3})_{2} &:= \dot{u}_{2} - \alpha^{-1}e_{2}(\alpha), \\
\label{eqn:MomentumConstraint_1}
0 = (\mathcal{C}_{M})_{1} &:= e_{1}(\sigma_{11}) + e_{2}(\sigma_{12}) - 2e_{1}(H) - 3\sigma_{11}a_{1} - 3\sigma_{12}a_{2} + n_{23}(\sigma_{22} - \sigma_{33}) \nonumber \\
&+ n_{13}\sigma_{12} + n_{33}\sigma_{23} - T_{01}, \\
\label{eqn:MomentumConstraint_2}
0 = (\mathcal{C}_{M})_{2} &:= e_{1}(\sigma_{12}) + e_{2}(\sigma_{22}) - 2e_{2}(H) - 3\sigma_{12}a_{1} - 3\sigma_{22}a_{2} + n_{13}(\sigma_{33} - \sigma_{11}) \nonumber \\
&- n_{23}\sigma_{12} - n_{33}\sigma_{13} - T_{02}, \\
\label{eqn:MomentumConstraint_3}
0 = (\mathcal{C}_{M})_{3} &:= e_{1}(\sigma_{13}) + e_{2}(\sigma_{23}) - 3\sigma_{13}a_{1} - 3\sigma_{23}a_{2} - n_{13}\sigma_{23} + n_{23}\sigma_{13} - T_{03}, \\
\label{eqn:HamiltonianConstraint_U1}
0 = (\mathcal{C}_{H}) &:= 4e_{1}(a_{1}) + 4e_{2}(a_{2}) + 6H^{2} -6(a_{1}^{2} + a_{2}^{2}) - (2n_{13}^{2} + 2n_{23}^{2} + \frac{1}{2}n_{33}^{2}) \nonumber \\
&- \sigma_{AB}\sigma^{AB} - 2T_{00},
\end{align}
where $(\mathcal{C}_{M})$ and $(\mathcal{C})_{H}$ are the momentum and Hamiltonian constraints, respectively. \newline \par 

Since all of the commutator variables are independent of $x^{3}$, the $e_{1}$ and $e_{2}$ frame derivatives reduce to
\begin{align*}
    e_{1}(f) = e_{1}^{1}\del_{x^{1}}(f) + e_{1}^{2}\del_{x^{2}}(f), \quad e_{2}(f) = e_{2}^{1}\del_{x^{1}}(f) + e_{2}^{2}\del_{x^{2}}(f).
\end{align*}
In particular, this means the evolution equations \eqref{eqn:e13_evo}, \eqref{eqn:e23_evo}, and \eqref{eqn:e33_evo} for the frame components $e_{1}^{3}$, $e_{2}^{3}$, and $e_{3}^{3}$, respectively, decouple from the remaining evolution equations. 

\begin{rem}[Polarisation Condition]
For a polarised $U(1)$-symmetric spacetime we have that
\begin{align*}
    n_{33} = \sigma_{23} = \sigma_{13} = 0.
\end{align*}
From inspection of \eqref{eqn:e11_evo}-\eqref{eqn:sigma33_evo}, we see that this is an invariant set preserved by the evolution equations provided that the stress-energy tensor satisfies
\begin{align*}
    T_{13} = T_{23} = 0.
\end{align*}
Additionally, the constraint \eqref{eqn:MomentumConstraint_3} reduces to 
\begin{align*}
    T_{03} = 0.
\end{align*}
\end{rem}

\subsection{Topological Constraints}
Since all the variables in the system \eqref{eqn:e11_evo}-\eqref{eqn:HamiltonianConstraint_U1} are constant along the group orbits and independent of $x^{3}$ in symmetry adapted coordinates, the $U(1)$-symmetric field equations reduce to a 2+1 dimensional problem. In particular, the spacetime $M$ can be interpreted as a principal $U(1)$ bundle\footnote{See for example \cite{Wald:1984}*{\S 13.2}.} over a 2+1 dimensional Lorentzian base space $S$ with bundle map\footnote{Specifically, $\pi$ maps each integral curve of the Killing vector $\xi$ to a point in $S$.}
\begin{align*}
    \pi: M \rightarrow S.
\end{align*}
Notice, however, that many four-dimensional spacetimes can reduce to the same base space. For example, spacetimes with spatial hypersurfaces diffeomorphic to $\mathbb{S}^{2}\times \mathbb{S}^{1}$ and $\mathbb{S}^{3}$ both reduce to $2+1$ spacetimes with hypersurfaces diffeomorphic to $\mathbb{S}^{2}$. Moreover, since the evolution equations and constraints \eqref{eqn:e11_evo}-\eqref{eqn:HamiltonianConstraint_U1} are independent of $x^{3}$, there is nothing distinguishing these cases at the level of the equations. \newline \par

It turns out that the topology of the principal bundle is connected to the curvature of the base space through a constraint on the initial data, which is preserved by the evolution equations. It is exactly this constraint which distinguishes between $U(1)$ reductions of trivial and non-trivial bundles. Indeed, constraints of this form have been discussed extensively by Choquet-Bruhat and Moncrief in their work on $U(1)$-symmetric spacetimes, see for example \cites{Moncrief:1986,moncrief:1990,ChoquetBruhat:2003,ChoquetBruhatMoncrief:2001} and \cite{ChoquetBruhat:2009}*{Chapter 16 and Appendix VII}. In the remainder of this section, we will derive this topological constraint and show that it is propagated by the field equations. \newline \par 

To begin, following \cites{Miller:1980}, we define a connection one-form\footnote{Note that this agrees with the form $\eta_{A}$ which is ultimately obtained by Geroch in \cite{Geroch:1971}.} $\eta$ on $M$
\begin{align*}
    \eta = \frac{1}{\lambda}\xi^{\flat} = \frac{1}{\lambda^{\frac{1}{2}}}e_{3}^{\flat} = \frac{1}{\lambda^{\frac{1}{2}}}\theta^{3}_{\mu},
\end{align*}
where $\flat$ denotes the musical isomorphism \cite{Gourgoulhon:2012}*{\S 2.3.3}. In frame components, we then have
\begin{align}
\label{eqn:connection_form_frame_components}
    \eta_{a} = \frac{1}{\lambda^{\frac{1}{2}}}\delta^{3}_{a}.
\end{align}
The curvature form $F$ is obtained by taking the exterior derivative of $\eta$
\begin{align*}
    F := d\eta.
\end{align*}
In frame components, this becomes
\begin{align}
\label{eqn:curvature_form_frame_components}
    F_{ab} &= 2\nabla_{[a}(\lambda^{\frac{-1}{2}}\delta^{3}_{b]}) \nonumber \\
    &= e_{a}(\lambda^{\frac{-1}{2}}\delta^{3}_{b}) - \lambda^{\frac{-1}{2}}\tensor{\omega}{_a^3_b} - e_{b}(\lambda^{\frac{-1}{2}}\delta^{3}_{a}) + \lambda^{\frac{-1}{2}}\tensor{\omega}{_b^3_a} \nonumber \\
    &= \lambda^{\frac{-3}{2}}\delta^{3}_{[a}e_{b]}(\lambda) + \lambda^{\frac{-1}{2}}\tensor{c}{_b^3_a}.
\end{align}
From this identity, it is straightforward to verify that the curvature form is constant on the group orbits. Next, by expressing the Killing equation \eqref{eqn:Killing_eqn} in terms of the frame, we obtain
\begin{align*}
    0 &= \nabla_{a}(\lambda^{\frac{1}{2}}\delta^{3}_{b}) + \nabla_{a}(\lambda^{\frac{1}{2}}\delta^{3}_{b}) \nonumber \\
    &= \lambda^{\frac{-1}{2}}\delta^{3}_{(a}e_{b)}(\lambda) - \lambda^{\frac{1}{2}}(\tensor{\omega}{_a^3_b} + \tensor{\omega}{_b^3_a}).
\end{align*}
Setting $a=3$ in the above then yields
\begin{align}
\label{eqn:Killing_identity_curvature_form_derivation}
    \frac{1}{2}\lambda^{\frac{-1}{2}}e_{b}(\lambda) - \lambda^{\frac{1}{2}}(\tensor{\omega}{_3^3_b}) = 0,
\end{align}
where we have used the antisymmetry of the connection coefficients
\begin{align*}
    \omega_{abc} = -\omega_{acb}.
\end{align*}
Using \eqref{eqn:curvature_form_frame_components} we then find that $F_{3b}$ is given by
\begin{align*}
    F_{3b} = \frac{1}{2}\lambda^{\frac{-3}{2}}e_{b}(\lambda) - \lambda^{\frac{-1}{2}}\tensor{\omega}{_3^3_b} = 0
\end{align*}
where the last equality follows from \eqref{eqn:Killing_identity_curvature_form_derivation}. Hence the curvature form has the following non-trivial components
\begin{align*}
    F_{01} &= -\lambda^{\frac{-1}{2}}\tensor{c}{_0^3_1} = \lambda^{\frac{-1}{2}}(\sigma_{13} - \Omega_{2}) = 2\lambda^{\frac{-1}{2}}\sigma_{13}, \\
    F_{02} &= -\lambda^{\frac{-1}{2}}\tensor{c}{_0^3_2} = \lambda^{\frac{-1}{2}}(\sigma_{23} + \Omega_{1}) = 2\lambda^{\frac{-1}{2}}\sigma_{23}, \\
    F_{12} &= -\lambda^{\frac{-1}{2}}\tensor{c}{_1^3_2} = -\lambda^{\frac{-1}{2}}n_{33},
\end{align*}
from which it is clear that $F$ descends to a two-form on $S$. In symmetry adapted local coordinates \eqref{eqn:symmetry_adapted_frame_ansatz}, we then have
\begin{align*}
    F_{01} &= 2e_{3}^{3}\sigma_{13}, \\
    F_{02} &= 2e_{3}^{3}\sigma_{23}, \\
    F_{12} &= -e_{3}^{3}n_{33}.
\end{align*}
Following \cite{ChoquetBruhat:2009}*{Chapter 16, \S 4.2}, the curvature form must satisfy
\begin{align}
    \int_{\Sigma_{t}}F = \int_{\Sigma_{t}}F_{12} \sqrt{\gamma} \;dx^{1}\;dx^{2} = 2\pi \ell
\end{align}
where $\Sigma_{t}$ is a $t=$ constant hypersurface of the 2+1 dimensional base space $S$, $\ell$ is a topological invariant called the \textit{Chern number} of the principal bundle, and $\gamma$ is the determinant of the metric on the two-dimensional spatial surfaces. In particular $\ell=0$ for trivial bundles\footnote{That is bundles of the form $M = \mathbb{R} \times \Sigma \times \mathbb{S}^{1}$, where $\Sigma$ is a two dimensional space and we have identified $U(1)$ and $\mathbb{S}^{1}$.}, for example $\mathbb{T}^{3}$ or $\mathbb{S}^{2}\times \mathbb{S}^{1}$, while for non-trivial bundles we have $\ell>0$. \newline \par

In frame components, the integral becomes
\begin{align}
\label{eqn:Topological_Integral_Constraint}
     \int_{\Sigma_{t}} -e^{3}_{3}n_{33} \frac{1}{e_{1}^{1}e_{2}^{2} - e_{1}^{2}e_{2}^{1}} \;dx^{1}\;dx^{2} = 2\pi \ell.
\end{align}
where we have used the fact $\frac{1}{e_{1}^{1}e_{2}^{2} - e_{1}^{2}e_{2}^{1}} = \sqrt{\gamma}$. 

\begin{rem}[Topology of Polarised Spacetimes]
    Recall from \eqref{eqn:Polarised_U1_Frame_Simplifications_a}-\eqref{eqn:Polarised_U1_Frame_Simplifications_b}, that a polarised $U(1)$ spacetime must have $n_{33} = 0$. In particular, this means that \eqref{eqn:Topological_Integral_Constraint} is always zero and, hence, polarised spacetimes must have vanishing Chern number $\ell=0$. That is to say, only trivial principle bundles can be polarised. \newline \par
\end{rem}

\begin{rem}[Moncrief Form of Topological Constraint]
    From the frame decomposition \eqref{eqn:MoncriefBerger_U1_FrameDecomposition} of the Berger-Moncrief parameterisation of a $U(1)$-symmetric spacetime \eqref{eqn:BergerMoncrief_U1_Metric} in Appendix \ref{app:BergerMoncrief_FrameExample}, we see that
    \begin{align*}
        e^{3}_{3}n_{33} \frac{1}{e_{1}^{1}e_{2}^{2} - e_{1}^{2}e_{2}^{1}} = \del_{v}\beta_{1} - \del_{u}\beta_{2}.
    \end{align*}
    This is exactly the conjugate momentum of the twist potential (for a trivial bundle) which Moncrief uses to obtain the same topological constraint on the initial data, see \cite{Moncrief:1986}*{Pages 125-127}. In particular, this verifies our topological constraint \eqref{eqn:Topological_Integral_Constraint} coincides with Moncrief's form of the constraint, at least for trivial bundles.
\end{rem}

We want to show that this integral is conserved by the frame evolution equations. To prove this, we will use the following lemma
\begin{lem}
    The following identities hold
    \begin{align}
    \label{eqn:Integral_lemma_ident1}
        e_{2}(e_{3}^{3}\alpha \sigma_{13}) - e_{1}(e_{3}^{3}\alpha \sigma_{23}) + e_{3}^{3}\alpha \tensor{c}{_1^{\mathcal{c}}_2}\sigma_{\mathcal{c}3} &= -e_{1}(\sigma_{23})\alpha e_{3}^{3} + e_{2}(\sigma_{13})\alpha e_{3}^{3} + 2\alpha e_{3}^{3}n_{13}\sigma_{13} \nonumber \\
        &+ 2\alpha e_{3}^{3}n_{23}\sigma_{23} + \alpha e_{3}^{3} \dot{u}_{2} \sigma_{13} - \alpha \dot{u}_{1}e_{3}^{3}\sigma_{23}, \\
        \label{eqn:Integral_lemma_ident2}
        \del_{t}\Big((e_{1}^{1}e_{2}^{2} - e_{1}^{2}e_{2}^{1})^{-1}\Big) &= \frac{\alpha}{e_{1}^{2}e_{2}^{2} - e_{1}^{2}e_{2}^{1}}(2H - \sigma_{33}), \\ 
        \label{eqn:Integral_lemma_ident3}
        \del_{t}\left(\frac{e_{3}^{3}n_{33}}{e_{1}^{1}e_{2}^{2}-e_{1}^{2}e_{2}^{1}}\right) &= 2\sqrt{\gamma}\tilde{\nabla}_{\mathcal{a}}\big(e_{3}^{3}\alpha\sigma_{\mathcal{b}3}\eps^{\mathcal{b}\mathcal{a}}\big) 
    \end{align}
    where $\tilde{\nabla}_{\mathcal{a}}$ is the spatial covariant derivative on $S$,$\eps^{\mathcal{a}\mathcal{b}} = \eps^{0\mathcal{a}\mathcal{b}3}$ is the two dimensional volume form with respect to our orthonormal frame
    \begin{align*}
        \eps^{12} = -\eps^{21} = 1,
    \end{align*}
    and we have assumed the shift vector is zero, $\beta^{\Omega}=0$.
\end{lem}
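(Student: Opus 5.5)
The three identities are local computations, and I will prove them in order, each feeding the next.

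\textbf{First identity.} Expand the left-hand side of \eqref{eqn:Integral_lemma_ident1} with the Leibniz rule. This gives the terms $e_{3}^{3}\alpha\,(e_{2}(\sigma_{13})-e_{1}(\sigma_{23}))$, which already appear on the right. It also gives terms in which $e_{1}$ or $e_{2}$ hits $\alpha$ or $e_{3}^{3}$. For the lapse, use the constraints \eqref{eqn:C3_Constraint_1}--\eqref{eqn:C3_Constraint_2} to replace $e_{\mathcal{a}}(\alpha)=\alpha\dot u_{\mathcal{a}}$; this produces $\alpha e_{3}^{3}(\dot u_{2}\sigma_{13}-\dot u_{1}\sigma_{23})$. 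For $e_{3}^{3}$, use \eqref{eqn:C1_Constraint_133}--\eqref{eqn:C1_Constraint_233}, i.e. $e_{1}(e_{3}^{3})=(a_{1}-n_{23})e_{3}^{3}$ and $e_{2}(e_{3}^{3})=(a_{2}+n_{13})e_{3}^{3}$. Finally, read off from \eqref{eqn:spatial_commutator_decomposition}, with the simplifications \eqref{eqn:U1_Frame_Simplifications}, that $\tensor{c}{_1^1_2}=-n_{23}-a_{2}$ and $\tensor{c}{_1^2_2}=n_{13}+a_{1}$ (to be rechecked against the sign conventions). The $a_{\mathcal{a}}$ terms should then cancel pairwise, and the $n$ terms combine to $2n_{13}\sigma_{13}+2n_{23}\sigma_{23}$. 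This is bookkeeping only.

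\textbf{Second identity.} With $\beta^{\Omega}=0$ we have $\partial_t=\alpha e_0$. Apply \eqref{eqn:e11_evo}, \eqref{eqn:e12_evo}, \eqref{eqn:e21_evo}, \eqref{eqn:e22_evo} to $D=e_{1}^{1}e_{2}^{2}-e_{1}^{2}e_{2}^{1}$. The $\Omega_{3}$ and $\sigma_{12}$ contributions cancel by antisymmetry of the determinant, leaving
\[
e_0(D)=-(2H+\sigma_{11}+\sigma_{22})D=-(2H-\sigma_{33})D
\]
by tracelessness. Hence $\partial_t(D^{-1})=\alpha(2H-\sigma_{33})D^{-1}$. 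The denominator in the statement is evidently a typo for $D$.

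\textbf{Third identity.} Compute $\partial_t(e_3^3 n_{33}D^{-1})$ with the product rule, using \eqref{eqn:e33_evo}, \eqref{eqn:n33_evo} and the second identity:
\[
\alpha D^{-1}\Big[-(H+\sigma_{33})e_3^3n_{33}+e_3^3e_0(n_{33})+(2H-\sigma_{33})e_3^3n_{33}\Big].
\]
Inserting \eqref{eqn:n33_evo}, the $H n_{33}$ terms cancel, and so do the $\sigma_{33}n_{33}$ terms ($-1-1+2=0$). What remains is $2\alpha e_3^3D^{-1}$ times
\[
e_2(\sigma_{13})-e_1(\sigma_{23})+2n_{13}\sigma_{13}+2n_{23}\sigma_{23}+\dot u_2\sigma_{13}-\dot u_1\sigma_{23},
\]
which is exactly $2D^{-1}$ times the right-hand side of \eqref{eqn:Integral_lemma_ident1}. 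By the first identity it therefore equals $2\sqrt{\gamma}\,[e_2(V_1)-e_1(V_2)+\tensor{c}{_1^{\mathcal{c}}_2}V_{\mathcal{c}}]$ with $V_{\mathcal{c}}=e_3^3\alpha\sigma_{\mathcal{c}3}$.

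It remains to identify this bracket with $\tilde\nabla_{\mathcal{a}}(V_{\mathcal{b}}\eps^{\mathcal{b}\mathcal{a}})$. This is the standard frame formula for the divergence of $W^{\mathcal{a}}=\eps^{\mathcal{b}\mathcal{a}}V_{\mathcal{b}}$ on the 2-surface:
\[
\tilde\nabla_{\mathcal{a}}W^{\mathcal{a}}=e_{\mathcal{a}}(W^{\mathcal{a}})+\tensor{\tilde\omega}{_{\mathcal{a}}^{\mathcal{a}}_{\mathcal{b}}}W^{\mathcal{b}}.
\]
Here the 2D connection coefficients come from \eqref{eqn:commutator_connection_relations} restricted to the indices $1,2$, giving $\tensor{\tilde\omega}{_{\mathcal{a}}^{\mathcal{a}}_{\mathcal{b}}}=-\tensor{c}{_{\mathcal{b}}^{\mathcal{a}}_{\mathcal{a}}}$. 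Since $W^1=-V_2$ and $W^2=V_1$, this reproduces the bracket.

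I expect this last identification to be the main obstacle. The difficulty is sign and convention consistency: the orientation of $\eps$, the sign of the trace $\tensor{c}{_{\mathcal{b}}^{\mathcal{a}}_{\mathcal{a}}}$, and whether the commutators of $e_1,e_2$ projected to $S$ agree with the $c$'s, given that $e_{\mathcal{a}}$ have $\partial_{x^3}$ components that drop out on $x^3$-independent functions. I would check these directly on the coordinate expression $\tilde\nabla_{\mathcal{a}}W^{\mathcal{a}}=\gamma^{-1/2}\partial_{x^{\Omega}}(\sqrt\gamma\,W^{\mathcal{a}}e^{\Omega}_{\mathcal{a}})$, using constraints \eqref{eqn:C1_Constraint_112}--\eqref{eqn:C1_Constraint_122}.
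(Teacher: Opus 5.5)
Your proposal follows the paper's proof essentially step for step. The first identity uses the Leibniz rule together with \eqref{eqn:C1_Constraint_133}--\eqref{eqn:C1_Constraint_233} and \eqref{eqn:C3_Constraint_1}--\eqref{eqn:C3_Constraint_2}. The second uses the frame evolution equations and tracelessness of $\sigma_{AB}$; you are right that the denominator $e_1^2e_2^2-e_1^2e_2^1$ is a typo for $D$. The third uses the product rule with \eqref{eqn:e33_evo}, \eqref{eqn:n33_evo}, followed by the frame divergence formula.

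There is one concrete error to fix: your commutator coefficients in the first identity are wrong. From \eqref{eqn:spatial_commutator_decomposition},
\[
\tensor{c}{_1^C_2} = \eps_{12D}n^{CD} + a_1\delta^C_2 - a_2\delta^C_1 = n^{C3} + a_1\delta^C_2 - a_2\delta^C_1 .
\]
Hence $\tensor{c}{_1^1_2} = n_{13}-a_2$ and $\tensor{c}{_1^2_2} = n_{23}+a_1$, as in the paper, not $-n_{23}-a_2$ and $n_{13}+a_1$. You can cross-check this against the $\Omega=1$ component of \eqref{eqn:C1_Constraint_112}, which reads $[e_1,e_2]^1=(n^{13}-a_2)e_1^1+(n^{23}+a_1)e_2^1$.

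With your values the $a$-terms still cancel, but the $n$-terms come out as $(n_{13}-n_{23})\sigma_{13}+(n_{13}+n_{23})\sigma_{23}$, not $2n_{13}\sigma_{13}+2n_{23}\sigma_{23}$. So the combination you assert does not follow from what you wrote. With the correct values, each of $n_{13}\sigma_{13}$ and $n_{23}\sigma_{23}$ appears once from $e_{\mathcal{a}}(e_3^3)$ and once from the commutator term, giving the stated identity. Since the third identity rests on the first, this correction is needed for the whole chain.

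The step you flagged as the main obstacle is not a real difficulty. The $\del_{x^3}$ components of $e_{\mathcal{a}}$ annihilate $x^3$-independent functions, and $\pi_*e_3=0$. Therefore
\[
[\pi_*e_1,\pi_*e_2]=\tensor{c}{_1^1_2}\,\pi_*e_1+\tensor{c}{_1^2_2}\,\pi_*e_2 ,
\]
so the $2$D connection coefficients on $\Sigma_t$ are built from the same $c$'s. Your formula $\tensor{\tilde\omega}{_{\mathcal{a}}^{\mathcal{a}}_{\mathcal{b}}}W^{\mathcal{b}}=-\tensor{c}{_{\mathcal{b}}^{\mathcal{a}}_{\mathcal{a}}}W^{\mathcal{b}}=\tensor{c}{_1^{\mathcal{c}}_2}V_{\mathcal{c}}$ is then exactly what the paper uses, via $\tensor{\omega}{_a^1_1}=\tensor{\omega}{_a^2_2}=0$.
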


\begin{proof}$\;$
\subsubsection*{Identity 1.}
    First, let us establish \eqref{eqn:Integral_lemma_ident1}. Expanding the left hand side and replacing the derivatives of $e_{3}^{3}$ and $\dot{u}_{\mathcal{a}}$ using \eqref{eqn:C1_Constraint_133}-\eqref{eqn:C1_Constraint_233} and \eqref{eqn:C3_Constraint_1}-\eqref{eqn:C3_Constraint_2} yields 
    \begin{align} 
    \label{eqn:int_ident_1a}
        e_{2}(e_{3}^{3}\alpha \sigma_{13}) - e_{1}(e_{3}^{3}\alpha \sigma_{23}) + e_{3}^{3}\alpha \tensor{c}{_1^{\mathcal{c}}_2}\sigma_{\mathcal{c}3} &= \big[a_{2}e_{3}^{3} + n_{13}\big]\alpha\sigma_{13} + \alpha \dot{u}_{2}e_{3}^{3}\sigma_{23} + e_{2}(\sigma_{13})\alpha e_{3}^{3} \nonumber \\
        &- \big[a_{1}e_{3}^{3} -n_{23}e_{3}^{3}\big]\alpha\sigma_{23} - \alpha \dot{u}_{1}\sigma_{23} - e_{1}(\sigma_{23})\alpha e_{3}^{3} \nonumber \\
        &+ e_{3}^{3}\alpha\tensor{c}{_1^1_2}\sigma_{13} + e_{3}^{3}\alpha\tensor{c}{_1^2_2}\sigma_{23}.
    \end{align}
    Next, using the expressions for the spatial frame commutator \eqref{eqn:mixed_commutator_decomposition}, we find
    \begin{align*}
        \tensor{c}{_1^1_2} = n_{13} - a_{2}, \quad \tensor{c}{_1^2_2} = n_{23} + a_{1}.
    \end{align*}
    Substituting the above into \eqref{eqn:int_ident_1a} and collecting terms then leads to the identity \eqref{eqn:Integral_lemma_ident1}.

\subsubsection*{Identity 2}
Expanding the left hand side of \eqref{eqn:Integral_lemma_ident2} and replacing the time derivatives using \eqref{eqn:e11_evo}-\eqref{eqn:e12_evo} and \eqref{eqn:e21_evo}-\eqref{eqn:e22_evo} yields the identity
\begin{align*}
    \del_{t}\Big((e_{1}^{1}e_{2}^{2} - e_{1}^{2}e_{2}^{1})^{-1}\Big) &= \frac{-1}{(e_{1}^{1}e_{2}^{2} - e_{1}^{2}e_{2}^{1})^{2}}\Big(\del_{t}(e_{1}^{1})e_{2}^{2} + \del_{t}(e_{2}^{2})e_{1}^{2} - e_{0}(e_{1}^{2})e_{2}^{1} - \del_{t}(e_{2}^{1})e_{1}^{2}\Big) \nonumber \\
    &= \frac{-\alpha}{(e_{1}^{1}e_{2}^{2} - e_{1}^{2}e_{2}^{1})^{2}}\Big(-(H+\sigma_{11})e_{1}^{1}e_{2}^{2} - (\sigma_{12}+\Omega_{3})e_{2}^{1}e_{2}^{2} \nonumber \\
    &-(\sigma_{12}-\Omega_{3})e_{1}^{2}e_{1}^{1} - (H+\sigma_{22})e_{2}^{2}e_{1}^{1} + (H + \sigma_{11})e_{1}^{2}e_{2}^{1} \nonumber \\
    &+ (\sigma_{12} + \Omega_{3})e_{2}^{2}e_{2}^{1} + (\sigma_{12} - \Omega_{3})e_{1}^{1}e_{1}^{2} + (H + \sigma_{22})e_{2}^{1}e_{1}^{2}\Big) \nonumber \\
    &= \frac{\alpha(2H +\sigma_{11} + \sigma_{22})}{e_{1}^{1}e_{2}^{2} - e_{1}^{2}e_{2}^{2}} \nonumber \\
    &= \frac{\alpha(2H -\sigma_{33})}{e_{1}^{1}e_{2}^{2} - e_{1}^{2}e_{2}^{2}},
\end{align*}
where we have used the fact $\sigma_{AB}$ is trace-free to obtain the last equality.

\subsubsection*{Identity 3}
First, we split the time derivatives on the left hand side of \eqref{eqn:Integral_lemma_ident3} to obtain
\begin{align*}
     \del_{t}\left(\frac{e_{3}^{3}n_{33}}{e_{1}^{1}e_{2}^{2}-e_{1}^{2}e_{2}^{1}}\right) &= e_{3}^{3}n_{33}\del_{t}\Big((e_{1}^{1}e_{2}^{2}-e_{1}^{2}e_{2}^{1})^{-1}\Big) + \frac{n_{33}}{e_{1}^{1}e_{2}^{2}-e_{1}^{2}e_{2}^{1}}\del_{t}(e_{3}^{3}) + \frac{e_{3}^{3}}{e_{1}^{1}e_{2}^{2}-e_{1}^{2}e_{2}^{1}}\del_{t}(n_{33}).
\end{align*}
Next, by substituting \eqref{eqn:e33_evo}, \eqref{eqn:n33_evo}, and \eqref{eqn:Integral_lemma_ident1}-\eqref{eqn:Integral_lemma_ident2} into the above, we obtain
\begin{align}
\label{eqn:int_ident_3a}
    \del_{t}\left(\frac{e_{3}^{3}n_{33}}{e_{1}^{1}e_{2}^{2}-e_{1}^{2}e_{2}^{1}}\right) &= \frac{1}{e_{1}^{1}e_{2}^{2} - e_{1}^{2}e_{2}^{2}}\Big[\alpha(2H -\sigma_{33})e_{3}^{3}n_{33} - \alpha e_{3}^{3} n_{33}(H + \sigma_{33}) +  2e_{2}(e_{3}^{3}\alpha \sigma_{13}) \nonumber \\
    &- 2e_{1}(e_{3}^{3}\alpha \sigma_{23}) + 2e_{3}^{3}\alpha \tensor{c}{_1^{\mathcal{c}}_2}\sigma_{\mathcal{c}3} + \alpha e_{3}^{3}n_{33}(2\sigma_{33} -H)\Big] \nonumber \\
    &= \frac{2}{e_{1}^{1}e_{2}^{2} - e_{1}^{2}e_{2}^{2}}\Big[e_{2}(e_{3}^{3}\alpha \sigma_{13}) - e_{1}(e_{3}^{3}\alpha \sigma_{23}) + e_{3}^{3}\alpha \tensor{c}{_1^{\mathcal{c}}_2}\sigma_{\mathcal{c}3}\Big].
\end{align}
Next, observe that
\begin{align*}
    \tilde{\nabla}_{\mathcal{a}}(e_{3}^{3}\alpha \eps^{\mathcal{b}\mathcal{a}}\sigma_{\mathcal{b}3}) &= -e_{1}(e_{3}^{3}\alpha\sigma_{23}) + e_{2}(e_{3}^{3}\alpha\sigma_{13}) + e_{3}^{3}\alpha\Big(\sigma_{13}\tensor{\omega}{_1^1_2} - \sigma_{23}\tensor{\omega}{_2^2_1}\Big) \nonumber \\
    &= -e_{1}(e_{3}^{3}\alpha\sigma_{23}) + e_{2}(e_{3}^{3}\alpha\sigma_{13}) + e_{3}^{3}\alpha\Big(\sigma_{\mathcal{a}3}\tensor{\omega}{_1^{\mathcal{a}}_2} - \sigma_{\mathcal{a}3}\tensor{\omega}{_2^{\mathcal{a}}_1}\Big) \nonumber \\
    &= -e_{1}(e_{3}^{3}\alpha\sigma_{23}) + e_{2}(e_{3}^{3}\alpha\sigma_{13}) + e_{3}^{3}\alpha\tensor{c}{_1^{\mathcal{a}}_2}\sigma_{\mathcal{a}3}
\end{align*}
where we have used the fact $\tensor{\omega}{_a^1_1} = \tensor{\omega}{_a^2_2} = 0$. Substituting the above into \eqref{eqn:int_ident_3a} then yields \eqref{eqn:Integral_lemma_ident3}, as required.
\end{proof}

Now, we can prove the following 
\begin{prop}
    The integral constraint \eqref{eqn:Topological_Integral_Constraint} is conserved by the frame evolution equations with $\beta^{\Omega}=0$.
\end{prop}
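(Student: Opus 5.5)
The plan is to differentiate the integral in \eqref{eqn:Topological_Integral_Constraint} in time and show that the derivative vanishes. With $\beta^{\Omega}=0$, the coordinates $(x^{1},x^{2})$ are comoving and the $t=$ constant slices of $S$ are a fixed compact two-manifold with fixed coordinate domain. We may therefore differentiate under the integral sign:
\begin{align*}
\frac{d}{dt}\int_{\Sigma_{t}} -\frac{e_{3}^{3}n_{33}}{e_{1}^{1}e_{2}^{2}-e_{1}^{2}e_{2}^{1}}\,dx^{1}\,dx^{2} = -\int_{\Sigma_{t}} \del_{t}\left(\frac{e_{3}^{3}n_{33}}{e_{1}^{1}e_{2}^{2}-e_{1}^{2}e_{2}^{1}}\right)dx^{1}\,dx^{2}.
\end{align*}

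The key input is identity \eqref{eqn:Integral_lemma_ident3} of the preceding lemma. It rewrites the integrand as
\begin{align*}
2\sqrt{\gamma}\,\tilde{\nabla}_{\mathcal{a}}\big(e_{3}^{3}\alpha\sigma_{\mathcal{b}3}\eps^{\mathcal{b}\mathcal{a}}\big),
\end{align*}
so the time derivative of the integral becomes
\begin{align*}
-2\int_{\Sigma_{t}}\tilde{\nabla}_{\mathcal{a}}V^{\mathcal{a}}\sqrt{\gamma}\,dx^{1}\,dx^{2},
\end{align*}
where $V^{\mathcal{a}} := e_{3}^{3}\alpha\,\eps^{\mathcal{b}\mathcal{a}}\sigma_{\mathcal{b}3}$. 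Here $V$ is a vector field on the two-dimensional slice, written in the frame $\{e_{1},e_{2}\}$. This is the integral of a divergence against the Riemannian volume form $\sqrt{\gamma}\,dx^{1}dx^{2}$ of the induced metric on $\Sigma_{t}$. Since $\Sigma_{t}$ is compact without boundary, as the quotient of the closed three-manifold $\Sigma$ by the $U(1)$ action, the divergence theorem gives zero. Hence the integral is constant in $t$, and it equals $2\pi\ell$ for all time whenever it does so initially.

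The remaining work is to check the ingredients.
\begin{enumerate}[(i)]
\item $V^{\mathcal{a}}$ must be a genuine globally defined vector field on $\Sigma_{t}$. Its components are built from $\sigma_{13},\sigma_{23}$, which descend to $S$ by \eqref{eqn:e3_commutator_independence}, and from $e_{3}^{3}$ and $\alpha$, which are constant on orbits.
\item $\tilde{\nabla}_{\mathcal{a}}$ in \eqref{eqn:Integral_lemma_ident3} must be the Levi-Civita connection of the induced metric on $\Sigma_{t}$. The lemma's proof computes it from the frame connection coefficients $\tensor{\omega}{_{\mathcal{a}}^{\mathcal{b}}_{\mathcal{c}}}$ restricted to $\mathcal{a},\mathcal{b},\mathcal{c}\in\{1,2\}$, and these are the Levi-Civita coefficients of the quotient 2-metric.
\item The coordinate measure must match $\sqrt{\gamma}$, which holds because $\sqrt{\gamma}=(e_{1}^{1}e_{2}^{2}-e_{1}^{2}e_{2}^{1})^{-1}$, as already noted after \eqref{eqn:Topological_Integral_Constraint}.
\end{enumerate}

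The main subtlety I expect is global rather than computational. The frame is only locally constructed, and on a nontrivial bundle $e_{3}^{3}$ and the coordinate $x^{3}$ may not be globally defined. I would handle this by noting that $e_{3}^{3}$ can be replaced by $\lambda^{-1/2}$, a scalar on $S$ independent of the gauge of $x^{3}$. I would also argue that the integrand $F_{12}\sqrt{\gamma}$ and the flux field $V$ are invariant under local rotations in the $e_{1},e_{2}$ plane, as geometric contractions with $\eps^{\mathcal{b}\mathcal{a}}$. They therefore patch together, and either a partition of unity or Stokes' theorem applied to the globally defined $1$-form $\star V^{\flat}$ gives the vanishing boundary term.
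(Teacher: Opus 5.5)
Your proposal is correct and is the paper's proof. The paper also differentiates the integral under the integral sign, applies identity \eqref{eqn:Integral_lemma_ident3} to write the integrand as $2\sqrt{\gamma}\,\tilde{\nabla}_{\mathcal{a}}(e_{3}^{3}\alpha\sigma_{\mathcal{b}3}\eps^{\mathcal{b}\mathcal{a}})$, and concludes by Stokes' theorem on the closed slice; your additional checks (that $\tilde{\nabla}$ is the Levi-Civita connection of the quotient 2-metric, that $e_{3}^{3}=\lambda^{-1/2}$ is a global scalar, and that the flux is rotation-invariant) are sound points the paper leaves implicit.
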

\begin{proof}
    Taking the time derivative of \eqref{eqn:Topological_Integral_Constraint} and using \eqref{eqn:Integral_lemma_ident3}, we find
    \begin{align*}
        \del_{t}\Big(\int_{\Sigma_{t}} -e^{3}_{3}n_{33} \frac{1}{e_{1}^{1}e_{2}^{2} - e_{1}^{2}e_{2}^{1}} \;dx^{1}\;dx^{2}\Big) &= \int_{\Sigma_{t}} -\del_{t}\Big(e^{3}_{3}n_{33} \frac{1}{e_{1}^{1}e_{2}^{2} - e_{1}^{2}e_{2}^{1}}\Big) \;dx^{1}\;dx^{2} \nonumber \\
        &= \int_{\Sigma_{t}} -2\sqrt{\gamma}\tilde{\nabla}_{\mathcal{a}}(e_{3}^{3}\alpha \sigma_{\mathcal{b}3}\eps^{\mathcal{b}\mathcal{a}}) \;dx^{1}\;dx^{2} \nonumber \\
        &= 0
    \end{align*}
    where the last equality follows from Stokes' theorem \cite{Wald:1984}*{Theorem B.2.1}.
\end{proof}
Thus, it is sufficient to specify initial data satisfying \eqref{eqn:Topological_Integral_Constraint}. 

\section{A Symmetric Hyperbolic Formulation of the Orthonormal Frame Equations}
\label{sec:Sym_Hyp_U1_Eqns}

Our next goal is to write a well-posed formulation of the equations \eqref{eqn:e11_evo}-\eqref{eqn:HamiltonianConstraint_U1}. First, by subtracting $(\mathcal{C}_{M})_{1}$ and  $(\mathcal{C}_{M})_{2}$ from \eqref{eqn:a1_evo} and \eqref{eqn:a2_evo}, respectively, we obtain 
\begin{align}
\label{eqn:a1_evo2}
    e_{0}(a_{1}) &= e_{1}(H) - \frac{1}{2}e_{1}(\sigma_{11}) - \frac{1}{2}e_{2}(\sigma_{12}) + \frac{1}{2}e_{2}(\Omega_{3}) - H(\dot{u}_{1} + a_{1}) + \sigma_{11}(\frac{1}{2}\dot{u}_{1} + 2a_{1}) \nonumber \\
    &+ \sigma_{12}(\frac{1}{2}\dot{u}_{2} + 2a_{2}) + \Omega_{3}(\frac{1}{2}\dot{u}_{2} - a_{2}) - n_{23}(\sigma_{22} - \sigma_{33}) - n_{13}\sigma_{12} - n_{33}\sigma_{23} + T_{01} , \\
\label{eqn:a2_evo2}
    e_{0}(a_{2}) &= -\frac{1}{2}e_{1}(\sigma_{12}) - \frac{1}{2}e_{2}(\sigma_{22}) + e_{2}(H) - \frac{1}{2}e_{1}(\Omega_{3}) - H(\dot{u}_{2} + a_{2}) + \sigma_{22}(\frac{1}{2}\dot{u}_{2} + 2a_{2}) \nonumber \\
    &+ \sigma_{12}(\frac{1}{2}\dot{u}_{1} + 2a_{1}) - \Omega_{3}(\frac{1}{2}\dot{u}_{1} - a_{1}) - n_{13}(\sigma_{33} - \sigma_{11}) + n_{23}\sigma_{12} + n_{33}\sigma_{13} + T_{02}.
\end{align}
Similarly, adding $\frac{1}{12}(\mathcal{C}_{H})$ to \eqref{eqn:H_evo} yields 
\begin{align}
\label{eqn:H_evo2}
e_{0}(H) &= \frac{1}{3}e_{1}(a_{1}) + \frac{1}{3}e_{2}(a_{2}) + \frac{1}{3}e_{1}(\dot{u}_{1}) + \frac{1}{3}e_{2}(\dot{u}_{2}) - \frac{1}{2}H^{2} + \frac{1}{3}\dot{u}_{1}(\dot{u}_{1} - 2a_{1}) + \frac{1}{3}\dot{u}_{2}(\dot{u}_{2} - 2a_{2}) \nonumber \\
&- \frac{5}{12}\sigma_{AB}\sigma^{AB} - \frac{1}{2}(a_{1}^{2} + a_{2}^{2}) - \frac{1}{12}(2n_{13}^{2} + 2n_{23}^{2} + \frac{1}{2}n_{33}^{2}) - \frac{1}{3}T_{00} - \frac{1}{6}\tensor{T}{_A^A}.
\end{align}

Next, in the vein of \cite{Elst_et_al:2001}, we introduce the new variables
\begin{equation}
    \begin{gathered}
    \label{eqn:decomposition_variables}
        \sigma_{+} = \frac{1}{2}(\sigma_{22} + \sigma_{33}) = \frac{-1}{2}\sigma_{11}, \quad \sigma_{-} = \frac{1}{2\sqrt{3}}(\sigma_{22} - \sigma_{33}),  \\
        \sigma_{\times} = \frac{1}{\sqrt{3}}\sigma_{23}, \quad
        \sigma_{2} = \frac{1}{\sqrt{3}}\sigma_{13}, \quad \sigma_{3} = \frac{1}{\sqrt{3}}\sigma_{12}, \\
        n_{+} = \frac{1}{2\sqrt{3}}n_{33}, \quad n_{\times} = \frac{1}{\sqrt{3}}n_{23}, \quad \hat{n} = \frac{1}{\sqrt{3}}n_{13}, \\
        \hat{a}_{1} = \frac{1}{\sqrt{3}}a_{1}, \quad \hat{a}_{2} = \frac{1}{\sqrt{3}}a_{2}.
    \end{gathered}
\end{equation}
In terms of these variables, a polarised spacetime corresponds to
    \begin{align*}
    n_{+} = \sigma_{\times} = \sigma_{2} = 0.
    \end{align*}
We also note the following useful identities,
\begin{equation*}
\begin{gathered}
    \sigma_{11} = -2\sigma_{+}, \quad \sigma_{22} = \sigma_{+} +\sqrt{3}\sigma_{-}, \quad \sigma_{33} = \sigma_{+} - \sqrt{3}\sigma_{-}, \\
    \sigma_{AB}\sigma^{AB} = 6(\sigma_{+}^{2} + \sigma_{-}^{2} + \sigma_{2}^{2} + \sigma_{3}^{2} + \sigma_{\times}^{2}).
\end{gathered}
\end{equation*}
Now, in terms of the variables \eqref{eqn:decomposition_variables}, the evolution equations become
\begin{align}
\label{eqn:sigma_plus_evo}
    e_{0}(\sigma_{+}) &= \frac{1}{\sqrt{3}}e_{1}(\hat{a}_{1}) - \frac{1}{2\sqrt{3}}e_{2}(\hat{a}_{2}) - \frac{\sqrt{3}}{2}e_{2}(\hat{n}) - \frac{1}{3}e_{1}(\dot{u}_{1}) + \frac{1}{6}e_{2}(\dot{u}_{2}) -3H\sigma_{+} \nonumber \\
    &- \frac{1}{3}(\dot{u}_{1}^{2} + \sqrt{3}\hat{a}_{1}\dot{u}_{1}) + \frac{1}{6}(\dot{u}_{2}^{2} + \sqrt{3}\hat{a}_{2}\dot{u}_{2}) - \frac{\sqrt{3}}{2}\hat{n}(\dot{u}_{2} - 2\sqrt{3}\hat{a}_{2}) + (\sqrt{3}\sigma_{3}\Omega_{3} + 3\sigma_{2}^{2}) \nonumber \\
    &- (2n_{\times}^{2} + 2n_{+}^{2} - \hat{n}^{2}) - \frac{1}{2}T_{11} + \frac{1}{6}\tensor{T}{_A^A} , \\
\label{eqn:sigma_minus_evo}
    e_{0}(\sigma_{-}) &= -e_{1}(n_{\times}) - \frac{1}{2}e_{2}(\hat{a}_{2} ) +\frac{1}{2}e_{2}(\hat{n}) + \frac{1}{2\sqrt{3}}e_{2}(\dot{u}_{2}) -3H\sigma_{-} + \frac{1}{2\sqrt{3}}(\dot{u}_{2}^{2} + \sqrt{3}\hat{a}_{2}\dot{u}_{2})  \nonumber \\
    &- n_{\times}(\dot{u}_{1} -2\sqrt{3}\hat{a}_{1})+ \frac{1}{2}\hat{n}(\dot{u}_{2} - 2\sqrt{3}\hat{a}_{2}) + (\sigma_{3}\Omega_{3} -2\sqrt{3}\sigma_{\times}^{2} - \sqrt{3}\sigma_{2}^{2}) \nonumber \\
    &+ \sqrt{3}(\hat{n}^{2} + 2n_{+}^{2}) + \frac{1}{2\sqrt{3}}(T_{22} - T_{33}), \\
\label{eqn:n_times_evo}
    e_{0}(n_{\times}) &= -e_{1}(\sigma_{-}) + \frac{1}{2}e_{2}(\sigma_{3}) + \frac{1}{2\sqrt{3}}e_{2}(\Omega_{3}) + n_{\times}(2\sigma_{+} - H) + \hat{n}(\sqrt{3}\sigma_{3} + \Omega_{3}) - \dot{u}_{1}\sigma_{-} \nonumber \\
    &+ \frac{1}{2}\dot{u}_{2}(\sigma_{3} + \frac{1}{\sqrt{3}}\Omega_{3}), \\
\label{eqn:sigma_times_evo}
    e_{0}(\sigma_{\times}) &= -e_{1}(n_{+}) -3H\sigma_{\times} + n_{+}(2\sqrt{3}\hat{a}_{1} - \dot{u}_{1}) + 2\sqrt{3}\sigma_{\times}\sigma_{-} + \sigma_{2}(\Omega_{3} + \sqrt{3}\sigma_{3}) - 2\sqrt{3}n_{\times}n_{+} \nonumber \\
    &+ \frac{1}{\sqrt{3}}T_{23}, \\
\label{eqn:n_plus_evo}
    e_{0}(n_{+}) &= -e_{1}(\sigma_{\times}) + e_{2}(\sigma_{2}) + n_{+}(2\sigma_{+} - 2\sqrt{3}\sigma_{-} - H) + 2\sqrt{3}\sigma_{2}\hat{n} + 2\sqrt{3}n_{\times}\sigma_{\times} - \dot{u}_{1}\sigma_{\times} \nonumber \\
    &+ \dot{u}_{2}\sigma_{2}, \\
\label{eqn:n_hat_evo}
    e_{0}(\hat{n}) &= \frac{1}{2\sqrt{3}}e_{1}(\Omega_{3}) - \frac{1}{2}e_{1}(\sigma_{3}) - \frac{\sqrt{3}}{2}e_{2}(\sigma_{+}) + \frac{1}{2}e_{2}(\sigma_{-}) - \hat{n}(\sigma_{+} + \sqrt{3}\sigma_{-} + H) \nonumber \\
    &- n_{\times}(\Omega_{3} - \sqrt{3}\sigma_{3}) + \frac{1}{2\sqrt{3}}\dot{u}_{1}(\Omega_{3} - \sqrt{3}\sigma_{3}) + \frac{1}{2\sqrt{3}}\dot{u}_{2}(-3\sigma_{+} + \sqrt{3}\sigma_{-}), \\
\label{eqn:sigma_3_evo}
    e_{0}(\sigma_{3}) &= \frac{-1}{2}e_{1}(\hat{n}) - \frac{1}{2}e_{1}(\hat{a}_{2}) - \frac{1}{2}e_{2}(\hat{a}_{1})  + \frac{1}{2}e_{2}(n_{\times}) + \frac{1}{2\sqrt{3}}e_{1}(\dot{u}_{2}) + \frac{1}{2\sqrt{3}}e_{2}(\dot{u}_{1})  - 3H\sigma_{3} \nonumber \\
    &+ \frac{1}{2}n_{\times}(\dot{u}_{2} - 2\sqrt{3}\hat{a}_{2}) - \Omega_{3}(\sigma_{-} + \sqrt{3}\sigma_{+}) -2\sqrt{3}\sigma_{2}\sigma_{\times} - \frac{1}{2}\hat{n}(\dot{u}_{1} - 2\sqrt{3}\hat{a}_{1}) \nonumber \\
    &- 2\sqrt{3}\hat{n}n_{\times} + \frac{1}{2\sqrt{3}}\dot{u}_{1}(\dot{u}_{2} + \sqrt{3}\hat{a}_{2}) + \frac{1}{2\sqrt{3}}\dot{u}_{2}(\dot{u}_{1} + \sqrt{3}\hat{a}_{1}) + \frac{1}{\sqrt{3}}T_{12}, \\
\label{eqn:sigma_2_evo}
    e_{0}(\sigma_{2}) &= e_{2}(n_{+}) - 3H\sigma_{2} + n_{+}(\dot{u}_{2} - 2\sqrt{3}\hat{a}_{2}) + \sigma_{2}(\sqrt{3}\sigma_{-} - 3\sigma_{+}) + \sigma_{\times}(\sqrt{3}\sigma_{3} - \Omega_{3}) \nonumber \\
    &- 2\sqrt{3}\hat{n}n_{+} + \frac{1}{\sqrt{3}}T_{13}, \\
\label{eqn:ahat_1_evo}
    e_{0}(\hat{a}_{1}) &= \frac{1}{\sqrt{3}}e_{1}(H) + \frac{1}{\sqrt{3}}e_{1}(\sigma_{+}) - \frac{1}{2}e_{2}(\sigma_{3}) + \frac{1}{2\sqrt{3}}e_{2}(\Omega_{3}) - H(\frac{1}{\sqrt{3}}\dot{u}_{1} + \hat{a}_{1}) \nonumber \\
    &- 2\sigma_{+}(\frac{1}{2\sqrt{3}}\dot{u}_{1} + 2\hat{a}_{1}) + \sigma_{3}(\frac{1}{2}\dot{u}_{2} + 2\sqrt{3}\hat{a}_{2}) + \Omega_{3}(\frac{1}{2\sqrt{3}}\dot{u}_{2} - \hat{a}_{2}) - 2\sqrt{3}n_{\times}\sigma_{-} \nonumber \\
    &- \sqrt{3}\hat{n}\sigma_{3} -2\sqrt{3}n_{+}\sigma_{\times} + \frac{1}{\sqrt{3}}T_{01}, \\
\label{eqn:ahat_2_evo}
    e_{0}(\hat{a}_{2}) &= -\frac{1}{2}e_{1}(\sigma_{3}) - \frac{1}{2\sqrt{3}}e_{2}(\sigma_{+}) - \frac{1}{2}e_{2}(\sigma_{-}) + \frac{1}{\sqrt{3}}e_{2}(H) - \frac{1}{2\sqrt{3}}e_{1}(\Omega_{3}) - H(\frac{1}{\sqrt{3}}\dot{u}_{2} + \hat{a}_{2}) \nonumber \\
    &+ (\sigma_{+} +\sqrt{3}\sigma_{-})(\frac{1}{2\sqrt{3}}\dot{u}_{2} + 2\hat{a}_{2}) + \sigma_{3}(\frac{1}{2}\dot{u}_{1} + 2\sqrt{3}\hat{a}_{1}) - \Omega_{3}(\frac{1}{2\sqrt{3}}\dot{u}_{1} - \hat{a}_{1}) \nonumber \\ 
    &- \hat{n}(3\sigma_{+} - \sqrt{3}\sigma_{-}) + \sqrt{3}n_{\times}\sigma_{3} + 2\sqrt{3}n_{+}\sigma_{2} + \frac{1}{\sqrt{3}}T_{02} , \\
\label{eqn:H_U1_evo}
    e_{0}(H) &= \frac{1}{\sqrt{3}}e_{1}(\hat{a}_{1}) + \frac{1}{\sqrt{3}}e_{2}(\hat{a}_{2}) + \frac{1}{3}e_{1}(\dot{u}_{1}) + \frac{1}{3}e_{2}(\dot{u}_{2}) - \frac{1}{2}H^{2} + \frac{1}{3}\dot{u}_{1}(\dot{u}_{1} - 2\sqrt{3}\hat{a}_{1}) \nonumber \\
    &+ \frac{1}{3}\dot{u}_{2}(\dot{u}_{2} - 2\sqrt{3}\hat{a}_{2}) - \frac{5}{2}(\sigma_{+}^{2} + \sigma_{-}^{2} + \sigma_{2}^{2} + \sigma_{3}^{2} + \sigma_{\times}^{2}) - \frac{3}{2}(\hat{a}_{1}^{2} + \hat{a}_{2}^{2}) - \frac{1}{2}(\hat{n}^{2} + n_{\times}^{2} + n_{+}^{2}) \nonumber \\
    &- \frac{1}{3}T_{00} - \frac{1}{6}\tensor{T}{_A^A}, \\
\label{eqn:e11_U1_evo}
    e_{0}(e^{1}_{1}) &= -(H - 2\sigma_{+})e^{1}_{1} - (\sqrt{3}\sigma_{3} + \Omega_{3})e_{2}^{1}, \\
\label{eqn:e12_U1_evo}
    e_{0}(e_{1}^{2}) &= -(H - 2\sigma_{+})e_{1}^{2} - (\sqrt{3}\sigma_{3}  + \Omega_{3})e_{2}^{2}, \\
\label{eqn:e13_U1_evo} 
    e_{0}(e_{1}^{3}) &= -(H - 2\sigma_{+})e_{1}^{3} - (\sqrt{3}\sigma_{3} + \Omega_{3})e_{2}^{3} - 2\sqrt{3}\sigma_{2}e_{3}^{3}, \\
\label{eqn:e21_U1_evo}
    e_{0}(e_{2}^{1}) &= -(\sqrt{3}\sigma_{3}  - \Omega_{3})e_{1}^{1} - (H + \sigma_{+} + \sqrt{3}\sigma_{-})e_{2}^{1}, \\
\label{eqn:e22_U1_evo}
    e_{0}(e_{2}^{2}) &= -(\sqrt{3}\sigma_{3}  - \Omega_{3})e_{1}^{2} - (H + \sigma_{+} + \sqrt{3}\sigma_{-})e_{2}^{2}, \\
\label{eqn:e23_U1_evo} 
    e_{0}(e_{2}^{3}) &= -(\sqrt{3}\sigma_{3} - \Omega_{3})e_{1}^{3} - (H + \sigma_{+} + \sqrt{3}\sigma_{-})e_{2}^{3} - 2\sqrt{3}\sigma_{\times}e_{3}^{3}, \\
\label{eqn:e33_U1_evo} 
    e_{0}(e_{3}^{3}) &= -(H + \sigma_{+} - \sqrt{3}\sigma_{-})e_{3}^{3}.
\end{align}

Similarly, the constraints reduce to
\begin{align}
\label{eqn:C1_112_U1}
0 =\tensor{(\mathcal{C}_{1})}{_1^1_2} &:= 2e_{[1}(e_{2]}^{1}) - 2\sqrt{3}\hat{a}_{[1}e_{2]}^{1} - \sqrt{3}(\hat{n}e_{1}^{1} + n_{\times}e_{2}^{1}) , \\
\label{eqn:C1_122_U1}
0 =\tensor{(\mathcal{C}_{1})}{_1^2_2} &:= 2e_{[1}(e_{2]}^{2}) - 2\sqrt{3}\hat{a}_{[1}e_{2]}^{2} - \sqrt{3}(\hat{n}e_{1}^{2} + n_{\times}e_{2}^{2}) , \\
0 =\tensor{(\mathcal{C}_{1})}{_1^3_2} &:= 2e_{[1}(e_{2]}^{3}) - 2\sqrt{3}\hat{a}_{[1}e_{2]}^{3} - (\sqrt{3}\hat{n}e_{1}^{3} + \sqrt{3}n_{\times}e_{2}^{3} + 2\sqrt{3}n_{+}e_{3}^{3}) , \\
0 = \tensor{(\mathcal{C}_{1})}{_1^3_3} &:= e_{1}(e_{3}^{3}) - \sqrt{3}\hat{a}_{1}e_{3}^{3} + \sqrt{3}n_{\times}e_{3}^{3}, \\
0 = \tensor{(\mathcal{C}_{1})}{_2^3_3} &:= e_{2}(e_{3}^{3}) - \sqrt{3}\hat{a}_{2}e_{3}^{3} - \sqrt{3}\hat{n}e_{3}^{3}, \\
0 = (\mathcal{C}_{2})_{12} &:= 2e_{[1}(\dot{u}_{2]}) + 2\sqrt{3}\hat{a}_{[1}\dot{u}_{2]} + \sqrt{3}\hat{n}\dot{u}_{1} + \sqrt{3}n_{\times}\dot{u}_{2}, \\
0 = (\mathcal{C}_{3})_{1} &:= \dot{u}_{1} - \alpha^{-1}e_{1}(\alpha), \\
0 = (\mathcal{C}_{3})_{2} &:= \dot{u}_{2} - \alpha^{-1}e_{2}(\alpha), \\
\label{eqn:C4_U1}
0 = (\mathcal{C}_{4}) &:= \int_{\Sigma_{t}} \frac{-2\sqrt{3}e_{3}^{3}n_{+}}{e_{1}^{1}e_{2}^{2} - e_{1}^{2}e_{2}^{1}} \;dx^{1}\;dx^{2} -2\pi \ell, \\
\label{eqn:CM1_U1}
0 = (\mathcal{C}_{M})_{1} &:= -2e_{1}(\sigma_{+}) + \sqrt{3}e_{2}(\sigma_{3}) - 2e_{1}(H) + 6\sqrt{3}\sigma_{+}\hat{a}_{1} - 9\sigma_{3}\hat{a}_{2} + 6n_{\times}\sigma_{-} \nonumber \\
&+ 3\hat{n}\sigma_{3} + 6n_{+}\sigma_{\times} - T_{01}, \\
\label{eqn:CM2_U1}
0 = (\mathcal{C}_{M})_{2} &:= \sqrt{3}e_{1}(\sigma_{3}) + e_{2}(\sigma_{+}) + \sqrt{3}e_{2}(\sigma_{-}) - 2e_{2}(H) - 9\sigma_{3}\hat{a}_{1} \nonumber \\
&- 3\sqrt{3}(\sigma_{+}+\sqrt{3}\sigma_{-})\hat{a}_{2} + \sqrt{3}\hat{n}(3\sigma_{+} - \sqrt{3}\sigma_{-}) - 3n_{\times}\sigma_{3} - 6n_{+}\sigma_{2} - T_{02}, \\
\label{eqn:CM3_U1}
0 = (\mathcal{C}_{M})_{3} &:= \sqrt{3}e_{1}(\sigma_{2}) + \sqrt{3}e_{2}(\sigma_{\times}) - 9\sigma_{2}\hat{a}_{1} - 9\sigma_{\times}\hat{a}_{2} - 3\hat{n}\sigma_{\times} + 3n_{\times}\sigma_{2} - T_{03}, \\
\label{eqn:CH_U1}
0 = (\mathcal{C}_{H}) &:= 4\sqrt{3}e_{1}(\hat{a}_{1}) + 4\sqrt{3}e_{2}(\hat{a}_{2}) + 6H^{2} -18(\hat{a}_{1}^{2} + \hat{a}_{2}^{2}) - 6(\hat{n}^{2} + n_{\times}^{2} + n_{+}^{2}) \nonumber \\
&- 6(\sigma_{+}^{2} + \sigma_{-}^{2} + \sigma_{2}^{2} + \sigma_{3}^{2} + \sigma_{\times}^{2}) - 2T_{00}.
\end{align}
If we assume that the gauge functions $\alpha$, $\beta^{\Omega}$, $\dot{u}_{\mathcal{a}}$, and $\Omega_{3}$ are prescribed functions of the coordinates $(t,x^{\mathcal{a}})$, then the system \eqref{eqn:sigma_plus_evo}-\eqref{eqn:e22_U1_evo} is immediately symmetric hyperbolic\footnote{Since $e_{1}$ and $e_{2}$ are linear combinations of the coordinate derivatives $\del_{x^{1}}$ and $\del_{x^{2}}$, symmetry with respect to the frame derivatives immediately implies symmetry with respect to the coordinate derivatives.}.

\subsection{Matter Equations}
The conservation of stress-energy is given by
\begin{align*}
    \nabla_{a}T^{ab} = 0.
\end{align*}
Expanding this equation yields
\begin{align*}
    e_{0}(T^{00}) + e_{A}(T^{A0}) = -\tensor{\omega}{_a^a_c}T^{c0} - \tensor{\omega}{_a^0_c}T^{ac}, \\
    e_{0}(T^{0B}) + e_{A}(T^{AB}) = -\tensor{\omega}{_a^a_c}T^{cB} - \tensor{\omega}{_a^B_c}T^{ac}. 
\end{align*}
By expressing the above in terms of the commutator variables we obtain the following evolution equations\footnote{See \cite{RöhrUggla:2005}*{Page 6} for the decomposition of the connection coefficients into the commutator variables.}
\begin{align}
\label{eqn:T_00_Frame_Evo}
    e_{0}(T^{00}) + e_{A}(T^{A0}) &= -3HT^{00} + 2(a_{A}-\dot{u}_{A})T^{A0} -\sigma_{AB}T^{AB} - H\tensor{T}{_A^A}, \\
\label{eqn:T_0B_Frame_Evo}
    e_{0}(T^{0B}) + e_{A}(T^{AB}) &= -4HT^{0B} + 3a_{C}T^{CB} - \dot{u}^{B}T^{00} - \tensor{\sigma}{_A^B}T^{A0} - a^{B}\tensor{T}{_A^A} \nonumber \\
    &-\dot{u}_{C}T^{CB}- \tensor{\eps}{^B_{CE}}(\tensor{n}{^{E}_{A}}T^{AC} + \Omega^{E}T^{0C}).
\end{align}
By substituting \eqref{eqn:U1_Frame_Simplifications}, \eqref{eqn:U(1)_Symmetric_Matter_Condition}, and \eqref{eqn:decomposition_variables} into \eqref{eqn:T_00_Frame_Evo}-\eqref{eqn:T_0B_Frame_Evo} we then obtain the $U(1)$-symmetric evolution equations for the matter
\begin{align}
\label{eqn:T_00_U1_Frame_Evo}
    e_{0}(T^{00}) + e_{\mathcal{a}}(T^{0\mathcal{a}}) &= -3HT^{00} +2(\sqrt{3}\hat{a}_{\mathcal{a}} - \dot{u}_{\mathcal{a}})T^{0\mathcal{a}} - H\tensor{T}{_A^A} + 2\sigma_{+}T^{11} - (\sigma_{+} + \sqrt{3}\sigma_{-})T^{22} \nonumber \\
    &- (\sigma_{+} -\sqrt{3}\sigma_{-})T^{33} - 2\sqrt{3}\sigma_{3}T^{12} - 2\sqrt{3}\sigma_{2}T^{13} - 2\sqrt{3}\sigma_{\times}T^{23}, \\
\label{eqn:T_01_U1_Frame_Evo}
    e_{0}(T^{01}) + e_{\mathcal{a}}(T^{\mathcal{a}1}) &= -(4H -2\sigma_{+})T^{01} -\dot{u}_{1}T^{00} + (3\sqrt{3}\hat{a}_{\mathcal{a}} - \dot{u}_{\mathcal{a}})T^{\mathcal{a}1} - \sqrt{3}\hat{a}_{1}\tensor{T}{_A^A} \nonumber \\
    &- T^{02}(\Omega_{3} + \sqrt{3}\sigma_{3}) - 2\sqrt{3}\sigma_{2}T^{03} - \sqrt{3}\hat{n}T^{12} - 2\sqrt{3}n_{+}T^{23} \nonumber \\
    &- \sqrt{3}n_{\times}(T^{22} - T^{33}) , \\
\label{eqn:T_02_U1_Frame_Evo}
    e_{0}(T^{02}) + e_{\mathcal{a}}(T^{\mathcal{a}2}) &= -(4H + \sigma_{+} + \sqrt{3}\sigma_{-})T^{02} -\dot{u}_{2}T^{00} + (3\sqrt{3}\hat{a}_{\mathcal{a}} - \dot{u}_{\mathcal{a}})T^{\mathcal{a}2} - \sqrt{3}\hat{a}_{2}\tensor{T}{_A^A} \nonumber \\
    &+ (\Omega_{3} - \sqrt{3}\sigma_{3})T^{01}  -2\sqrt{3}\sigma_{\times}T^{03} + \sqrt{3}n_{\times}T^{12} + 2\sqrt{3}n_{+}T^{13} \nonumber \\
    &+ \sqrt{3}\hat{n}(T^{11} - T^{33}), \\
\label{eqn:T_03_U1_Frame_Evo}
    e_{0}(T^{03}) + e_{\mathcal{a}}(T^{\mathcal{a}3}) &= -(4H + \sigma_{+} -\sqrt{3}\sigma_{-})T^{03} + (3\sqrt{3}\hat{a}_{\mathcal{a}} - \dot{u}_{\mathcal{a}})T^{\mathcal{a}3} - \sqrt{3}n_{\times}T^{13} + \sqrt{3}\hat{n}T^{23}.
\end{align}

\begin{rem}[Polarised Matter]
    For non-vacuum polarised $U(1)$-symmetric spacetimes, we require
    \begin{align*}
        T^{03} = T^{13} = T^{23} &= 0. 
    \end{align*}
    An inspection of \eqref{eqn:T_00_U1_Frame_Evo}-\eqref{eqn:T_03_U1_Frame_Evo} demonstrates this a consistent reduction which is preserved by the matter evolution equations.
\end{rem}

\subsection{Gauge Conditions}
The system \eqref{eqn:sigma_plus_evo}-\eqref{eqn:e22_U1_evo} is symmetric hyperbolic when the gauge variables are prescribed functions of the spacetime coordinates. However, since this specific system has not been used in the literature before, it is useful to find a wider class of gauge conditions that result in a well-posed system. In this section, we will discuss some possible gauge choices.  \newline \par

\subsubsection{Spatial Frame Gauge}
As mentioned previously, the function $\Omega_{A}$ represents the angular velocity of the spatial frame $e_{A}$ relative to a frame which is Fermi propagated along the integral curves of $e_{0}$. In particular $\Omega_{A} = 0$, when we choose a Fermi propagated frame. By using a group-invariant frame, we are left with only one free component, $\Omega_{3}$. The most natural gauge choice appears to be
\begin{align*}
    \Omega_{3} = 0
\end{align*}
which ensures \eqref{eqn:sigma_plus_evo}-\eqref{eqn:H_U1_evo} remains symmetric hyperbolic. Another choice is to set
\begin{align*}
    \Omega_{3} = \pm\sqrt{3}\sigma_{3}.
\end{align*}
Indeed this is the gauge condition for the spatial frame in the Berger-Moncrief metric parameterisation given in Appendix \ref{app:BergerMoncrief_FrameExample}. 
While this simplifies the evolution equations it does not result in a obviously well-posed system.

\subsubsection{Generalised Harmonic Slicing with Zero Shift}
The generalised harmonic slicing condition is given by
\begin{align}
\label{eqn:Harmonic_Slicing}
    \Box t = f
\end{align}
where, for now, $f$ is an arbitrary function. The condition \eqref{eqn:Harmonic_Slicing} is equivalent to
\begin{align*}
    g^{ab}e_{a}\big(e_{b}(t)\big) - g^{ab}\tensor{\omega}{_a^k_b}e_{k}(t) = f
\end{align*}
Using the fact $e_{0} = \alpha^{-1}\del_{t}$ when $\beta^{\Omega}=0$ and the decomposition of the connection coefficients in \cite{RöhrUggla:2005}*{Page 6}, we obtain 
\begin{align}
\label{eqn:Harmonic_alpha_evo}
    e_{0}(\alpha) = 3H\alpha + \alpha^{2}f.
\end{align}
Next, by applying the mixed commutator \eqref{eqn:mixed_commutator_decomposition} to $\alpha$, using the gauge constraint \eqref{eqn:C3_Constraint_3+1}, and the equation for $\alpha$ \eqref{eqn:Harmonic_alpha_evo}, we obtain an evolution equation for $\dot{u}_{A}$
\begin{align}
\label{eqn:udotA_harmonic_evo}
    e_{0}(\dot{u}_{A}) = 3e_{A}(H) + \alpha e_{A}(f) + 2H\dot{u}_{A} + 2\alpha f\dot{u}_{A} - \big[\tensor{\sigma}{_A^B} +\tensor{\eps}{_A^B_C}\Omega^{C}]\dot{u}_{B}
\end{align}
Using \eqref{eqn:U1_Frame_Simplifications} and the definition of the symmetry adapted variables \eqref{eqn:decomposition_variables}, the evolution equation \eqref{eqn:udotA_harmonic_evo} reduces to
\begin{align}
\label{eqn:udot1_harmonic_evo}
    e_{0}(\dot{u}_{1}) &= 3e_{1}(H) + \alpha e_{1}(f) + 2\dot{u}_{1}(H + \alpha f + \sigma_{+}) - \dot{u}_{2}(\sqrt{3}\sigma_{3} + \Omega_{3}), \\
\label{eqn:udot2_harmonic_evo}
    e_{0}(\dot{u}_{2}) &= 3e_{2}(H) + \alpha e_{2}(f) +\dot{u}_{2}(2H +2\alpha f -\sigma_{+} - \sqrt{3}\sigma_{-}) -\dot{u}_{1}(\sqrt{3}\sigma_{3} - \Omega_{3})
\end{align}
Clearly, incorporating \eqref{eqn:udot1_harmonic_evo}-\eqref{eqn:udot2_harmonic_evo} into the evolution system \eqref{eqn:sigma_plus_evo}-\eqref{eqn:H_U1_evo} will spoil the symmetric hyperbolicity of the system. However, by adding appropriate multiples of components of the momentum constraint \eqref{eqn:CM1_U1}-\eqref{eqn:CM2_U1} we can recover a symmetric hyperbolic system. In particular, the combinations
\begin{align*}
    \frac{2}{9}e_{0}(\dot{u}_{1}) + \frac{1}{6}(\mathcal{C}_{M})_{1}, \quad \frac{2}{9}e_{0}(\dot{u}_{2}) + \frac{1}{6}(\mathcal{C}_{M})_{2},
\end{align*}
yield the following evolution equations
\begin{align}
\label{eqn:udot1_harmonic_evo_constraintmod}
\frac{2}{9}e_{0}(\dot{u}_{1}) &= \frac{1}{3}e_{1}(H) - \frac{1}{3}e_{1}(\sigma_{+}) + \frac{2}{9}\alpha e_{1}(f) + \frac{1}{2\sqrt{3}}e_{2}(\sigma_{3}) + \frac{4}{9}\dot{u}_{1}(H + \alpha f +\sigma_{+}) \nonumber \\ 
&- \frac{2}{9}\dot{u}_{2}(\sqrt{3} \sigma_{3} + \Omega_{3}) 
+ \sqrt{3} \sigma_{+}\hat{a}_{1} - \frac{3}{2}\sigma_{3}\hat{a}_{2} + n_{\times}\sigma_{-} + \frac{1}{2}\hat{n}\sigma_{3} + n_{+}\sigma_{\times} - \frac{1}{6}T_{01}, \\
\label{eqn:udot2_harmonic_evo_constraintmod}
\frac{2}{9}e_{0}(\dot{u}_{2}) &= \frac{1}{2\sqrt{3}}e_{1}(\sigma_{3}) + \frac{1}{3}e_{2}(H) + \frac{1}{6}e_{2}(\sigma_{+}) + \frac{1}{2\sqrt{3}}e_{2}(\sigma_{-}) + \frac{2}{9}\alpha e_{2}(f) - \frac{2}{9}\dot{u}_{1}(\sqrt{3}\sigma_{3} - \Omega_{3}) \nonumber \\
&+ \frac{2}{9}\dot{u}_{2}(2H + 2\alpha f -\sigma_{+} - \sqrt{3}\sigma_{-})- \frac{3}{2}\sigma_{3}\hat{a}_{1} - \frac{\sqrt{3}}{2}\hat{a}_{2}(\sigma_{+} + \sqrt{3}\sigma_{-}) + \frac{1}{2\sqrt{3}}\hat{n}(3\sigma_{+} - \sqrt{3}\sigma_{-}) \nonumber \\
&- \frac{1}{2}n_{\times}\sigma_{3} - n_{+}\sigma_{2} - \frac{1}{6}T_{02}.
\end{align}
It is straightforward to verify that the system consisting of \eqref{eqn:sigma_plus_evo}-\eqref{eqn:e22_U1_evo}, \eqref{eqn:Harmonic_alpha_evo}, and \eqref{eqn:udot1_harmonic_evo_constraintmod}-\eqref{eqn:udot2_harmonic_evo_constraintmod} is symmetric hyperbolic provided $f$ and $\Omega_{3}$ are prescribed functions of the spacetime coordinates. 

\subsubsection{CMC Slicing}
The constant mean curvature slicing condition is given by
\begin{align*}
    3H = \tau(t)
\end{align*}
where $\tau$ is a monotonic function of $t$. Substituting this into the evolution equation for $H$ \eqref{eqn:H_evo} yields
\begin{align*}
    e_{\mathcal{a}}(\dot{u}^{\mathcal{a}}) -\frac{\tau^{2}}{3} + \dot{u}_{\mathcal{a}}(\dot{u}^{\mathcal{a}} -2\sqrt{3}\hat{a}^{\mathcal{a}}) -6(\sigma_{+}^{2} + \sigma_{-}^{2} + \sigma_{2}^{2} + \sigma_{3}^{2} + \sigma_{\times}^{2}) - \frac{1}{2}(T_{00} + \tensor{T}{_A^A}) - e_{0}(\tau) = 0.
\end{align*}
By replacing $\dot{u}_{\mathcal{a}}$ in the above using the gauge constraint \eqref{eqn:C3_Constraint_3+1}, we then obtain an elliptic equation for the lapse
\begin{align}
\label{eqn:CMC_Lapse_eqn}
    e_{\mathcal{a}}\big(e^{\mathcal{a}}(\alpha)\big) -2\sqrt{3}\hat{a}^{\mathcal{a}}e_{\mathcal{a}}(\alpha) -\alpha\Big[ 6(\sigma_{+}^{2} + \sigma_{-}^{2} + \sigma_{2}^{2} + \sigma_{3}^{2} + \sigma_{\times}^{2}) + \frac{1}{2}(T_{00} + \tensor{T}{_A^A}) + \frac{\tau^{2}}{3}] - \del_{t}(\tau) = 0.
\end{align}
In this case, the field equations become a mixed elliptic-hyperbolic system consisting of the symmetric hyperbolic system\footnote{Provided that $\Omega_{3}$ is appropriately chosen.} \eqref{eqn:sigma_plus_evo}-\eqref{eqn:e22_U1_evo} and the elliptic equation for the lapse \eqref{eqn:CMC_Lapse_eqn}. 

\section{Relativistic Fluids}
\label{sec:U(1)_Euler_Equations}
Now, let us specialise to the case of a perfect fluid matter source,
\begin{align*}
    T^{\mu\nu} = (\rho + p)u^{\mu}u^{\nu} + pg^{\mu\nu},
\end{align*}
where $\rho$ is the energy density of the fluid, $p$ is the pressure, and $u^{\mu}$ is the fluid four-velocity. As discussed in the introduction, we assume the density and pressure are related by a linear equation of state
\begin{align*}
    p = K\rho, \quad K \in [0,1]
\end{align*}
where $K = c_{s}^{2}$ is the square of the fluid sound speed. In terms of our orthonormal frame, the 3+1 decomposition of the fluid stress-energy tensor is given by
\begin{align}
\label{eqn:T00_Fluid_3+1}
    T^{00} &= \big(\Gamma^{2}(K+1)-K\big)\rho, \\
\label{eqn:T0A_Fluid_3+1}
    T^{0A} &= \Gamma^{2}(K+1)\rho\nu^{A} = \frac{(1+K)T^{00}}{1+K|\nu|^{2}}\nu^{A}, \\
\label{eqn:TAB_Fluid_3+1}
    T^{AB} &= \Gamma^{2}(K+1)\rho\nu^{A}\nu^{B} + K\rho \eta^{AB} = \frac{(1+K)T^{00}}{1+K|\nu|^{2}}\nu^{A}\nu^{B} + \frac{K(1-|\nu|^{2})T^{00}}{1 + K|\nu|^{2}}\eta^{AB}.
\end{align}
The spatial fluid velocity $\nu^{A}$ and Lorentz factor $\Gamma$ are defined by
\begin{align*}
    u^{a} = \Gamma(n^{a} + \nu^{a}), \quad |\nu|^{2} = \delta_{AB}\nu^{A}\nu^{B}, \quad \Gamma^{2} = \frac{1}{1-|\nu|^{2}}, \quad |\nu| < 1,  
\end{align*}
where $n = e_{0}$ is the normal vector of the foliation and $n^{a}\nu_{a} = \nu_{0} = 0$. Previous numerical experience \cite{Marshall:2026} has shown that evolving the primitive variables $(\rho,\nu^{A})$ directly instead of using the `balance law' form of the matter equations \eqref{eqn:T_00_U1_Frame_Evo}-\eqref{eqn:T_03_U1_Frame_Evo} can result in better numerical stability and accuracy. Thus, we will also derive evolution equations for the primitive variables. First, using \eqref{eqn:T0A_Fluid_3+1}, we observe that
\begin{align*}
    e_{0}(T^{0A}\tensor{T}{^0_A}) = \frac{2(1+K)^{2}|\nu|^{2}T^{00}}{(1+K|\nu|^{2})^{2}}e_{0}(T^{00}) + \frac{(1+K)^{2}(1-K|\nu|^{2})(T^{00})^{2}}{(1+K|\nu|^{2})^{3}}e_{0}(|\nu|^{2})
\end{align*}
Re-arranging the above leads to an evolution equation for $|\nu|^{2}$\footnote{Note there is a typo in equation (2.114) of \cite{Marshall:2026}; a factor of $(1+K|\nu|^{2})$ is missing from the numerator on the right hand side. However, this is factor is present in equation (2.116) of that paper and all future equations, so the final system of equations for the primitive variables is still correct.},
\begin{align}
\label{eqn:nu_norm_evo}
    e_{0}(|\nu|^{2}) = \frac{(1+K|\nu|^{2})^{2}}{(1+K)T^{00}(1-K|\nu|^{2})}\Big(2\nu_{A}e_{0}(T^{0A}) - \frac{2|\nu|^{2}(1+K)}{(1+K|\nu|^{2})}e_{0}(T^{00}) \Big).
\end{align}
Next, using \eqref{eqn:T0A_Fluid_3+1}, we find 
\begin{align*}
    e_{0}(\nu^{A}) = \frac{1+K|\nu|^{2}}{(1+K)T^{00}}e_{0}(T^{0A}) - \frac{\nu^{A}}{T^{00}}e_{0}(T^{00}) + \frac{K\nu^{A}}{1+K|\nu|^{2}}e_{0}(|\nu|^{2}).
\end{align*}
Substituting \eqref{eqn:nu_norm_evo} into the above then yields
\begin{align}
\label{eqn:nu_A_evo}
    e_{0}(\nu^{A}) = \frac{1+K|\nu|^{2}}{(1+K)(1-K|\nu|^{2})T^{00}}\Bigg[\Big((1-K|\nu|^{2})\delta^{A}_{C} + 2K\nu^{A}\nu_{C}\Big)e_{0}(T^{0C}) - (K+1)\nu^{A}e_{0}(T^{00})\Bigg].
\end{align}
Now, by systematically replacing the stress-energy terms in \eqref{eqn:T_00_U1_Frame_Evo} and \eqref{eqn:nu_A_evo} and defining the \textit{modified energy density},
\begin{align}
    \label{eqn:zeta_variable_defn}
    \zeta := \log(T^{00}),
\end{align}
we obtain the following evolution equations,
\begin{align}
\label{eqn:Log_T00_evo}
&e_{0}(\zeta) = -\frac{(1+K)\nu_{1}}{1+K|\nu|^{2}}e_{1}(\zeta) - \frac{(1+K)(1+K|\nu|^{2} - 2K\nu_{1}^{2})}{(1+K|\nu|^{2})^{2}}e_{1}(\nu_{1}) \nonumber \\
&+ \frac{2K(1+K)\nu_{1}\nu_{2}}{(1+K|\nu|^{2})^{2}}e_{1}(\nu_{2}) + \frac{2K(1+K)\nu_{1}\nu_{3}}{(1+K|\nu|^{2})^{2}}e_{1}(\nu_{3}) \nonumber \\
&- \frac{(1+K)\nu_{2}}{1+K|\nu|^{2}}e_{2}(\zeta) + \frac{2K(1+K)\nu_{1}\nu_{2}}{(1+K|\nu|^{2})^{2}}e_{2}(\nu_{1}) \nonumber \\
&+ \frac{(1+K)\big(-1+K|\nu|^{2}-2K(\nu_{1}^{2}+\nu_{3}^{2})\big)}{(1+K|\nu|^{2})^{2}}e_{2}(\nu_{2}) + \frac{2K(1+K)\nu_{2}\nu_{3}}{(1+K|\nu|^{2})^{2}}e_{2}(\nu_{3}) \nonumber \\
&+ S_{0}, \\
\label{eqn:nu1_evo}
&e_{0}\big(\nu^{1}\big) = \frac{K(1-|\nu|^{2})(1-K|\nu|^{2} + (K-1)\nu_{1}^{2})}{(1+K)(-1+K|\nu|^{2})}e_{1}(\zeta) + \frac{\nu_{1}(1-3K + K(1+K)|\nu|^{2} - 2(K-1)K\nu_{1}^{2})}{-1+K^{2}|\nu|^{4}}e_{1}(\nu_{1}) \nonumber \\
&+ \frac{2K(-1+K|\nu|^{2} - (K-1)\nu_{1}^{2})\nu_{2}}{-1+K^{2}|\nu|^{4}}e_{1}(\nu_{2}) + \frac{2K(-1+K|\nu|^{2} - (K-1)\nu_{1}^{2})\nu_{3}}{-1+K^{2}|\nu|^{4}}e_{1}(\nu_{3}) \nonumber \\
&+ \frac{(K-1)K(1-|\nu|^{2})\nu_{1}\nu_{2}}{(1+K)(-1+K|\nu|^{2})}e_{2}(\zeta) - \frac{(-1+K^{2}|\nu|^{4} + 2(-1+K)K\nu_{1}^{2})\nu_{2}}{-1+K^{2}|\nu|^{4}}e_{2}(\nu_{1}) \nonumber \\
&+ \frac{K\nu_{1}\big((-1+|\nu|^{2})(1+K|\nu|^{2})-2(K-1)\nu_{2}^{2}\big)}{-1+K^{2}|\nu|^{4}}e_{2}(\nu_{2}) - \frac{2(K-1)K\nu_{1}\nu_{2}\nu_{3}}{-1+K^{2}|\nu|^{4}}e_{2}(\nu_{3}) \nonumber \\
&+ S_{1}, \\
\label{eqn:nu2_evo}
&e_{0}\big(\nu^{2}\big) = \frac{(K-1)K(1-|\nu|^{2})\nu_{1}\nu_{2}}{(1+K)(-1+K|\nu|^{2})}e_{1}(\zeta) + \frac{K\big((-1+|\nu|^{2})(1+K|\nu|^{2})-2(K-1)\nu_{1}^{2}\big)\nu_{2}}{-1+K^{2}|\nu|^{4}}e_{1}(\nu_{1}) \nonumber \\
&- \frac{\nu_{1}(-1+K^{2}|\nu|^{4}+2(K-1)K\nu_{2}^{2})}{-1+K^{2}|\nu|^{4}}e_{1}(\nu_{2}) - \frac{2(K-1)K\nu_{1}\nu_{2}\nu_{3}}{-1+K^{2}|\nu|^{4}}e_{1}(\nu_{3}) \nonumber \\
&+ \frac{K(-1+|\nu|^{2})(-1+K|\nu|^{2} - (K-1)\nu_{2}^{2})}{(1+K)(-1+K|\nu|^{2})}e_{2}(\zeta) + \frac{2K\nu_{1}(-1+K|\nu|^{2}-(K-1)\nu_{2}^{2})}{-1+K^{2}|\nu|^{4}}e_{2}(\nu_{1}) \nonumber \\
&+ \frac{\nu_{2}(1-3K+K(1+K)|\nu|^{2} - 2(K-1)K\nu_{2}^{2})}{-1+K^{2}|\nu|^{4}}e_{2}(\nu_{2}) + \frac{2K(-1+K|\nu|^{2} - (K-1)\nu_{2}^{2})\nu_{3}}{-1+K^{2}|\nu|^{4}}e_{2}(\nu_{3}) \nonumber \\
&+ S_{2}, \\
\label{eqn:nu3_evo}
&e_{0}\big(\nu^{3}\big) = \frac{(K-1)K(1-|\nu|^{2})\nu_{1}\nu_{3}}{(1+K)(-1+K|\nu|^{2})}e_{1}(\zeta) + \frac{\big(-1-3(K-1)|\nu|^{2} + K|\nu|^{4} + 2(K-1)(\nu_{2}^{2} + \nu_{3}^{2})\big)K\nu_{3}}{-1+K^{2}|\nu|^{4}}e_{1}(\nu_{1}) \nonumber \\
&- \frac{2(K-1)K\nu_{1}\nu_{2}\nu_{3}}{-1+K^{2}|\nu|^{4}}e_{1}(\nu_{2}) - \frac{\nu_{1}(-1+K^{2}|\nu|^{4} + 2(K-1)K\nu_{3}^{2})}{-1+K^{2}|\nu|^{4}}e_{1}(\nu_{3}) \nonumber \\
&+\frac{(K-1)K(1-|\nu|^{2})\nu_{2}\nu_{3}}{(1+K)(-1+K|\nu|^{2})}e_{2}(\zeta) - \frac{2(K-1)K\nu_{1}\nu_{2}\nu_{3}}{-1+K^{2}|\nu|^{4}}e_{2}(\nu_{1}) \nonumber \\
&+ \frac{K\big((-1+|\nu|^{2})(1+K|\nu|^{2}) - 2(K-1)\nu_{2}^{2})\nu_{3}}{-1+K^{2}|\nu|^{4}}e_{2}(\nu_{2}) - \frac{\nu_{2}(-1+K^{2}|\nu|^{4} + 2(K-1)K\nu_{3}^{2})}{-1+K^{2}|\nu|^{4}}e_{2}(\nu_{3}) \nonumber \\
&+ S_{3}, 
\end{align}
where the source terms $S_{a}$ are given by
\begin{align*}
    S_{0} &:= \frac{2\sqrt{3}(1+K)\hat{a}_{1}\nu_{1}}{1+K|\nu|^{2}} + \frac{2\sqrt{3}(1+K)\hat{a}_{2}\nu_{2}}{1+K|\nu|^{2}} + \frac{(1+K)\big(H(3+|\nu|^{2}) + 2\dot{u}_{1}\nu_{1} + 2\dot{u}_{2}\nu_{2}\big)}{1+K|\nu|^{2}} \nonumber \\
    &- \frac{\sqrt{3}(1+K)(|\nu|^{2} - \nu_{1}^{2} - 2\nu_{3}^{2})\sigma_{-}}{(1+K|\nu|^{2})} - \frac{(1+K)(|\nu|^{2} - 3\nu_{1}^{2})\sigma_{+}}{1+K|\nu|^{2}} - \frac{2\sqrt{3}(1+K)\nu_{1}\nu_{3}\sigma_{2}}{1+K|\nu|^{2}} \nonumber \\
    &- \frac{2\sqrt{3}(1+K)\nu_{1}\nu_{2}\sigma_{3}}{1+K|\nu|^{2}} - \frac{2\sqrt{3}(1+K)\nu_{2}\nu_{3}\sigma_{\times}}{1+K|\nu|^{2}} , \\
    S_{1} &:= \frac{(-1+3K)H(|\nu|^{2}-1)\nu_{1}}{-1+K|\nu|^{2}} + \dot{u}_{1}(-1+\nu_{1}^{2}) + \sqrt{3}\hat{a}_{1}\Big(-|\nu|^{2} - \frac{(1-2K+K|\nu|^{2})\nu_{1}^{2}}{-1+K|\nu|^{2}}\Big) - \sqrt{3}\hat{n}\nu_{1}\nu_{2} \nonumber \\
    &- \frac{\sqrt{3}\hat{a}_{2}(1-2K+K|\nu|^{2})\nu_{1}\nu_{2}}{-1+K|\nu|^{2}} + \frac{\sqrt{3}(-1+K)\nu_{1}(|\nu|^{2} - \nu_{1}^{2}-2\nu_{3}^{2})\sigma_{-}}{-1+K|\nu|^{2}} - \frac{\nu_{1}\big(2+(1-3K)|\nu|^{2} + 3(K-1)\nu_{1}^{2}\big)\sigma_{+}}{-1+K|\nu|^{2}} \nonumber \\
    &+ \sqrt{3}\big(-|\nu|^{2} + \nu_{1}^{2} + 2\nu_{3}^{2}\big)n_{\times} - \frac{2\sqrt{3}\big(-1+K|\nu|^{2} - (K-1)\nu_{1}^{2}\big)\nu_{3}\sigma_{2}}{-1+K|\nu|^{2}} + \frac{\sqrt{3}\big(1-K|\nu|^{2} + 2(K-1)\nu_{1}^{2}\big)\nu_{2}\sigma_{3}}{-1+K|\nu|^{2}} \nonumber \\
    &+ \frac{2\sqrt{3}(K-1)\nu_{1}\nu_{2}\nu_{3}\sigma_{\times}}{-1+K|\nu|^{2}} + \nu_{2}(\dot{u}_{2}\nu_{1} - 2\sqrt{3}\nu_{3}n_{+} - \Omega_{3}), \\
    S_{2} &:= \frac{(3K-1)H(-1+|\nu|^{2})\nu_{2}}{-1+K|\nu|^{2}} - \frac{\sqrt{3}\hat{a}_{1}(1-2K+K|\nu|^{2})\nu_{1}\nu_{2}}{-1+K|\nu|^{2}} + \dot{u}_{2}(-1+ \nu_{2}^{2}) + \sqrt{3}\hat{n}(-|\nu|^{2} + 2\nu_{1}^{2} + \nu_{2}^{2}) \nonumber \\
    &+ \sqrt{3}\hat{a}_{2}\Big(-|\nu|^{2} - \frac{(1-2K+K|\nu|^{2})\nu_{2}^{2}}{-1+K|\nu|^{2}}\Big) + \frac{\sqrt{3}\nu_{2}\big(1+(1-2K)|\nu|^{2} + (K-1)\nu_{1}^{2} + 2(K-1)\nu_{2}^{2}\big)\sigma_{-}}{-1+K|\nu|^{2}} \nonumber \\
    &- \frac{\big(-1+|\nu|^{2}+3(K-1)\nu_{1}^{2}\big)\nu_{2}\sigma_{+}}{-1+K|\nu|^{2}} + \sqrt{3}\nu_{1}\nu_{2}n_{\times} + \frac{2\sqrt{3}(K-1)\nu_{1}\nu_{2}\nu_{3}\sigma_{2}}{-1+K|\nu|^{2}} + \frac{\sqrt{3}\nu_{1}(1-K|\nu|^{2} + 2(K-1)\nu_{2}^{2})\sigma_{3}}{-1+K|\nu|^{2}} \nonumber \\
    &- \frac{2\sqrt{3}\big(-1 + K|\nu|^{2} - (K-1)\nu_{2}^{2})\nu_{3}\sigma_{\times}}{-1+K|\nu|^{2}} + \nu_{1}(\dot{u}_{1}\nu_{2} + 2\sqrt{3}\nu_{3}n_{+} + \Omega_{3}) , \\
    S_{3} &:= \frac{-\sqrt{3}\hat{a}_{1}(1-2K+K|\nu|^{2})\nu_{1}\nu_{3}}{-1+K|\nu|^{2}} + \sqrt{3}\hat{n}\nu_{2}\nu_{3} - \frac{\sqrt{3}\hat{a}_{2}(1-2K+K|\nu|^{2})\nu_{2}\nu_{3}}{-1+K|\nu|^{2}} \nonumber \\
    &+ \frac{\Big((-1+3K)H(-1+|\nu|^{2}) + \dot{u}_{1}(-1+K|\nu|^{2})\nu_{1} + \dot{u}_{2}(-1+K|\nu|^{2})\nu_{2}\Big)\nu_{3}}{-1 + K|\nu|^{2}} \nonumber \\
    &- \frac{\sqrt{3}\nu_{3}(1 + (1-2K)|\nu|^{2} + (K-1)\nu_{1}^{2} + 2(K-1)\nu_{3}^{2})\sigma_{-}}{-1+K|\nu|^{2}} - \frac{\big(-1 + |\nu|^{2} + 3(K-1)\nu_{1}^{2}\big)\nu_{3}\sigma_{+}}{-1+K|\nu|^{2}} - \sqrt{3}\nu_{1}\nu_{3}n_{\times} \nonumber \\
    &+ \frac{2\sqrt{3}(K-1)\nu_{1}\nu_{3}^{2}\sigma_{2}}{-1+K|\nu|^{2}} + \frac{2\sqrt{3}(K-1)\nu_{1}\nu_{2}\nu_{3}\sigma_{3}}{-1+K|\nu|^{2}} + \frac{2\sqrt{3}(K-1)\nu_{2}\nu_{3}^{2}\sigma_{\times}}{-1 + K|\nu|^{2}}.
\end{align*}

\subsection{Characteristic Structure of the Euler Equations}
The Euler equations \eqref{eqn:Log_T00_evo}-\eqref{eqn:nu3_evo} can be schematically written as a quasilinear system of the form
\begin{align*}
    \del_{t}(U) + \alpha(B^{1}e_{1}^{1} + B^{2}e_{2}^{1})
    \del_{x^{1}}(U) + \alpha(B^{1}e_{1}^{1} + B^{2}e_{2}^{1})\del_{x^{2}}(U) = \alpha S,
\end{align*}
where $U := (\zeta, \nu^{A})^{\text{T}}$. The characteristic speeds along each coordinate direction are given by the eigenvalues of the matrices in front of the derivatives, where we have assumed zero shift ($\beta^{\Omega}=0$),
\begin{align*}
    \lambda_{0,x^{1}} &= \alpha(e_{1}^{1}\nu_{1} + e_{2}^{1}\nu_{2}) , \\
    \lambda_{\pm, x^{1}} &= \frac{\alpha}{-1+K|\nu|^{2}}\Bigg((-1+K)(e_{1}^{1}\nu_{1} + e_{2}^{1}\nu_{2}) \pm \bigg[K(|\nu|^{2}-1)\Big((e_{1}^{1})^{2}(-1+K|\nu|^{2} - (K-1)\nu_{1}^{2}) \nonumber \\
    &+ (e_{2}^{1})^{2}(-1+K|\nu|^{2} - (K-1)\nu_{2}^{2}) - 2e_{1}^{1}e_{2}^{1}(K-1)\nu_{1}\nu_{2}\Big)\bigg]^{\frac{1}{2}}\Bigg), \\
    \lambda_{0,x^{2}} &= \alpha(e_{1}^{2}\nu_{1} + e_{2}^{2}\nu_{2}), \\
    \lambda_{\pm, x^{2}} &= \frac{\alpha}{-1+K|\nu|^{2}}\Bigg((-1+K)(e_{1}^{2}\nu_{1} + e_{2}^{2}\nu_{2}) \pm \bigg[K(|\nu|^{2}-1)\Big((e_{1}^{2})^{2}(-1+K|\nu|^{2} - (K-1)\nu_{1}^{2}) \nonumber \\
    &+ (e_{2}^{2})^{2}(-1+K|\nu|^{2} - (K-1)\nu_{2}^{2}) - 2e_{1}^{2}e_{2}^{2}(K-1)\nu_{1}\nu_{2}\Big)\bigg]^{\frac{1}{2}}\Bigg).
\end{align*}

\section{Numerical Implementation}
\label{sec:Numerical_Implementation}
We numerically evolve the equations \eqref{eqn:sigma_plus_evo}-\eqref{eqn:e33_U1_evo} and \eqref{eqn:Log_T00_evo}-\eqref{eqn:nu3_evo} on spacelike hypersurfaces diffeomorphic to\footnote{Note that this implies $\ell = 0$ in the constraint \eqref{eqn:C4_U1}.} $\mathbb{T}^{3}$,
\begin{align*}
    \{t\} \times \Sigma_{t} \cong \mathbb{T}^{3}.
\end{align*}
and use the following simple gauge conditions
\begin{align*}
    \alpha = 1, \quad \beta^{\Omega} = 0, \quad \dot{u}_{\mathcal{a}} = 0, \quad \Omega_{3} = 0.
\end{align*}
Our computational domain is given by $(x^{1},x^{2}) \in [0,2\pi)\times [0,2\pi)$ with periodic boundary conditions which is discretised into a uniform cell centred grid with $N^{2}$ total cells\footnote{That is to say, $\Delta x^{1} = \Delta x^{2}$.}. 
Spatial derivatives are discretised using centred, fourth-order finite difference stencils and we integrate in time using a fourth-order Runge-Kutta method. The time step is taken to be 
\begin{align*}
    \Delta t = \min \left\{ C \Delta x, C\frac{\Delta x}{|\lambda_{\max}|} \right\}
\end{align*}
where $C$ is the CFL constant and $|\lambda_{\max}|$ is the largest value of the characteristic speeds in the domain. The resulting scheme is formally fourth-order accurate in time and space for smooth solutions. 

\begin{rem}
    The spatial fluid velocity must satisfy the condition $|\nu|<1$ to ensure the Euler equations are well-posed. However, numerical error can lead to this constraint being violated as the fluid approaches extreme tilt, which eventually results in the code crashing. To resolve this issue, we enforce the fluid constraint using a limiter at each Runge-Kutta timestep, see \cite{Marshall:2026}*{\S 4.2} for details.
\end{rem}

\subsection{Initial Data} 
\label{sec:InitialData}
Ultimately, we are interested in studying the onset of the Rendall instability from small perturbations of spatially homogeneous and orthogonal ($|\nu|=0$) initial data with positive cosmological constant $\Lambda >0$ and super-radiative equations of state $\frac{1}{3}<K<1$. To this end, we study initial data sets which are characterised by a single parameter $0 < \eps < 1$ such that $\eps = 0$ corresponds to spatially homogeneous initial data. That is to say, $\eps$ parameterises the size of the inhomogeneous perturbation with respect to appropriate Sobolev norms. \newline \par

We specify our initial data using a variant of the conformal method \cites{York:1972,Lichnerowicz:1944}. In particular, our approach closely follows previous work by Garfinkle and collaborators \cites{CurtisGarfinkle:2005,Garfinkle:2004,Garfinkle:2007,Ijjas_et_al:2021,GarfinklePretorius:2020,Garfinkle_et_al:2023,Ijjas_et_al:2024}. To begin, we assume our initial spatial slice is conformally flat
\begin{align*}
    h_{\Omega\Gamma}|_{t_{0}} = \psi^{4}\delta_{\Omega\Gamma}
\end{align*}
We then fix the spatial frame components, $n_{AB}$, $\hat{a}_{\mathcal{a}}$ as follows\footnote{By setting $n_{33}=0$, we trivially satisfy the topological constraint \eqref{eqn:C4_U1}.}
\begin{align}
\label{eqn:spatial_frame_initial}
    e_{A}^{\Omega} &= \psi^{-2}\delta_{A}^{\Omega}, \\
\label{eqn:n_AB_initial}
    n_{AB} &= 0, \\
\label{eqn:a_hat_initial}
    \hat{a}_{\mathcal{a}} &= \frac{-2}{\sqrt{3}}\psi^{-3}\del_{\mathcal{a}}(\psi),
\end{align}
where the last equation follows from inserting \eqref{eqn:spatial_frame_initial}-\eqref{eqn:n_AB_initial} into the constraints \eqref{eqn:C1_112_U1}-\eqref{eqn:C1_122_U1}. Additionally, we assume that
\begin{equation*}
\begin{gathered}
    \sigma_{+} = \psi^{-6}Z_{+}, \quad \sigma_{-} = \psi^{-6}Z_{-}, \quad \sigma_{2} = \psi^{-6}Z_{2}, \quad \sigma_{\times} = \sigma_{3} = 0, \\
    \nu^{A} = 0, \quad H = 1, \quad \rho = 1 + \eps\sin(x^{1})\sin(x^{2}), \quad \Lambda = 1.
\end{gathered}
\end{equation*}
Under these assumptions, the Momentum and Hamiltonian constraints \eqref{eqn:CM1_U1}-\eqref{eqn:CH_U1} reduce to ODEs for $Z_{+}$, $Z_{-}$, and $Z_{2}$ and an elliptic equation for $\psi$
\begin{gather}
\label{eqn:constraints_Z}
\del_{x^{1}}(Z_{+}) = 0, \quad \del_{x^{2}}(Z_{+} +\sqrt{3}Z_{-}) = 0, \quad \del_{x^{1}}(Z_{2}) = 0, \\
\label{eqn:elliptic_psi}
\del_{x^{1}}^{2}(\psi) + \del_{x^{2}}^{2}(\psi) - \frac{3}{4}\psi^{5}H^{2} + \frac{1}{4}\psi^{5}(\Lambda + T_{00}^{\text{fl}}) + \frac{3}{4}\psi^{-7}(Z_{+}^{2} + Z_{-}^{2} + Z_{2}^{2}) = 0,
\end{gather}
where we have decomposed the stress-energy tensor into the cosmological constant and fluid parts
\begin{align*}
    T_{ab} &= -\Lambda g_{ab} + T_{ab}^{\text{fl}}, \\
    T_{ab}^{\text{fl}} &= (\rho + p)u_{a}u_{b} + pg_{ab}.
\end{align*}
We choose the following simple solution for \eqref{eqn:constraints_Z}
\begin{align*}
    Z_{+} = 1, \quad Z_{-} = 1 + \eps\sin(x^{1}), \quad Z_{2} = 1 + \eps\sin(x^{2}).
\end{align*}
The elliptic equation \eqref{eqn:elliptic_psi} is then solved numerically using Newton-Successive over-relaxation and fourth order finite difference stencils, see Appendix \ref{app:Numerical_Constraint_Solver} for details. 

\begin{rem}
    Since the initial condition $\nu^{3}=0$ is preserved by the evolution equations \eqref{eqn:Log_T00_evo}-\eqref{eqn:nu3_evo}, our numerical solutions always have $\nu^{3}=0$. 
\end{rem}

\subsection{Convergence Tests}
We have run convergence tests using resolutions of $N^{2} \in \{50^{2},100^{2},200^{2},400^{2}\}$. The numerical error is estimated by taking the difference of the appropriately restricted $400^{2}$ evolution with the lower resolution runs. Importantly, since we are using a cell centred grid, the grid points at different resolutions do not align. Hence, we must use (at least) fourth order interpolation to restrict the solution at the high resolution grid points to the coarser grid points. For $N_{\text{fine}} = 2N_{\text{coarse}}$, the fourth order restriction operator is given by\footnote{This operator is obtained by sequentially using fourth order Lagrange interpolation along each coordinate direction.}
\begin{align*}
    f_{k,l}^{\text{coarse}} &= \frac{-1}{16}\Big(\frac{-1}{16}f_{i-1,j-1} + \frac{9}{16}f_{i,j-1} + \frac{9}{16}f_{i+1,j-1} - \frac{1}{16}f_{i+2,j-1} \Big) \nonumber \\
    &+ \frac{9}{16}\Big(\frac{-1}{16}f_{i-1,j} + \frac{9}{16}f_{i,j} + \frac{9}{16}f_{i+1,j} - \frac{1}{16}f_{i+2,j} \Big) \nonumber \\
    &+ \frac{9}{16}\Big(\frac{-1}{16}f_{i-1,j+1} + \frac{9}{16}f_{i,j+1} + \frac{9}{16}f_{i+1,j+1} - \frac{1}{16}f_{i+2,j+1} \Big) \nonumber \\
    &- \frac{1}{16}\Big(\frac{-1}{16}f_{i-1,j+2} + \frac{9}{16}f_{i,j+2} + \frac{9}{16}f_{i+1,j+2} - \frac{1}{16}f_{i+2,j+2} \Big),
\end{align*}
where $(x_{i}^{1},x_{j}^{2}) = (x_{k}^{1}-\frac{\Delta x^{\text{coarse}}}{4}, x_{l}^{2}-\frac{\Delta x^{\text{coarse}}}{4})$. Apart from the topological constraint \eqref{eqn:C4_U1}, which is preserved to machine precision cf. Figure \ref{fig:C4_Conservation}, all variables and constraints display close to the expected fourth order convergence with slight degradation at later times during the strongest cascade regime, see Tables \ref{table:zeta_convergence_error_t10}-\ref{table:zeta_convergence_errort_t40} and Figure \ref{fig:Convergence_HamiltonianConstraint}.

\begin{table}[htbp!]
\centering
\begin{tabular}{|l|l|l|l|l|l|l|}
\hline
\multicolumn{1}{|l|}{$N$} &  \multicolumn{1}{l|}{$\|u_{N^{2}} - u_{400^{2}}\|_{L^{2}}$} & \multicolumn{1}{l|}{Order} & \multicolumn{1}{l|}{$\|u_{N^{2}} - u_{400^{2}}\|_{L^{1}}$} & \multicolumn{1}{l|}{Order} & \multicolumn{1}{l|}{$\|u_{N^{2}} - u_{400^{2}}\|_{L^{\infty}}$} & \multicolumn{1}{l|}{Order} \\ \hline
50& $3.14 \times 10^{-4}$ & - &  $1.81\times10^{-4}$ & - &  $1.77\times10^{-4}$ & - \\ 
100 & $2.19 \times 10^{-5}$ & $3.84$  & $1.28 \times 10^{-4}$ & $3.82$ & $1.17 \times 10^{-5}$ & $3.92$\\  
200 & $1.23 \times 10^{-6}$ & $4.15$ & $7.20 \times 10^{-6}$& $4.15$ & $6.70 \times 10^{-7}$& $4.13$\\ 
\hline                                
\end{tabular}
\caption{Numerical error and convergence order for $\zeta$ at $t = 10.0$, $\eps = \frac{1}{2}$, $K = 0.4$.}
\label{table:zeta_convergence_error_t10}
\end{table}

\begin{table}[htbp!]
\centering
\begin{tabular}{|l|l|l|l|l|l|l|}
\hline
\multicolumn{1}{|l|}{$N$} &  \multicolumn{1}{l|}{$\|u_{N^{2}} - u_{400^{2}}\|_{L^{2}}$} & \multicolumn{1}{l|}{Order} & \multicolumn{1}{l|}{$\|u_{N^{2}} - u_{400^{2}}\|_{L^{1}}$} & \multicolumn{1}{l|}{Order} & \multicolumn{1}{l|}{$\|u_{N^{2}} - u_{400^{2}}\|_{L^{\infty}}$} & \multicolumn{1}{l|}{Order} \\ \hline
50& $3.74 \times 10^{-2}$ & - &  $2.34\times10^{-2}$ & - &  $1.95\times10^{-1}$ & - \\ 
100 & $2.32 \times 10^{-3}$ & $4.01$  & $1.61 \times 10^{-3}$ & $3.86$ & $8.78 \times 10^{-3}$ & $4.47$\\  
200 & $1.80 \times 10^{-4}$ & $3.69$ & $1.13 \times 10^{-4}$& $3.83$ & $8.55 \times 10^{-4}$& $3.36$\\ 
\hline                                
\end{tabular}
\caption{Numerical error and convergence order for $\zeta$ at $t = 39.9$, $\eps = \frac{1}{2}$, $K = 0.4$.}
\label{table:zeta_convergence_errort_t40}
\end{table}

\begin{figure}[htbp]
\centering
\subfigure[Subfigure 1 list of figures text][]{
\includegraphics[width=0.45\textwidth]{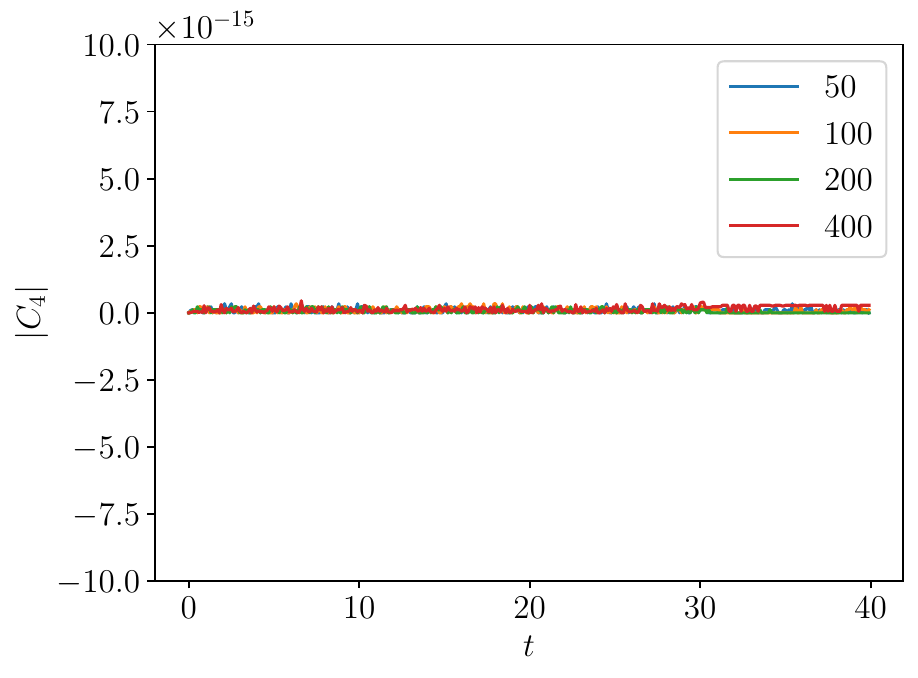}
\label{fig:C4_Conservation}}
\subfigure[Subfigure 2 list of figures text][]{
\includegraphics[width=0.45\textwidth]{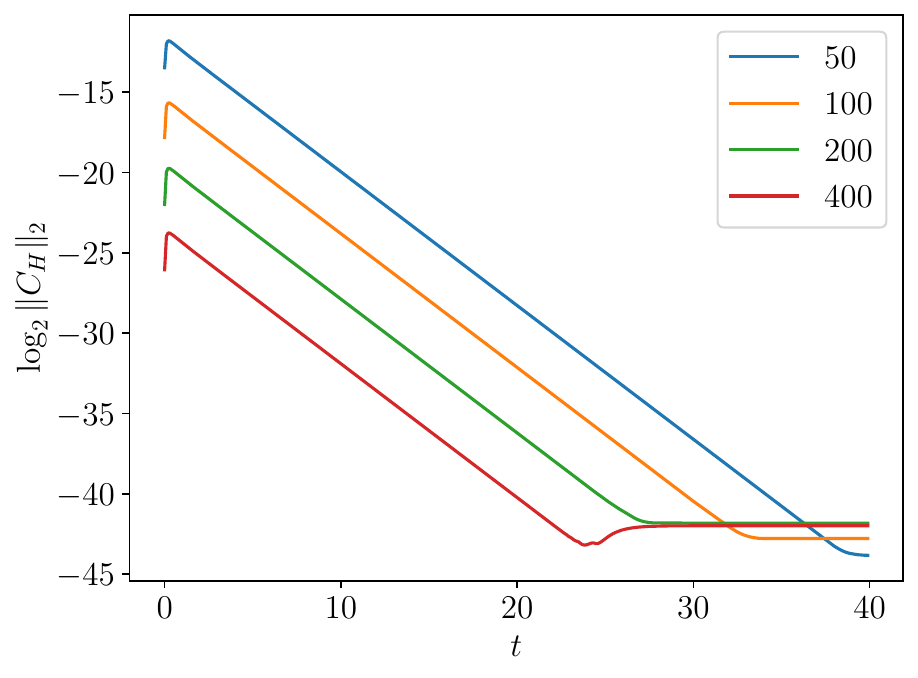}
\label{fig:Convergence_HamiltonianConstraint}}
\caption{(a) Conservation of the topological constraint \eqref{eqn:C4_U1}. (b) $L^{2}$ norm of the Hamiltonian constraint violation. The plateau at late times is due to the floating point precision limit of the finite difference stencils being reached. $\eps = \frac{1}{2}$, $K = 0.4$}
\end{figure}

\section{Numerical Results}
\label{sec:Numerical_Results}
The heuristic explanation for the Rendall instability in the introduction relied on the assumption that the Einstein-Euler equations are well approximated, pointwise, by spatially homogeneous solutions of the field equations towards future timelike infinity. In particular, we expect that our solutions should behave like spatially homogeneous fluids on a fixed de Sitter background. For our variables, this approximation corresponds to the conditions
\begin{align}
\label{eqn:de_Sitter_conditions}
   \{e_{A}^{\Omega}, \hat{a}_{\mathcal{a}}, n_{+}, n_{\times}, \hat{n}, \sigma_{+}, \sigma_{-}, \sigma_{\times}, \sigma_{2}, \sigma_{3}, T^{\text{fl}}_{ab} \} \rightarrow 0, \quad H  \rightarrow \sqrt{\frac{\Lambda}{3}}.
\end{align}
as $t \nearrow \infty$. In particular, $e_{A}^{\Omega} \rightarrow 0$ implies that solutions to the Einstein-Euler equations develop ODE-dominated asymptotics, provided that the derivatives of the gravitational and fluid variables remain suitably small. In all of our simulations, we observe that \eqref{eqn:de_Sitter_conditions} is rapidly satisfied, cf. Figure \ref{fig:decay_plots}, indicating that we reaching the regime where the Rendall instability can be expected to occur. Indeed, in Figure \ref{fig:RendallInstability_Plots}, we show the onset of the Rendall instability in the norm of the density gradient\footnote{Similar behaviour can also be seen for the norm of $\del_{x^{\mathcal{a}}}\zeta$.},
\begin{align*}
    \left|\frac{\del_{x^{\Omega}}\rho}{\rho}\right| = \sqrt{\frac{1}{\rho^{2}}\Big((\del_{x^{1}}\rho)^{2} + (\del_{x^{2}}\rho)^{2}\Big)}.
\end{align*}
In particular, Figure \ref{fig:RendallInstability_Plots} highlights that the spikes in the density gradient coincide with the transition between orthogonal ($|\nu|=0$) and (near) extremely tilted ($|\nu|=1$) states in the spatial fluid velocity, which is consistent with both the heuristic argument presented in the introduction and previous works in one spatial dimension \cites{MarshallOliynyk:2022,Oliynyk:2024,BMO:2023,BMO:2024,ColeyLim:2012,ColeyLim:2013,ColeyLim:2015}. 

\begin{figure}
    \centering
    \includegraphics[width=0.45\textwidth]{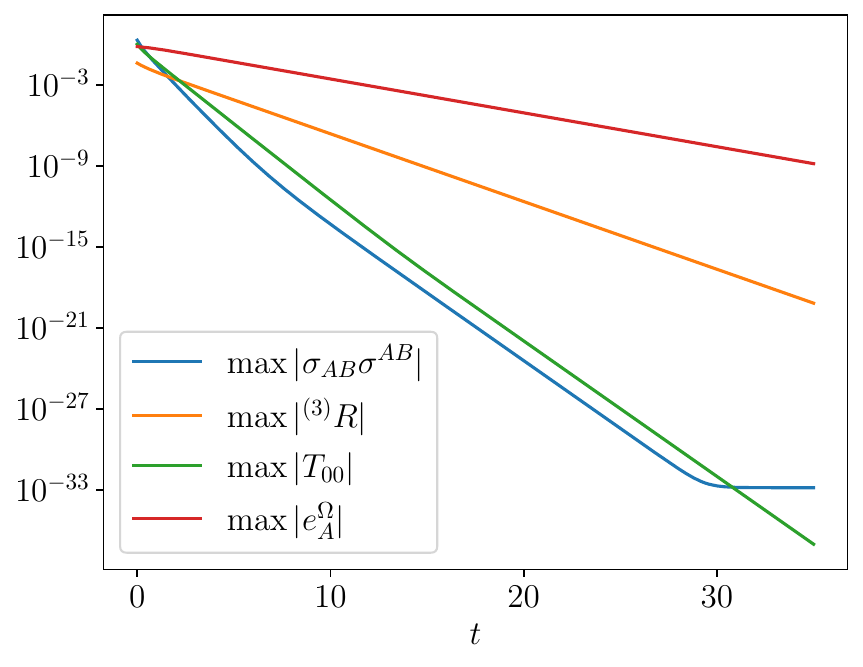}
    \caption{Decay of the spatial Ricci scalar ${}^{(3)}R$, $\sigma_{AB}\sigma^{AB}$, $T_{00}$, and the frame components $e_{A}^{\Omega}$. The maximum at each timestep is taken over all grid points. $N = 800$, $K = 0.5$, $\eps = 0.05$.}
    \label{fig:decay_plots}
\end{figure}

\begin{figure}[htbp]
    \centering
    \subfigure[$t=0$]{
        \begin{minipage}[b]{0.21\textwidth}
            \includegraphics[width=\textwidth]{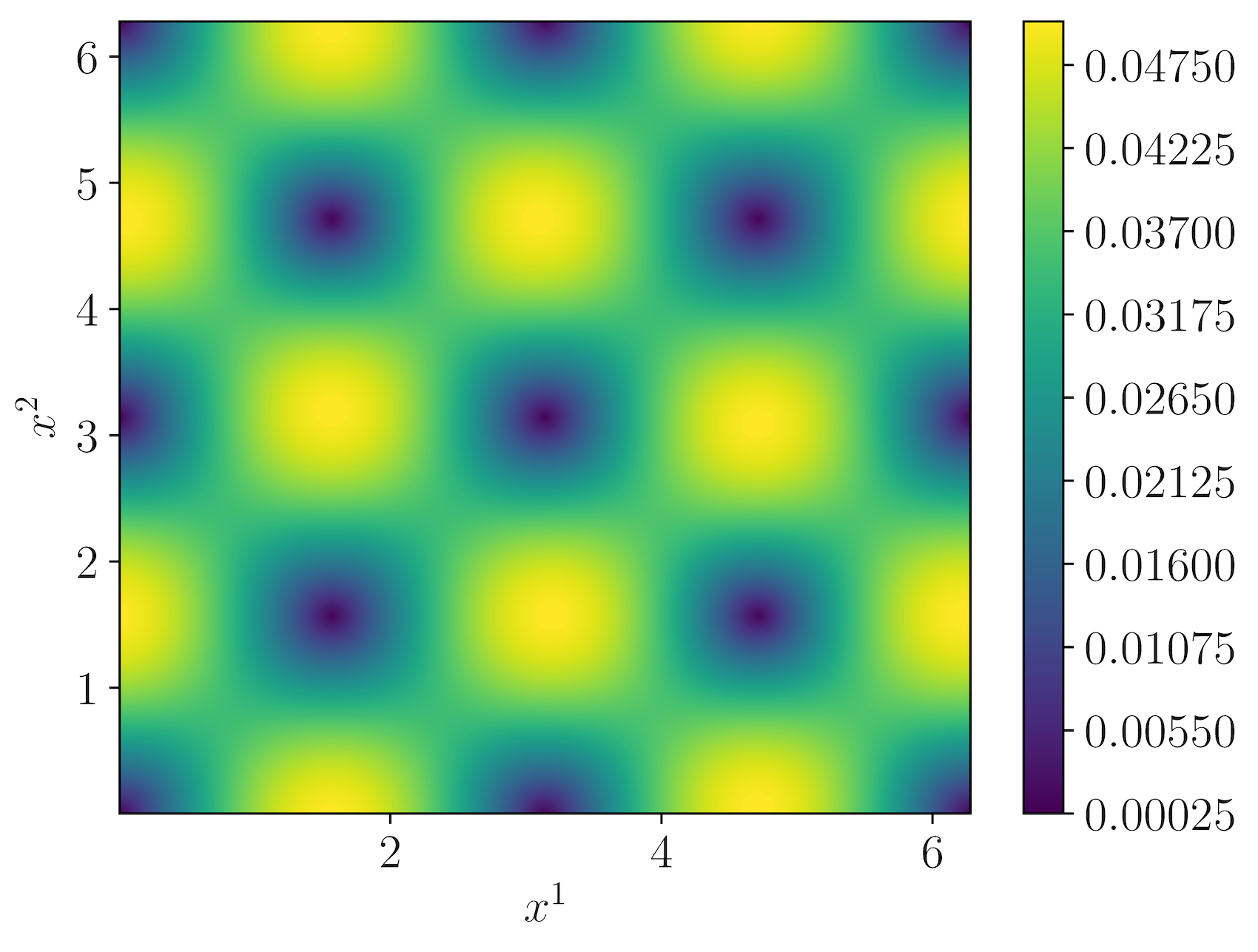}\\[0.5em]
            \includegraphics[width=\textwidth]{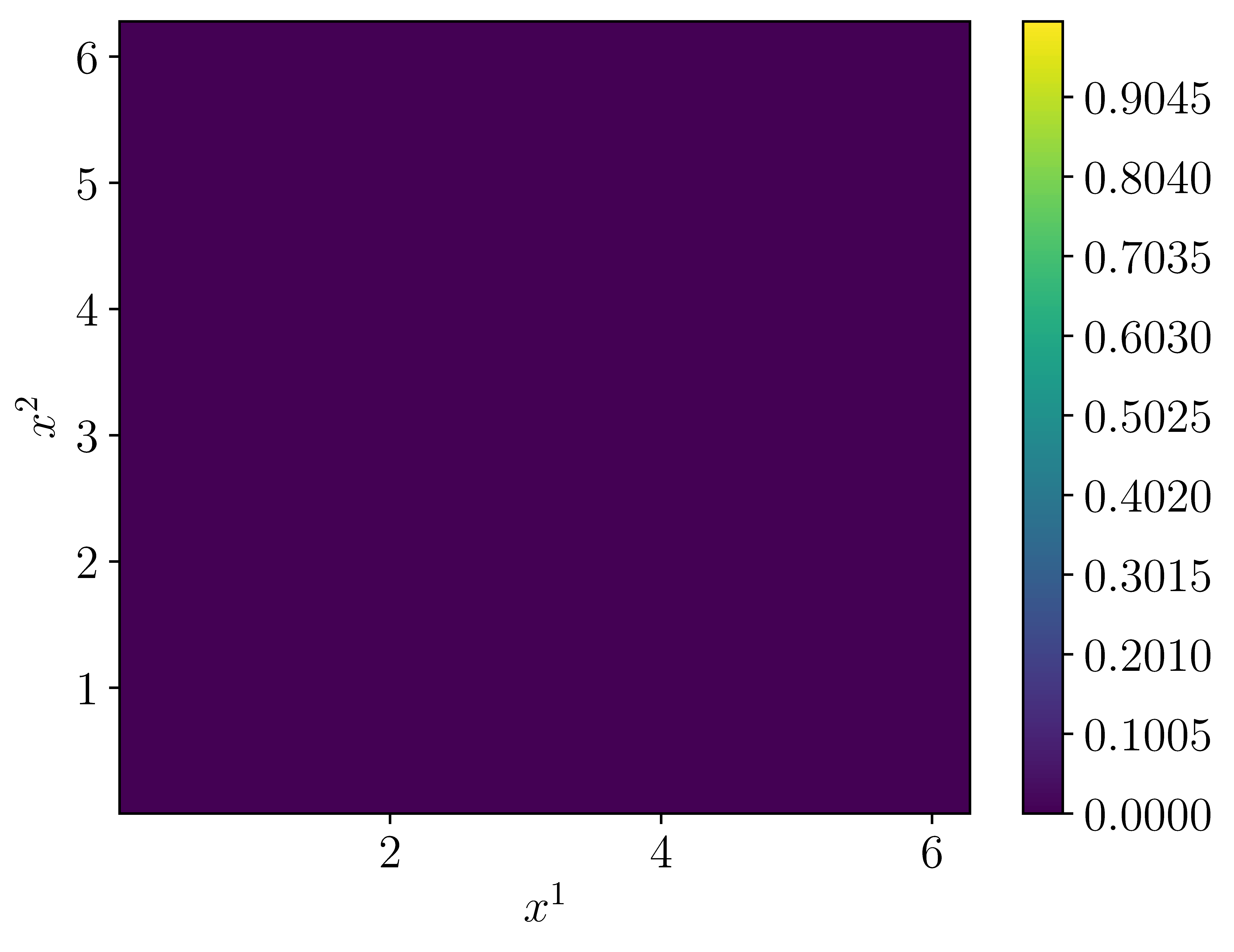}
        \end{minipage}
    }\hspace{0.01\textwidth}
    \subfigure[$t = 18$]{
        \begin{minipage}[b]{0.21\textwidth}
            \includegraphics[width=\textwidth]{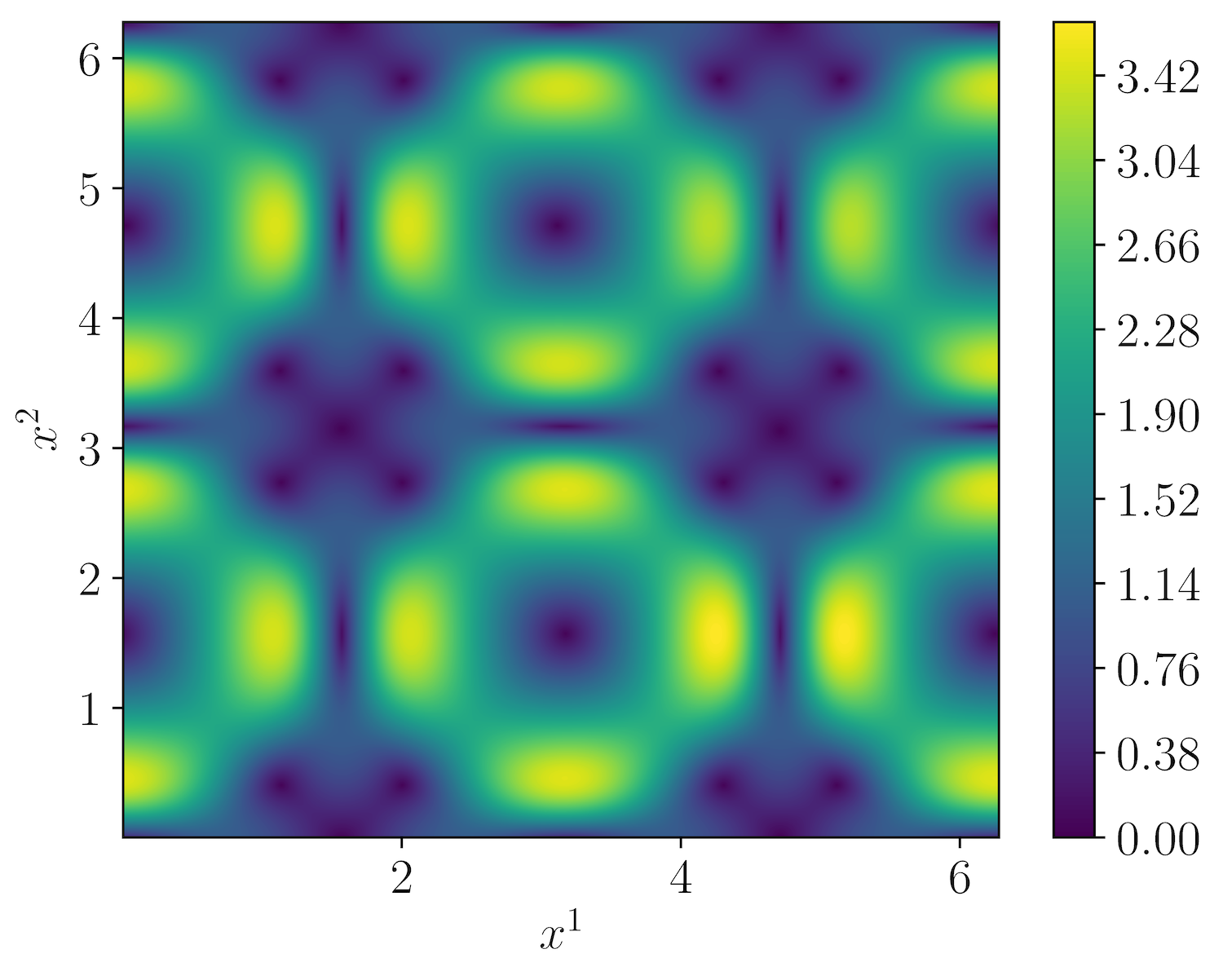}\\[0.5em]
            \includegraphics[width=\textwidth]{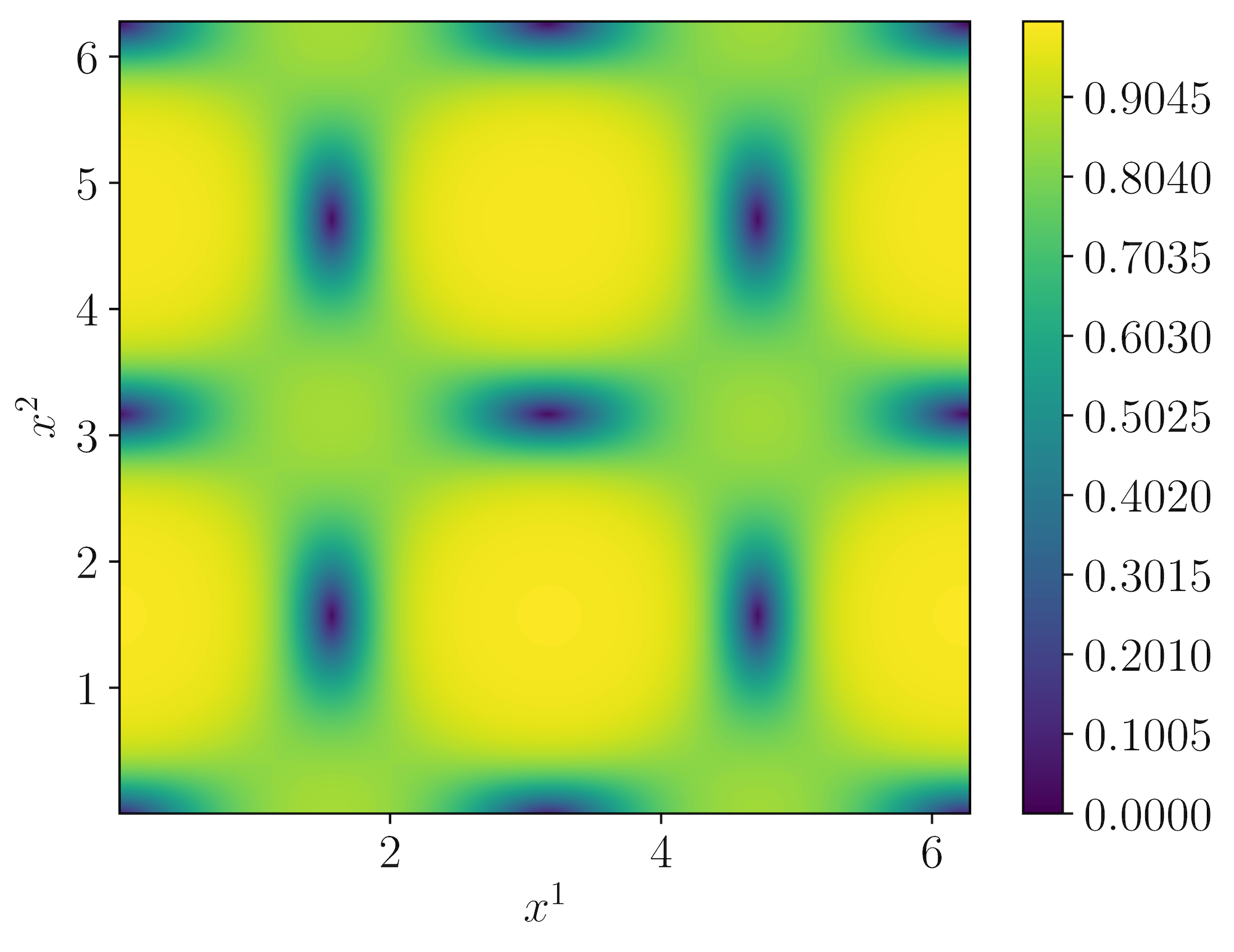}
        \end{minipage}
    }\hspace{0.01\textwidth}
    \subfigure[$t = 25$]{
        \begin{minipage}[b]{0.21\textwidth}
            \includegraphics[width=\textwidth]{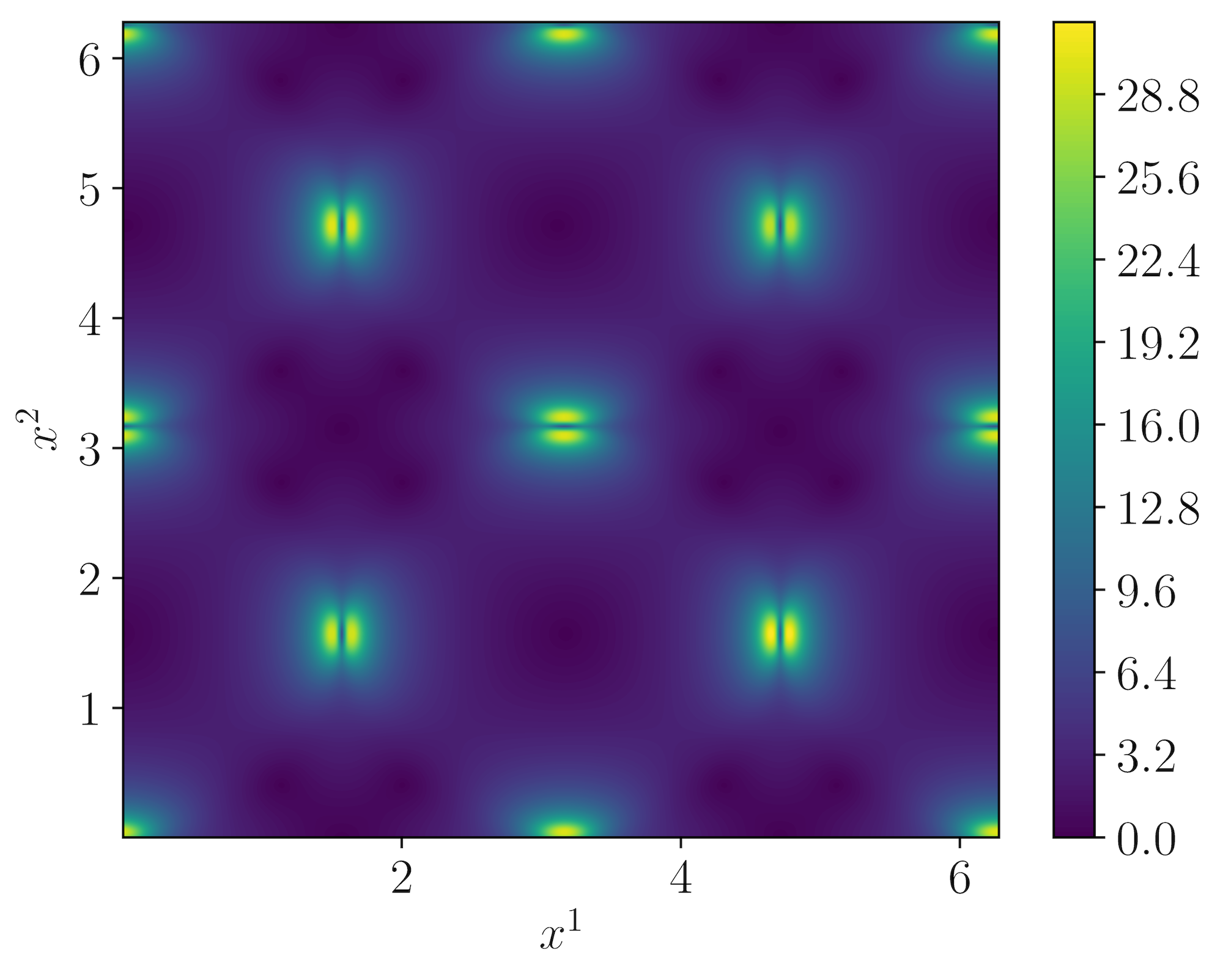}\\[0.5em]
            \includegraphics[width=\textwidth]{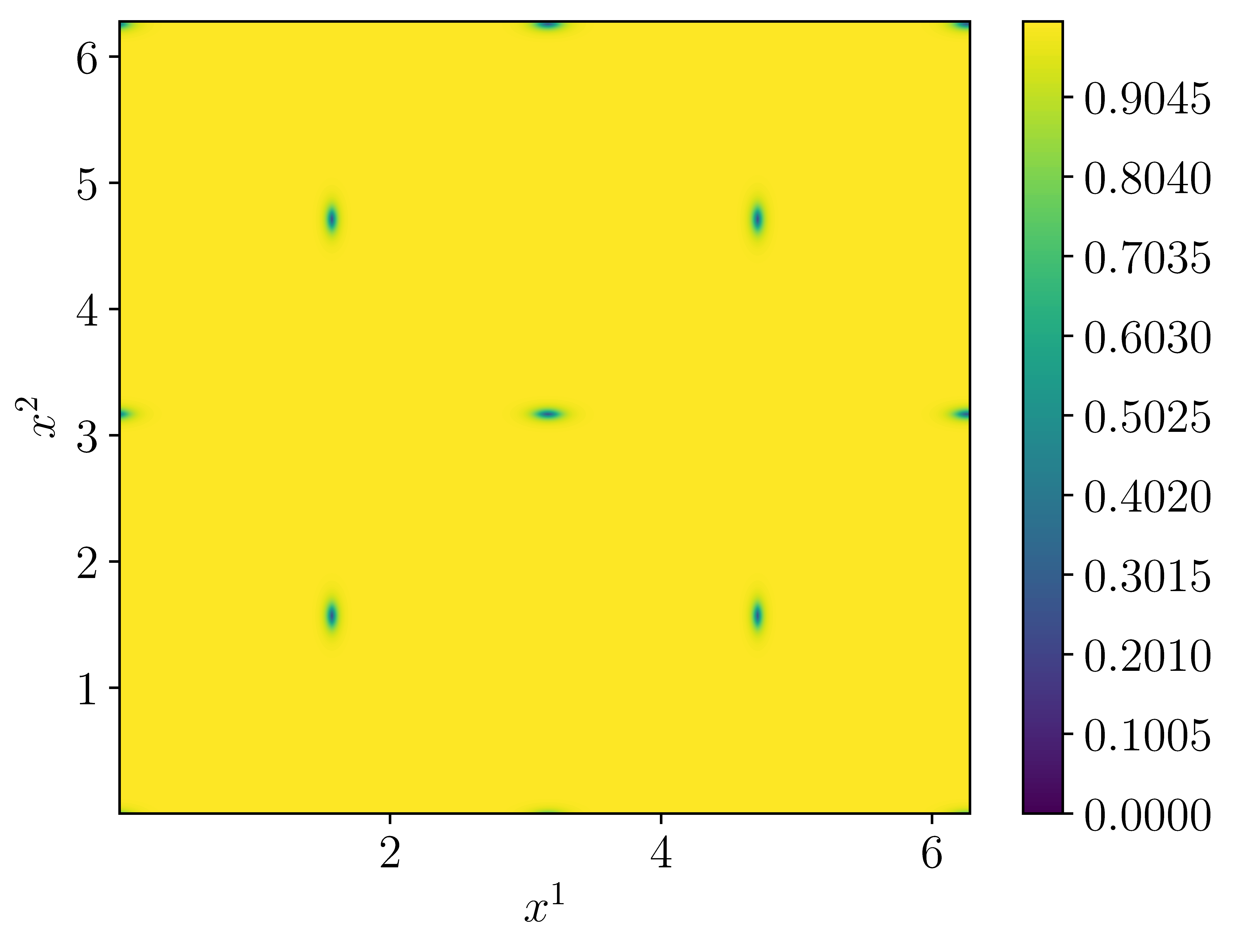}
        \end{minipage}
    }\hspace{0.01\textwidth}
    \subfigure[$t = 28.4$]{
        \begin{minipage}[b]{0.21\textwidth}
            \includegraphics[width=\textwidth]{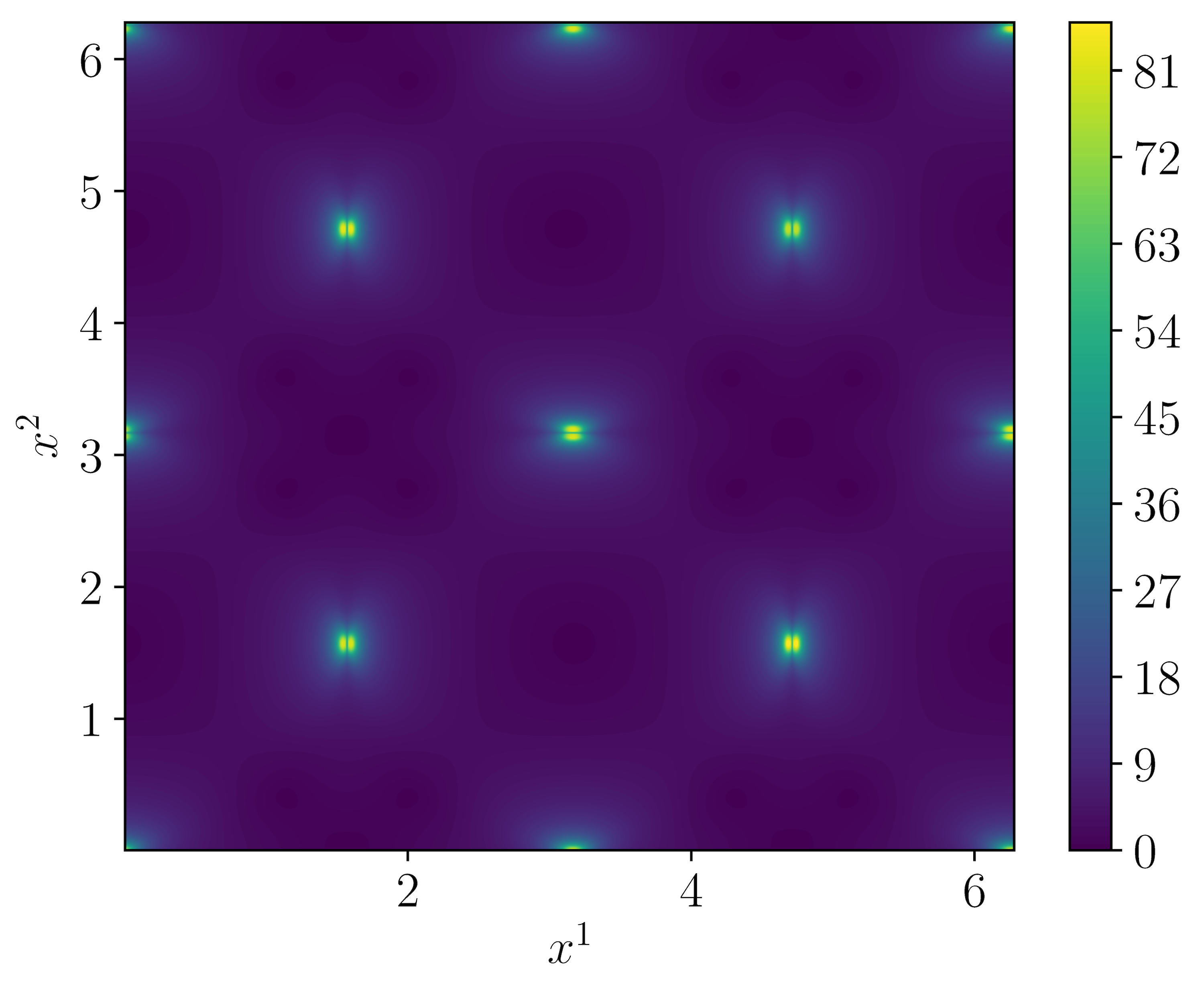}\\[0.5em]
            \includegraphics[width=\textwidth]{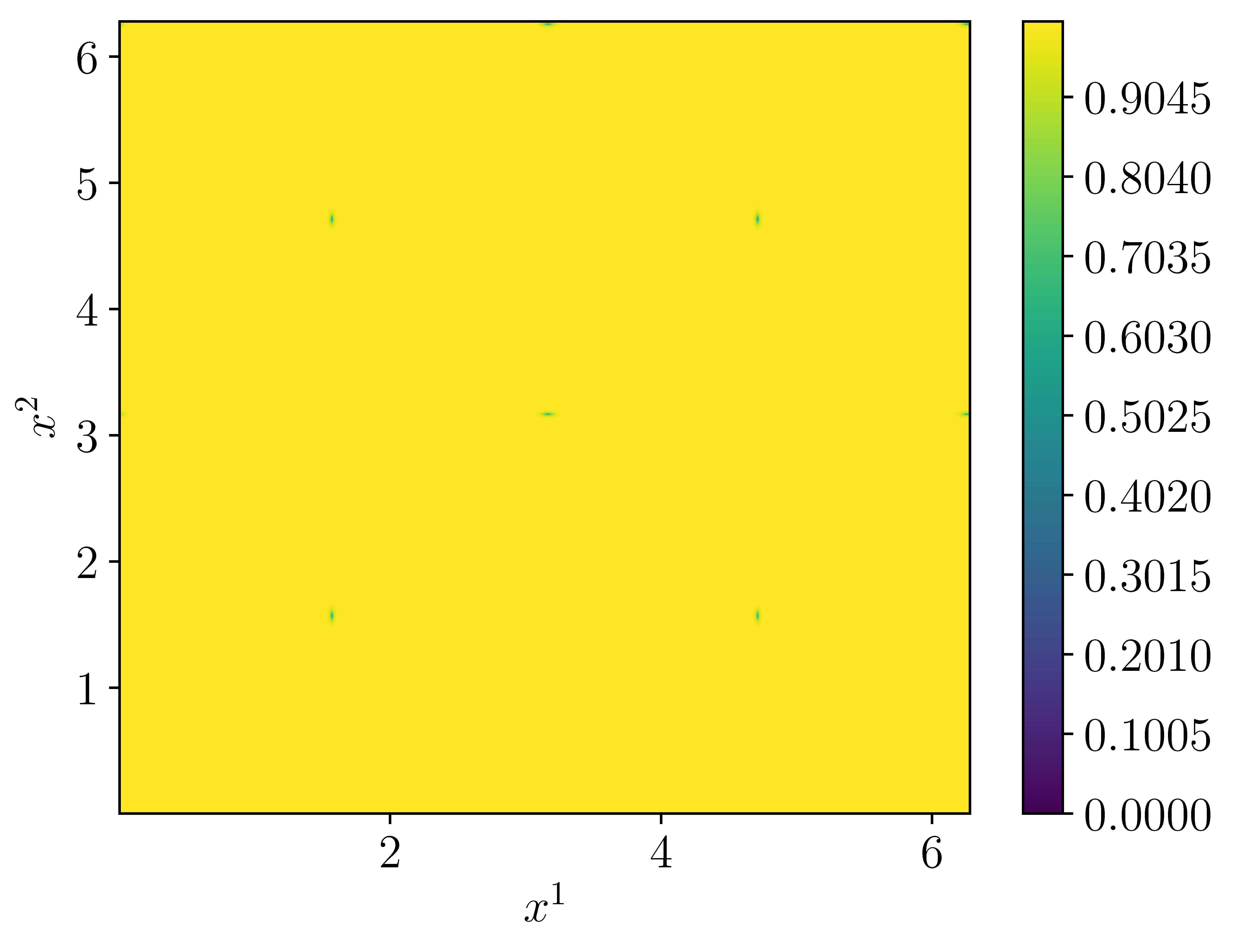}
        \end{minipage}
    }
    \caption{The norms of the density gradient  $\left|\frac{\del_{x^{\Omega}}\rho}{\rho}\right|$ (top row) and spatial fluid velocity $|\nu|$ (bottom row) at various times. The spikes in the density gradient at late times coincide with the transition between orthogonal and extremely tilted states in the fluid velocity, consistent with Rendall's original conjecture. $N = 800$, $K = 0.5$, $\eps = 0.05$.}
    \label{fig:RendallInstability_Plots}
\end{figure}

\subsection{Ultraviolet Cascade and Norm Inflation}
As future timelike infinity is approached, the Rendall instability drives the fluid density to concentrate on progressively smaller scales. In this section, we demonstrate that this concentration leads to a forward ultraviolet cascade in the energy density of the fluid. Moreover, we provide strong evidence that this rapid transfer of energy from low to high frequency modes leads to the inflation of Sobolev norms of the modified energy density $\zeta$, consistent with the previously described conjecture of Oliynyk \cite{Oliynyk:2024}. \newline \par

\subsubsection{Ultraviolet Cascade}

To analyse how energy is transferred from lower to higher modes, we compute the discrete Fourier transform of the modified energy density, $\zeta$. For a computational domain with $N^{2}$ grid points, this is given by
\begin{align*}
    \widehat{\zeta}_{m,n} = \sum_{l,j=1}^{N}e^{-2\pi i \frac{(jm + ln)}{N}} \zeta_{j,l}
\end{align*}
where $j,l \in [1,\cdots, N]$ are the grid indices corresponding to the $x^{1}$ and $x^{2}$ coordinates, respectively, $\zeta_{j,l}$ is the value of $\zeta$ at the grid point $(x_{j}^{1},x_{l}^{2})$, and $m,n \in [0,N-1]$ are integers. For even $N$, the wavenumber $k_{m,n}$ at index $(m,n)$ is given by\footnote{This matches the convention for frequencies in the \textsc{NumPy} \cite{numpy:2020} implementation of the discrete Fourier transform.} 
\begin{align*}
    k_{m,n} := (k_{1},k_{2}), \quad k_{1} = \begin{cases}
        m & \text{if} \quad 0 \leq m < \frac{N}{2} \\
        m-N & \text{if} \quad \frac{N}{2} \leq m < N 
    \end{cases}, \quad k_{2} = \begin{cases}
        n & \text{if} \quad 0 \leq n < \frac{N}{2} \\
        n-N & \text{if} \quad \frac{N}{2} \leq n < N
        \end{cases}.
\end{align*}
In the vein of Carrasco et al. \cite{Carrasco_et_al:2012}, we then define the `power spectrum' of $\zeta$ as a function of the norm of the wavenumber $|k_{m,n}|$ by
\begin{align*}
    P(\tilde{k}) := \sum_{(m,n)\in I}|\widehat{\zeta}_{m,n}|^{2}, \quad I = \Big\{(m,n) \in [0,N-1]^{2}|\; \tilde{k} \leq |k_{m,n}| < 
    \tilde{k}+1, \; \tilde{k} \in \mathbb{Z}_{\geq 0} \Big\},
\end{align*}
where $|k_{m,n}| = \sqrt{k_{1}^{2} + k_{2}^{2}}$. \newline \par

Initially, the power spectrum is dominated by the lower modes. However, as the Rendall instability develops, we observe a forward ultraviolet cascade where energy is rapidly transferred to higher modes and the spectrum flattens out. This is shown in Figure \ref{fig:spectrum_cascade}. It is important to emphasise that, for any fixed resolution, we can only follow the cascade up to the Nyquist frequency, at which point the spectrum becomes saturated and the cascade artificially stops. \newline \par 

\begin{figure}
    \centering
    \includegraphics[width=0.5\textwidth]{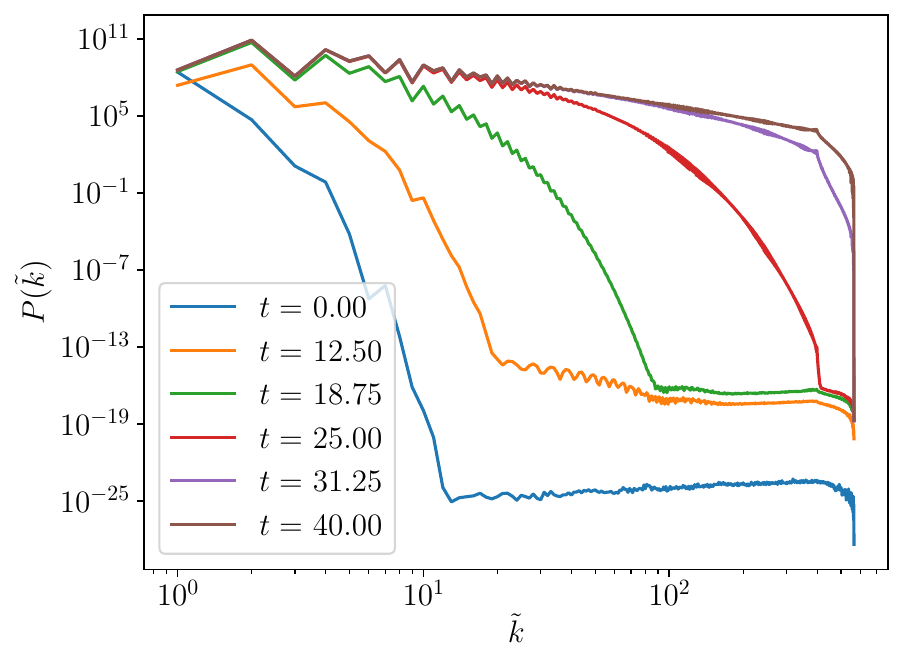}
    \caption{Ultraviolet cascade in the power spectrum of $\zeta$. $N=800$, $K=0.5$, $\eps = 0.05$.}
    \label{fig:spectrum_cascade}
\end{figure}

As a test on the reliability of our results, we have monitored both the spectrum and the $\bar{H}^{s}$ norm \eqref{eqn:Hs_bar_defn} for different resolutions. If the instability was due to limited numerical resolution, we would expect that increasing $N$ should delay the energy cascade and norm blow up. However, as shown in Figures \ref{fig:spectrum_resolution_comparison}-\ref{fig:Hs_bar_resolution_comparison}, the ultraviolet cascade and norm blow up coincide for all $N$ until, for a given resolution, the spectrum is saturated, at which point the cascade ends and the norms plateau. This strongly suggests that the ultraviolet cascade and norm inflation are both genuine features of the continuum solution rather than numerical artefacts. In particular, the numerical resolution only limits the length of time we can accurately track these features rather than changing the dynamics of the solution itself.

\begin{figure}[htbp]
\centering
\subfigure[Subfigure 1 list of figures text][$t=0.0$]{
\includegraphics[width=0.4\textwidth]{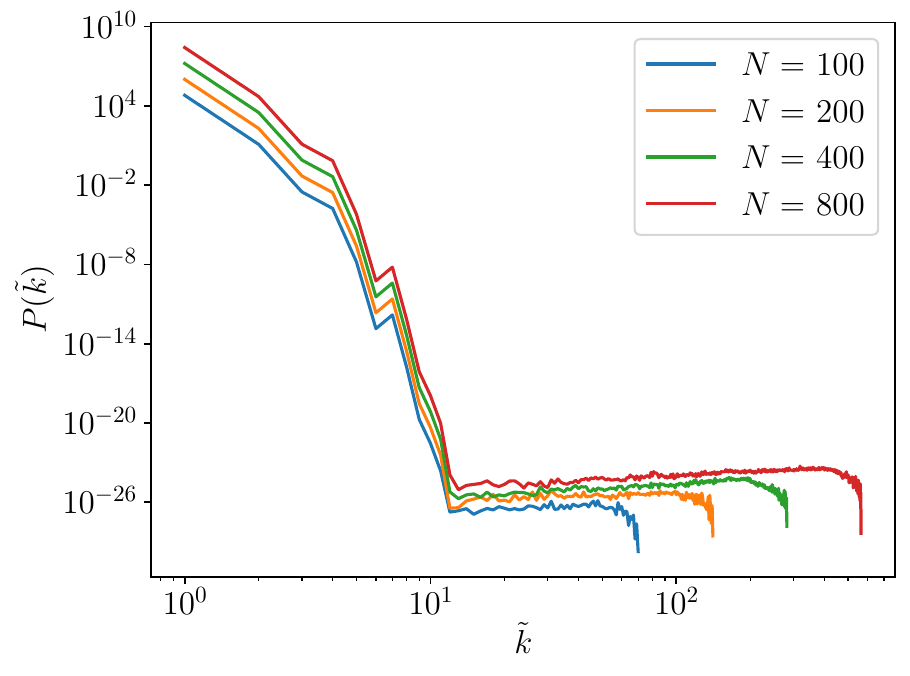}
}
\subfigure[Subfigure 2 list of figures text][$t=18.75$]{
\includegraphics[width=0.4\textwidth]{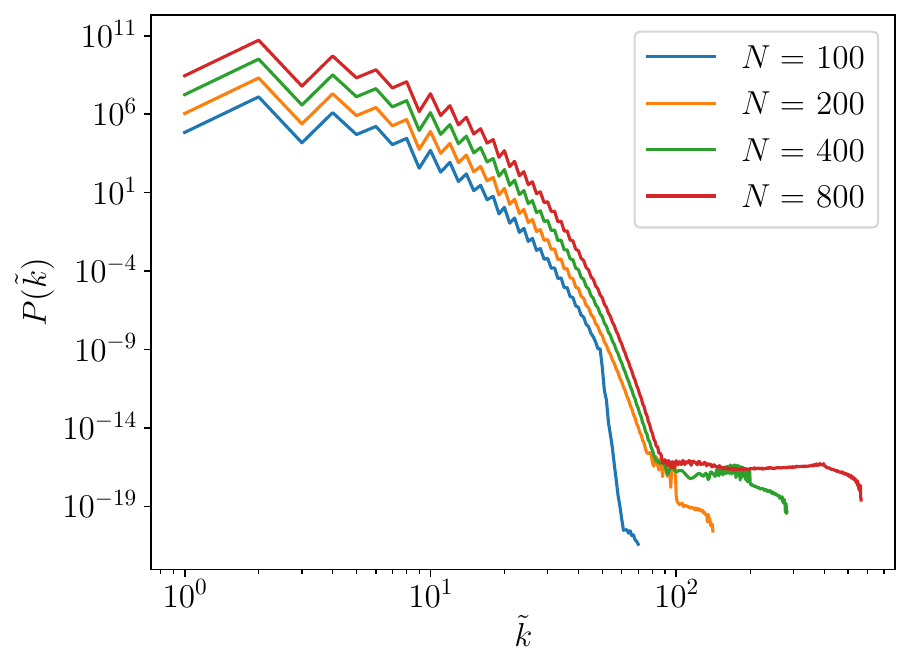}}
\subfigure[Subfigure 2 list of figures text][$t=25$]{
\includegraphics[width=0.4\textwidth]{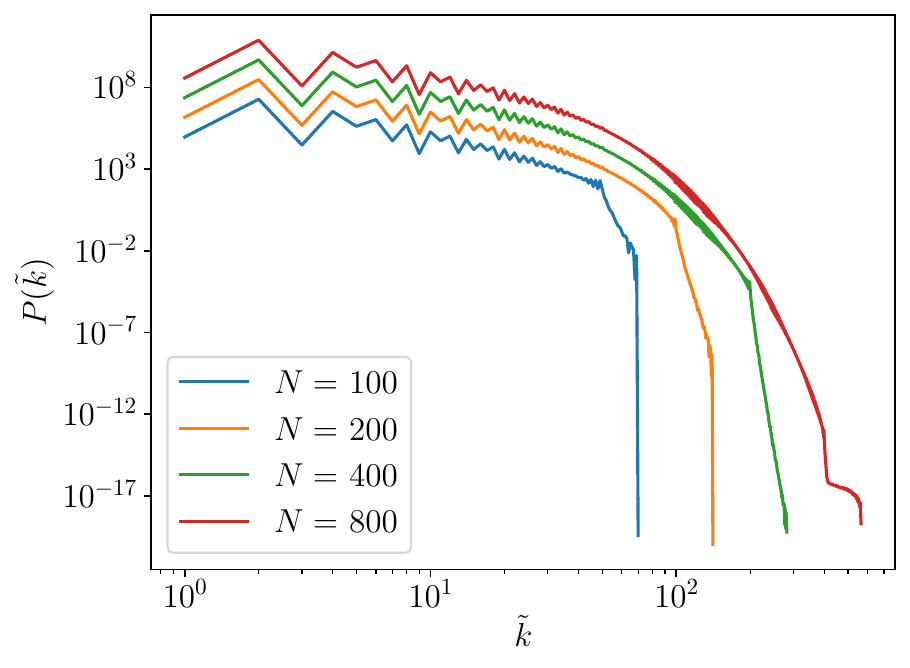}
}
\subfigure[Subfigure 2 list of figures text][$t=40$]{
\includegraphics[width=0.4\textwidth]{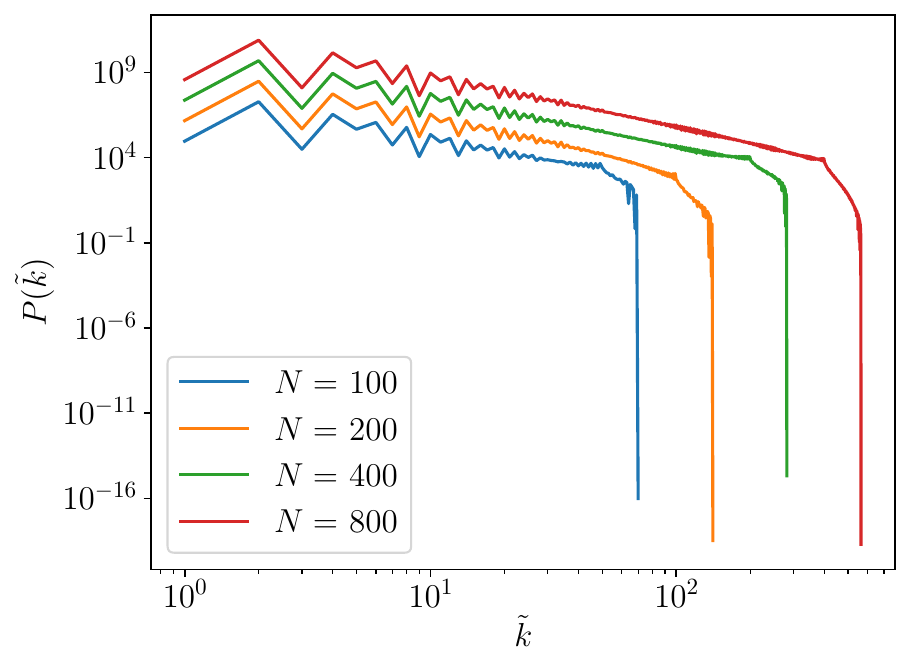}}
\caption{Comparison of the power spectrum $P(\tilde{k})$ for different resolutions. $K=0.5$, $\eps = 0.05$.}
\label{fig:spectrum_resolution_comparison}
\end{figure}

\begin{figure}
    \centering
    \includegraphics[width=0.45\textwidth]{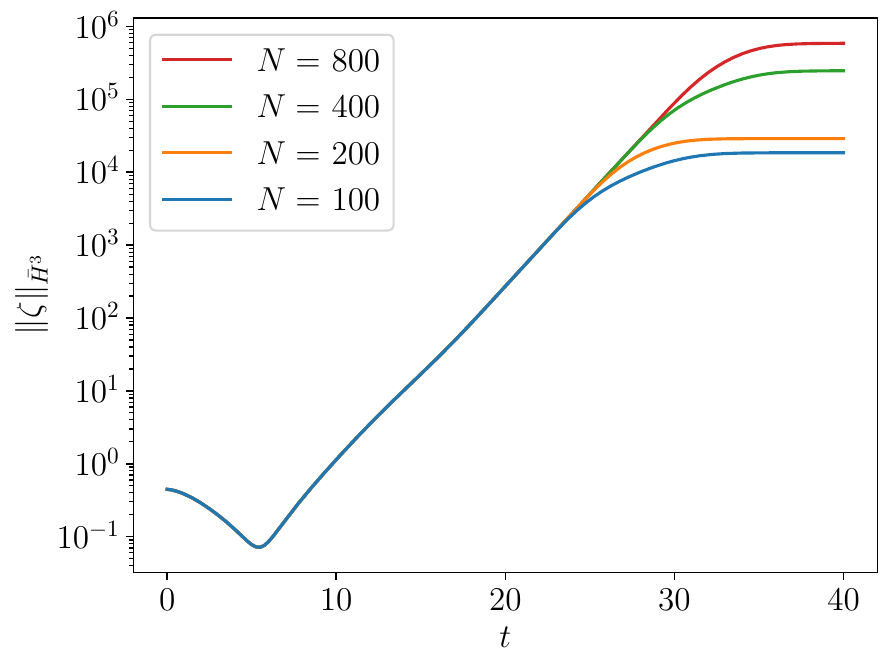}
    \caption{Blow up of the norm $\|\zeta\|_{\bar{H}^{3}}$ for different resolutions. Observe that norms overlap for all resolutions until the power spectrum is saturated which causes the norm to plateau. $K=0.5$, $\eps = 0.05$.}
    \label{fig:Hs_bar_resolution_comparison}
\end{figure}

\subsubsection{Norm Inflation}
Our next goal is to investigate norm inflation for perturbations of the modified density
\begin{align*}
   \zeta_{\text{pert}} := \zeta - \zeta_{\text{bg}} 
\end{align*} 
where $\zeta_{\text{bg}}$ is the spatially homogeneous background solution obtained from the initial data in Section \ref{sec:InitialData} with $\eps=0$. Rather than numerically computing and subtracting the background solution, we study norm inflation in the perturbed solution by considering the zero-average Sobolev norms
\begin{align}
\label{eqn:Hs_bar_defn}
    \|f\|^{2}_{\bar{H}^{s}} := \sum_{k \in \mathbb{Z}^{2}\backslash\{0\}}|k|^{2s}|\widehat{f}(k)|^{2},
\end{align}
where $\widehat{f}$ is the Fourier transform of $f$. Since the background solution is constant in space for fixed $t$, it only contributes to $|k|=0$ mode. Thus, computing $\|\zeta\|_{\bar{H}^{s}} = \|\zeta_{\text{pert}}\|_{\bar{H}^{s}}$ allows us to isolate the growth of the perturbation from the dynamics of the spatially homogeneous background. Moreover, norm inflation for $\|\zeta\|_{\bar{H}^{s}}$ necessarily implies norm inflation for the full Sobolev norm $\|\zeta_{\text{pert}}\|_{H^{s}}$. \newline \par

In Figure \ref{fig:H3_norm_blow_up_K07} we show the behaviour of $\frac{\|\zeta\|_{\bar{H}^{3}}(t)}{\|\zeta\|_{\bar{H}^{3}}(0)}$ for different values of $\eps$ and $K=0.7$. In these plots, we see larger norm blow up as the initial perturbation is decreased which is consistent with an $\bar{H}^{3}$ instability. Similarly, we plot the behaviour of $\|\zeta\|_{\bar{H}^{3}}$ for fixed $\eps$ and varying $K$ in Figure \ref{fig:H3_BlowUp_VariedK} from which we see that the blow up exponentially slows as $K \searrow \frac{1}{3}$. \newline \par

\begin{figure}
    \centering
    \includegraphics[width=0.45\textwidth]{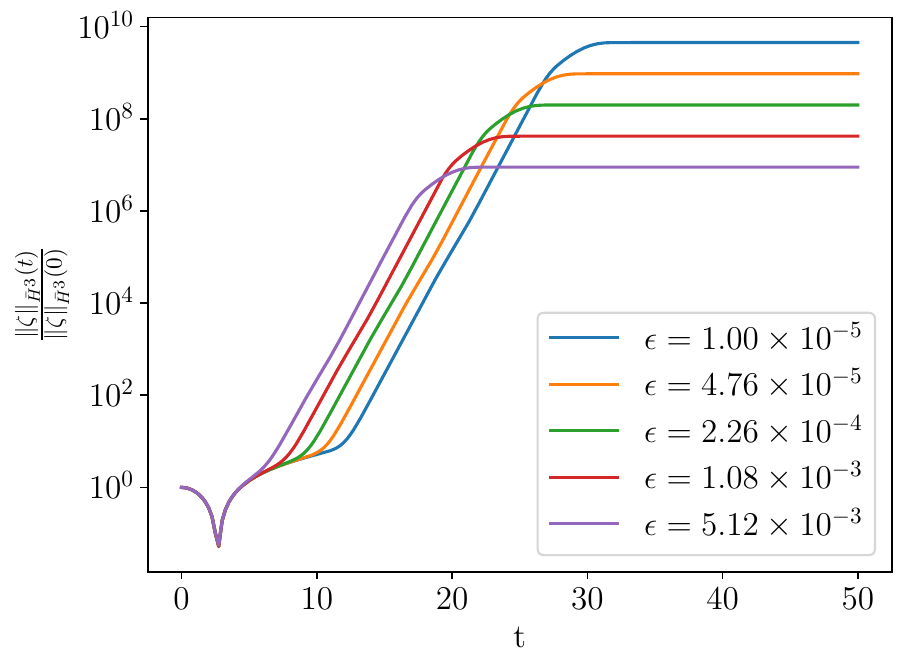}
    \caption{Blow up of the normalised $\bar{H}^{3}$ norm $\frac{\|\zeta\|_{\bar{H}^{3}}(t)}{\|\zeta\|_{\bar{H}^{3}}(0)}$ for different values of the perturbation parameter $\eps$. $N=400$, $K=0.7$}
    \label{fig:H3_norm_blow_up_K07}
\end{figure}

\begin{figure}
    \centering
    \includegraphics[width=0.45\textwidth]{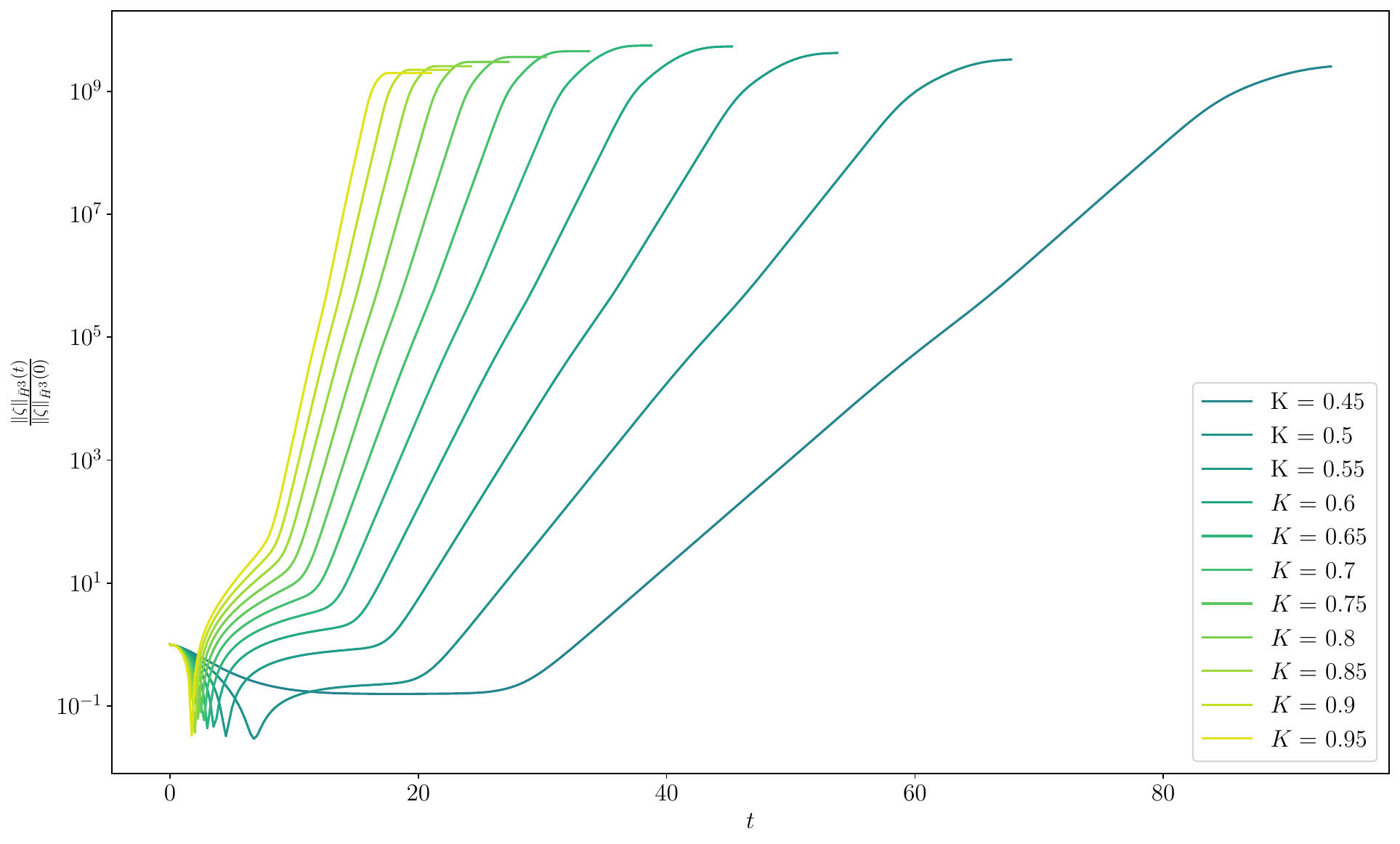}
    \caption{Blow up of the normalised $\bar{H}^{3}$ norm $\frac{\|\zeta\|_{\bar{H}^{3}}(t)}{\|\zeta\|_{\bar{H}^{3}}(0)}$ for different values of the sound speed parameter $K$. Observe that as $K \searrow \frac{1}{3}$, the blow up takes exponentially longer to occur. All solutions are plotted until the norm plateaus. $N=400$, $\eps=10^{-5}$.}
    \label{fig:H3_BlowUp_VariedK}
\end{figure}

In our simulations, we say $\bar{H}^{s}$ norm inflation occurs if, for fixed $K$ and $\eps$, there exists a critical time $t^{*}_{s} > 0$ such that
\begin{align}
\label{eqn:Hbar_s_norm_inflation_condition}
    \|\zeta(t^{*}_{s})\|_{\bar{H}^{s}} \geq \|\zeta(0)\|_{\bar{H}^{s}}^{-1}.
\end{align}
To investigate if $\bar{H}^{s}$ norm inflation occurs for arbitrarily small perturbations, we have monitored the asymptotic behaviour of $t^{*}_{s}$ as $\eps \searrow 0$. Specifically, we run a sequence of simulations with logarithmically spaced $\eps \in [10^{-5},10^{-2}]$ for $K \in \{0.4,0.45,0.5,\cdots , 0.95\}$. As a compromise between accuracy and computational efficiency, we used $N=400$ for all simulations. For all values of $K$ tested, we find evidence that $t^{*}_{s}$ scales like
\begin{align}
\label{eqn:t_crit_scaling}
    t^{*}_{s} = -a\log(\eps - \eps^{*}) + b
\end{align}
where $a >0$, $\eps^{*} \geq 0$, and $b$ are constants and $s>1$. If $\eps^{*} = 0$, then $t^{*}_{s} \nearrow \infty$ as $\eps \searrow 0$ and we expect that $\bar{H}^{s}$ norm inflation occurs for arbitrarily small initial data. On the other hand, if $\eps^{*} >0$, then $t^{*}_{s} \nearrow \infty$ as $\eps \searrow \eps^{*}$ and we expect that $\bar{H}^{s}$  norm inflation does not occur for initial data with $\eps < \eps^{*}$. Examples of this scaling for $t_{3}^{*}$ and $t_{2}^{*}$ are shown in Figure \ref{fig:t_crit_scaling}. \newline \par

\begin{figure}[htbp!]
\centering
\subfigure[Subfigure 1 list of figures text][]{
\includegraphics[width=0.42\textwidth]{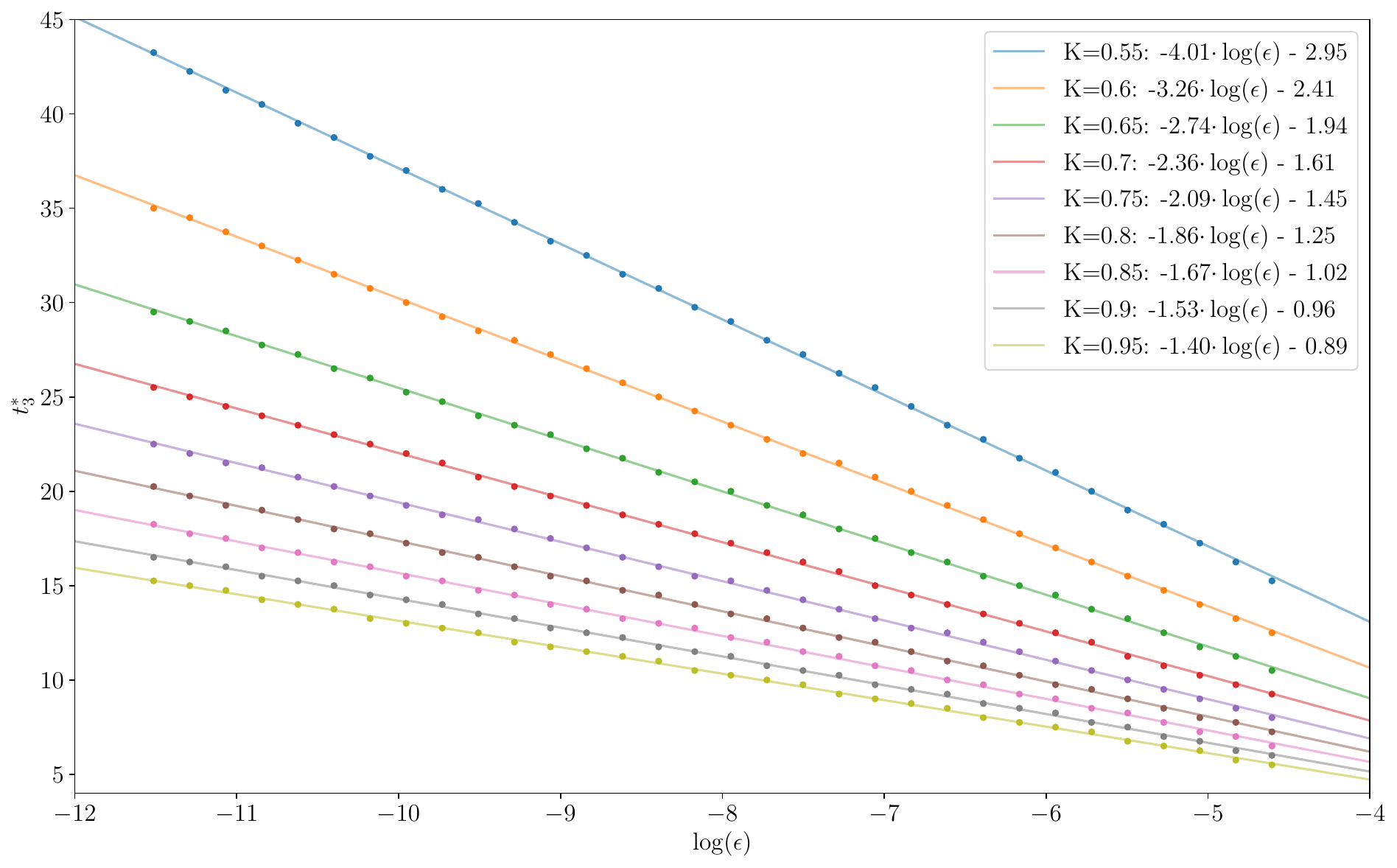}
}
\subfigure[Subfigure 2 list of figures text][]{
\includegraphics[width=0.42\textwidth]{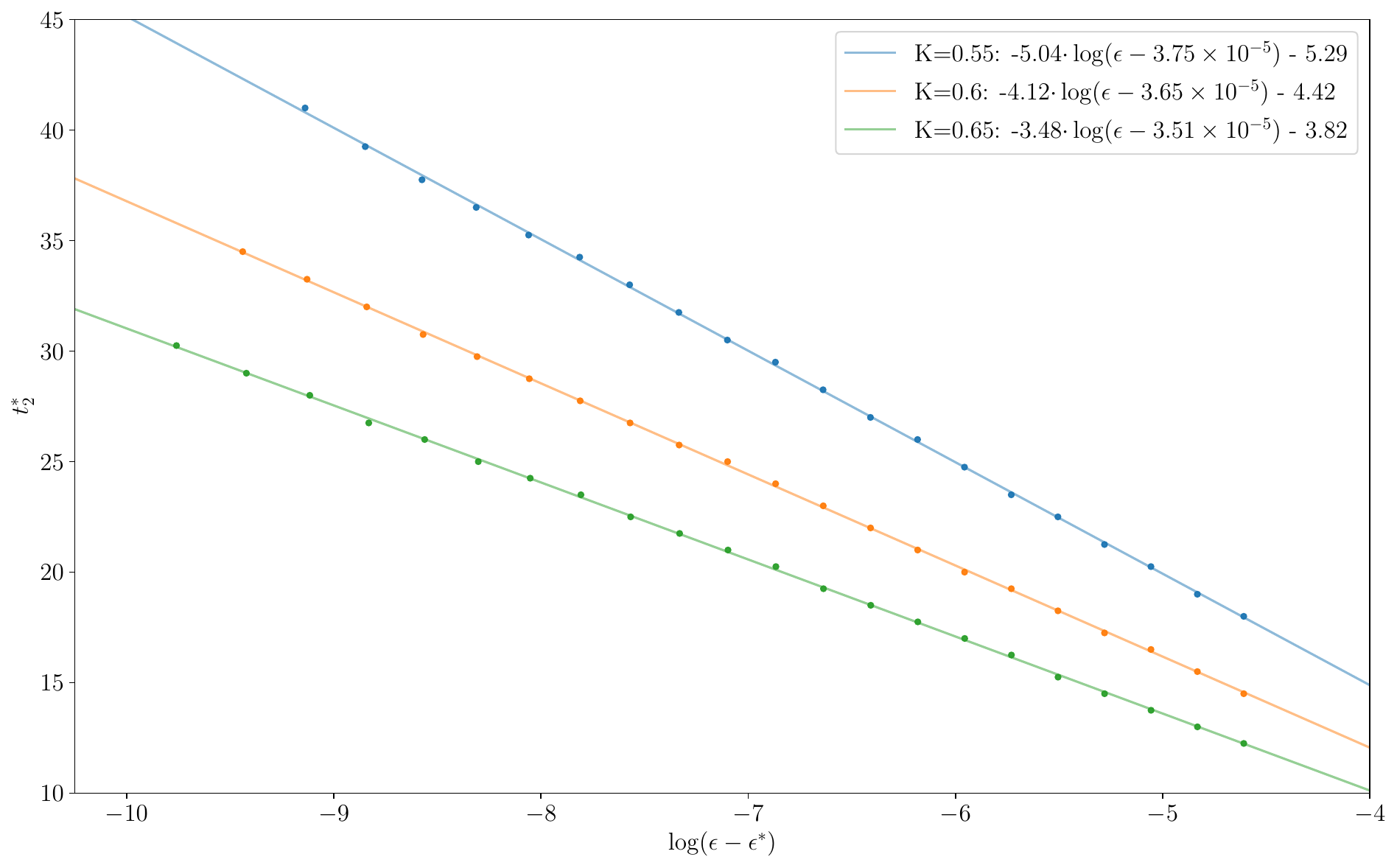}
}
\caption[caption]{Scaling of (a) $t^{*}_{3}$ and (b) $t^{*}_{2}$ as a function of $\eps$ for different values of $K$. The lines of best fit for (a) and (b) are given by \eqref{eqn:t_crit_scaling} with $\eps^{*} = 0$ and $\eps^{*} >0$, respectively. This suggests that norm inflation will occur for arbitrarily\protect\footnotemark small initial data for $\bar{H}^{3}$ but not $\bar{H}^{2}$ over these $K$ values. $N=400$.}
\label{fig:t_crit_scaling}
\end{figure}

\footnotetext{As discussed in the body of the text, our data \textit{cannot} rule out the possibility that $\eps^{*}$ is non-zero but much smaller than $10^{-5}$. Hence, claims of norm inflation for \textit{arbitrarily} small initial data must be treated cautiously at this point.}

For the $\bar{H}^{s}$ norm with $3 \leq s \leq 5$, our asymptotic analysis suggests that small data norm inflation occurs for all values of $K$ tested. On the other hand, it appears that a transition from stability to norm inflation occurs for the $\bar{H}^{2}$ norm somewhere between $0.8 < K < 0.85$. This is somewhat unexpected since Oliynyk's conjecture for the Euler equations on a fixed de Sitter background \cite{Oliynyk:2024} suggest there should be a transition between stability and norm inflation for all $2 \leq s \leq 5$ over the values of $K$ tested.  \newline \par

One possible reason for this discrepancy is that, due to the small sample size for $t^{*}_{s}$ and limited resolution, our numerical results are simply not accurate enough to clearly map out the transition from stability to norm inflation in the $(K,s)$ parameter space. Similarly, we have only studied initial data with $\eps \geq 10^{-5}$. If, for fixed $(K,s)$, the value of $\eps^{*}$ is non-zero but significantly smaller than $10^{-5}$, it is unlikely that we would be able to distinguish between the scaling \eqref{eqn:t_crit_scaling} with $\eps^{*} = 0$ and $\eps^{*} >0$ from our results. Ultimately, we expect it will be necessary to test a much larger range of $\eps$ and use some form of adaptive mesh refinement in order to provide a truly robust classification of the $(K,s)$ parameter space. Nonetheless, our results strongly suggest that norm inflation is a genuine feature of this system, at least for some values of $K$, $s$, and $\eps$. \newline \par

\begin{rem}
    It is also worth noting that Oliynyk does not work with the variable $\zeta$ in his analysis. Instead, he uses a density variable $\texttt{z}$ which roughly corresponds to the logarithm of the particle current $\big(\texttt{z} \approx \log(\Gamma \rho^{\frac{1}{1+K}})\big)$. However, we have found that the $\bar{H}^{s}$ norm of $\log(\Gamma \rho^{\frac{1}{1+K}})$ displays essentially the same behaviour as $\zeta$, so it appears unlikely this is the cause of any discrepancy. 
\end{rem}

\section{Discussion}
\label{sec:Discussion}
We have numerically studied inhomogeneous, $U(1)$-symmetric  perturbations of spatially homogeneous and orthogonal solutions to the Einstein-Euler equations with positive cosmological constant and super-radiative equations of state $K \in (\frac{1}{3},1)$. As part of our numerical implementation, we have derived a new symmetric hyperbolic form of the orthonormal frame formalism which is specifically adapted to $U(1)$-symmetry. We expect this formulation to be useful in both future numerical and analytic studies. \newline \par

Our results provide the first confirmation of the Rendall instability in the fluid outside of one spatial dimension, extending previous work in the Gowdy and $\mathbb{T}^{2}$-symmetric settings \cites{BMO:2023,BMO:2024,ColeyLim:2012,ColeyLim:2013,ColeyLim:2015,Marshall:2026}. Moreover, we find that this instability leads to a forward ultraviolet cascade in the energy density of the fluid matter and norm inflation in $\|\zeta\|_{H^{s}}$, consistent with an earlier conjecture of Oliynyk \cite{Oliynyk:2024} for the $\mathbb{T}^{2}$-symmetric Euler equations on fixed de Sitter backgrounds. However, possibly due to limited numerical resolution, we have been unable to obtain identify a clear transition between stability and norm inflation in the $(K,s)$ parameter space as predicted in \cite{Oliynyk:2024}. We plan on using an adaptive mesh refinement scheme in future work to further investigate this conjectured transition. \newline \par

There are several interesting avenues for future analytical and numerical work. An obvious first step would be to generalise Oliynyk's result \cite{Oliynyk:2024} by rigorously proving that the Rendall instability occurs for the Einstein-Euler system. Beyond this, one might hope to determine whether the conjectured transition between $H^{s}$ stability and norm inflation is a genuine feature of these solutions. It would also be interesting to determine if similar instabilities can develop for alternative equations of state and matter models.

\bibliography{refs.bib}

@article{BergerMoncrief:1993,
	author = {Berger, B.K. and Moncrief, V.},
	doi = {10.1103/PhysRevD.48.4676},
	journal = {Physical Review D},
	number = {10},
	pages = {4676--4687},
	title = {{N}umerical {I}nvestigation of {C}osmological {S}ingularities},
	volume = {48},
	year = {1993}}

@article{BMO:2023,
	author = {F. Beyer and E. Marshall and T.A. Oliynyk},
	doi = {10.1103/PhysRevD.107.104030},
	journal = {Phys. Rev. D},
	pages = {104030},
	publisher = {American Physical Society},
	title = {{F}uture {I}nstability of {FLRW} {F}luid {S}olutions for {L}inear {E}quations of {S}tate {$p=K\rho$} with {$1/3<K<1$}},
	url = {https://link.aps.org/doi/10.1103/PhysRevD.107.104030},
	volume = {107},
	year = {2023}}

@article{BOZ:2025,
      title={{L}ocalized {P}ast {S}tability of the {S}ubcritical {K}asner-{S}calar {F}ield {S}pacetimes}, 
      author={F. Beyer and T.A. Oliynyk and W. Zheng},
      year={2025},
      eprint={2502.09210},
      archivePrefix={arXiv},
      primaryClass={gr-qc},
      url={https://arxiv.org/abs/2502.09210}, 
}

@book{ChoquetBruhat:2009,
	author = {Y. Choquet-Bruhat},
	publisher = {Oxford University Press},
	title = {General {R}elativity and the {E}instein {E}quations},
	year = {2009}}

@article{ColeyLim:2015,
	author = {Coley, A.A. and Lim, W.C.},
	doi = {10.1088/0264-9381/33/1/015009},
	issn = {0264-9381, 1361-6382},
	journal = {Classical and Quantum Gravity},
	number = {1},
	pages = {015009},
	title = {{S}pikes and {M}atter {I}nhomogeneities in {M}assless {S}calar {F}ield {M}odels},
	volume = {33},
	year = {2016}}

@article{CurtisGarfinkle:2005,
	author = {Curtis, J. and Garfinkle, D.},
	doi = {10.1103/PhysRevD.72.064003},
	issn = {1550-7998, 1550-2368},
	journal = {Physical Review D},
	number = {6},
	pages = {064003},
	title = {Numerical {S}imulations of {S}tiff {F}luid {G}ravitational {S}ingularities},
	volume = {72},
	year = {2005}}

@article{FOW:2021,
	author = {D. Fajman and T.A. Oliynyk and Z. Wyatt},
	journal = {Commun. Math. Phys.},
	pages = {401-426},
	title = {{S}tabilizing {R}elativistic {F}luids on {S}pacetimes with {N}on-{A}ccelerated {E}xpansion},
	volume = {383},
	year = {2021}}

@unpublished{Fajman_et_al:2021b,
	author = {D.~Fajman and M.~Ofner and Z.~Wyatt},
	note = {preprint [arXiv:2107.00457]},
	title = {Slowly expanding stable dust spacetimes},
	year = {2021}}

@article{Garfinkle:2004,
	author = {Garfinkle, D.},
	doi = {10.1103/PhysRevLett.93.161101},
	journal = {Physical Review Letters},
	number = {16},
	pages = {6},
	title = {Numerical {{Simulations}} of {{Generic Singularities}}},
	volume = {93},
	year = {2004}}

@article{Garfinkle:2007,
	author = {Garfinkle, D.},
	doi = {10.1088/0264-9381/24/12/S19},
	issn = {0264-9381, 1361-6382},
	journal = {Classical and Quantum Gravity},
	number = {12},
	pages = {S295-S306},
	title = {Numerical {S}imulations of {G}eneral {G}ravitational {S}ingularities},
	volume = {24},
	year = {2007}}

@article{Geroch:1971,
  title={{A} {M}ethod for {G}enerating {S}olutions of {E}instein's {E}quations},
  author={Geroch, Robert},
  journal={Journal of Mathematical Physics},
  volume={12},
  number={6},
  pages={918--924},
  year={1971},
  publisher={AIP Publishing}
}

@article{HadzicSpeck:2015,
	author = {M. Had\v{z}i\'{c} and J. Speck},
	doi = {10.1142/S0219891615500046},
	eprint = {http://www.worldscientific.com/doi/pdf/10.1142/S0219891615500046},
	journal = {J. Hyper. Differential Equations},
	pages = {87-188},
	title = {{T}he {G}lobal {F}uture {S}tability of the {FLRW} {S}olutions to the {D}ust-{E}instein {S}ystem with a {P}ositive {C}osmological {C}onstant},
	url = {http://www.worldscientific.com/doi/abs/10.1142/S0219891615500046},
	volume = {12},
	year = {2015}}

@article{Ijjas_et_al:2021,
	author = {A. Ijjas and F. Pretorius and P.J. Steinhardt and D. Garfinkle},
	doi = {10.1088/1475-7516/2021/12/030},
	journal = {JCAP},
	month = {dec},
	number = {12},
	pages = {030},
	publisher = {{IOP} Publishing},
	title = {Dynamical {A}ttractors in {C}ontracting {S}pacetimes {D}ominated by {K}inetically {C}oupled {S}calar {F}ields},
	url = {https://doi.org/10.1088/1475-7516/2021/12/030},
	volume = {2021},
	year = 2021}

@article{LeFlochWei:2021,
	author = {P.G. LeFloch and C. Wei},
	journal = {Annales de l'Institut Henri Poincar\'e C, Analyse non lin\'eaire},
	pages = {757-814},
	title = {The {N}on-{L}inear {S}tability of {S}elf-{G}ravitating {I}rrotational {C}haplygin {F}luids in a {FLRW} {G}eometry},
	volume = {38},
	year = {2021}}

@article{LiuWei:2021,
	author = {C.~Liu and C.~Wei},
	doi = {10.1007/s00023-020-00987-1},
	journal = {Ann. Henri Poincar{\'e}},
	pages = {715-779},
	title = {{F}uture {S}tability of the {FLRW} {S}pacetime for a {L}arge {C}lass of {P}erfect {F}luids},
	volume = {22},
	year = {2021}}

@article{LubbeKroon:2013,
	author = {C. L\"{u}bbe and J.A. Valiente Kroon},
	doi = {http://dx.doi.org/10.1016/j.aop.2012.10.011},
	issn = {0003-4916},
	journal = {Annals of Physics},
	pages = {1-25},
	title = {A {C}onformal {A}pproach for the {A}nalysis of the {N}on-{L}inear {S}tability of {R}adiation {C}osmologies},
	url = {http://www.sciencedirect.com/science/article/pii/S0003491612001728},
	volume = {328},
	year = {2013}}

@article{MarshallOliynyk:2022,
	author = {E. Marshall and T.A. Oliynyk},
	journal = {Lett. Math. Phys.},
	pages = {102},
	title = {{O}n the {S}tability of {R}elativistic {P}erfect {F}luids with {L}inear {E}quations of {S}tate $ p= {K}\rho $ where $1/3< {K}< 1$},
	volume = {113},
	year = {2023}}

@article{Oliynyk:CMP_2016,
	author = {T.A. Oliynyk},
	journal = {Commun. Math. Phys.},
	pages = {293-312; see the preprint [arXiv:1505.00857] for a corrected version},
	title = {{F}uture {S}tability of the {FLRW} {F}luid {S}olutions in the {P}resence of a {P}ositive {C}osmological {C}onstant},
	volume = {346},
	year = {2016}}

@article{Rendall:2004,
	author = {Rendall, A.D.},
	doi = {10.1007/s00023-004-0189-1},
	journal = {Ann. Henri Poincar{\'e}},
	number = {6},
	pages = {1041-1064},
	title = {{A}symptotics of {S}olutions of the {{Einstein}} {E}quations with {P}ositive {C}osmological {C}onstant},
	volume = {5},
	year = {2004}}

@article{RodnianskiSpeck:2013,
	author = {I.~Rodnianski and J.~Speck},
	journal = {J. Eur. Math. Soc.},
	pages = {2369-2462},
	title = {{T}he {S}tability of the {I}rrotational {E}uler-{E}instein {S}ystem with a {P}ositive {C}osmological {C}onstant},
	volume = {15},
	year = {2013}}

@article{Ringstrom:2009,
	adsurl = {https://ui.adsabs.harvard.edu/abs/2009CMaPh.290..155R},
	author = {Ringstr{\"o}m, H.},
	doi = {10.1007/s00220-009-0812-6},
	journal = {Commun. Math. Phys.},
	number = {1},
	pages = {155-218},
	title = {{Power Law Inflation}},
	volume = {290},
	year = {2009}}

@article{Speck:2012,
	author = {J. Speck},
	journal = {Selecta Mathematica},
	pages = {633-715},
	title = {The {N}on-{L}inear {F}uture-{S}tability of the {FLRW} {F}amily of {S}olutions to the {E}uler-{E}instein {S}ystem with a {P}ositive {C}osmological {C}onstant},
	volume = {18},
	year = {2012}}

@article{Speck:2013,
	author = {J. Speck},
	journal = {Arch. Rat. Mech.},
	pages = {535-579},
	title = {The {S}tabilizing {E}ffect of {S}pacetime {E}xpansion on {R}elativistic {F}luids with {S}harp {R}esults for the {R}adiation {E}quation of {S}tate},
	volume = {210},
	year = {2013}}

@book{Wald:1984,
	author = {R.M. Wald},
	publisher = {University of Chicago Press},
	title = {General {R}elativity},
	year = {1984}}

@article{Wei:2018,
	author = {C. Wei},
	doi = {https://doi.org/10.1016/j.jde.2018.05.007},
	issn = {0022-0396},
	journal = {Journal of Differential Equations},
	pages = {3441 - 3463},
	title = {Stabilizing {E}ffect of the {P}ower {L}aw {I}nflation on {I}sentropic {R}elativistic {F}luids},
	url = {http://www.sciencedirect.com/science/article/pii/S002203961830278X},
	volume = {265},
	year = {2018}}

@article{BergerGarfinkle:1998,
	author = {Berger, B.K. and Garfinkle, D.},
	doi = {10.1103/PhysRevD.57.4767},
	journal = {Physical Review D},
	note = {DOI:~\href{https://doi.org/10.1103/PhysRevD.57.4767}{10.1103/PhysRevD.57.4767}},
	number = {8},
	pages = {4767--4777},
	title = {Phenomenology of the {{Gowdy}} Universe on {$T^3\times R$}},
	volume = {57},
	year = {1998}}

@article{UEWE:2003,
   title={Past {A}ttractor in {I}nhomogeneous {C}osmology},
   volume={68},
   ISSN={1089-4918},
   url={http://dx.doi.org/10.1103/PhysRevD.68.103502},
   DOI={10.1103/physrevd.68.103502},
   number={10},
   journal={Physical Review D},
   publisher={American Physical Society (APS)},
   author={Uggla, C. and van Elst, H. and Wainwright, J. and Ellis, G.F.R.},
   year={2003},
   month=nov }

@article{RöhrUggla:2005,
doi = {10.1088/0264-9381/22/17/026},
url = {https://dx.doi.org/10.1088/0264-9381/22/17/026},
year = {2005},
publisher = {},
volume = {22},
number = {17},
pages = {3775},
author = {N. Röhr and C. Uggla},
title = {Conformal {R}egularization of {E}instein's {F}ield {E}quations},
journal = {Classical and Quantum Gravity}
}

@book{Gourgoulhon:2012,
  title={{$3+1$} {F}ormalism in {G}eneral {R}elativity: {B}ases of {N}umerical {R}elativity},
  author={Gourgoulhon, {\'E}.},
  isbn={9783642245251},
  lccn={2011942212},
  series={Lecture Notes in Physics},
  year={2012},
  publisher={Springer Berlin Heidelberg}
}

@article{Sandin:2009,
   title={{T}ilted {T}wo-{F}luid {B}ianchi {T}ype {I} {M}odels},
   volume={41},
   ISSN={1572-9532},
   url={http://dx.doi.org/10.1007/s10714-009-0799-5},
   DOI={10.1007/s10714-009-0799-5},
   number={11},
   journal={General Relativity and Gravitation},
   publisher={Springer Science and Business Media LLC},
   author={Sandin, P.},
   year={2009},
   month=apr, pages={2707–2724} }

@article{SandinUggla:2008,
   title={{B}ianchi {T}ype {I} {M}odels with {T}wo {T}ilted {F}luids},
   volume={25},
   ISSN={1361-6382},
   url={http://dx.doi.org/10.1088/0264-9381/25/22/225013},
   DOI={10.1088/0264-9381/25/22/225013},
   number={22},
   journal={Classical and Quantum Gravity},
   publisher={IOP Publishing},
   author={Sandin, P. and Uggla, C.},
   year={2008},
   month=oct, pages={225013} }

@article{HewittWainwright:1990,
doi = {10.1088/0264-9381/7/12/011},
url = {https://dx.doi.org/10.1088/0264-9381/7/12/011},
year = {1990},
publisher = {},
volume = {7},
number = {12},
pages = {2295},
author = {C.G. Hewitt and  J. Wainwright},
title = {{O}rthogonally {T}ransitive {G2} {C}osmologies},
journal = {Classical and Quantum Gravity},
}

@article{Wainwright:1979,
  title={A {C}lassification {S}cheme for {N}on-{R}otating {I}nhomogeneous {C}osmologies},
  author={Wainwright, J.},
  journal={Journal of Physics A: Mathematical and General},
  volume={12},
  number={11},
  pages={2015},
  year={1979},
  publisher={IOP Publishing}
}

@article{Elst_et_al:2001,
   title={{D}ynamical {S}ystems {A}pproach to {$G_{2}$} {C}osmology},
   volume={19},
   ISSN={1361-6382},
   url={http://dx.doi.org/10.1088/0264-9381/19/1/304},
   DOI={10.1088/0264-9381/19/1/304},
   number={1},
   journal={Classical and Quantum Gravity},
   publisher={IOP Publishing},
   author={van Elst, H. and Uggla, C. and Wainwright, J.},
   year={2001},
   month=dec, pages={51–82} }

@book{EllisWainwright:1997,
  title={{D}ynamical {S}ystems in {C}osmology},
  author={Wainwright, J. and Ellis, G.F.R.},
  isbn={9780521554572},
  lccn={96028592},
  year={1997},
  publisher={Cambridge University Press}
}

@article{Andersson_et_al:2005,
  title = {{A}symptotic {S}ilence of {G}eneric {C}osmological {S}ingularities},
  author = {Andersson, L. and van Elst, H. and Lim, W.C. and Uggla, C.},
  journal = {Phys. Rev. Lett.},
  volume = {94},
  issue = {5},
  pages = {051101},
  numpages = {4},
  year = {2005},
  publisher = {American Physical Society},
  doi = {10.1103/PhysRevLett.94.051101},
  url = {https://link.aps.org/doi/10.1103/PhysRevLett.94.051101}
}

@article{FOOW:2023,
   title={{T}he {S}tability of {R}elativistic {F}luids in {L}inearly {E}xpanding {C}osmologies},
   volume={2024},
   ISSN={1687-0247},
   url={http://dx.doi.org/10.1093/imrn/rnad241},
   DOI={10.1093/imrn/rnad241},
   number={5},
   journal={International Mathematics Research Notices},
   publisher={Oxford University Press (OUP)},
   author={Fajman, D. and Ofner, M. and Oliynyk, T.A. and Wyatt, Z.},
   year={2023},
   month=oct, pages={4328–4383} }

@article{BMO:2024,
   title={Past {I}nstability of {FLRW} {S}olutions of the {E}instein-{E}uler-{S}calar {F}ield {E}quations for {L}inear {E}quations of {S}tate {$p=K\rho$} with {$0\leq K<1/3$}},
   volume={110},
   ISSN={2470-0029},
   url={http://dx.doi.org/10.1103/PhysRevD.110.044060},
   DOI={10.1103/physrevd.110.044060},
   number={4},
   journal={Physical Review D},
   publisher={American Physical Society (APS)},
   author={Beyer, F. and Marshall, E. and Oliynyk, T.A.},
   year={2024}}

@article{ElstUggla:1997,
   title={General {R}elativistic {O}rthonormal {F}rame {A}pproach},
   volume={14},
   ISSN={1361-6382},
   url={http://dx.doi.org/10.1088/0264-9381/14/9/021},
   DOI={10.1088/0264-9381/14/9/021},
   number={9},
   journal={Classical and Quantum Gravity},
   publisher={IOP Publishing},
   author={van Elst, H. and Uggla, C.},
   year={1997},
   month=sep, pages={2673–2695} }

@article{MacCallum:1973,
    author = {MacCallum, M.A.H.},
    editor = {Schatzman, E.},
    title = {Cosmological {M}odels from a {G}eometric {P}oint of {V}iew},
    eprint = {2001.11387},
    archivePrefix = {arXiv},
    journal = {Cargese Lect. Phys.},
    volume = {6},
    pages = {61-174},
    year = {1973},
    primaryClass={gr-qc},
      url={https://arxiv.org/abs/2001.11387}
}

@article{EllisMaccallum:1969,
author = {G.F.R. Ellis and M.A.H. MacCallum},
title = {{A {C}lass of {H}omogeneous {C}osmological {M}odels}},
volume = {12},
journal = {Communications in Mathematical Physics},
number = {2},
publisher = {Springer},
pages = {108 -- 141},
year = {1969},
}

@article{Fournodavlos_et_al:2024,
      title={{F}uture {S}tability of {P}erfect {F}luids with {E}xtreme {T}ilt and {L}inear {E}quation of {S}tate $p=c_s^2\rho$ for the {E}instein-{E}uler {S}ystem with {P}ositive {C}osmological {C}onstant: {T}he {R}ange $\frac{1}{3}<c_s^2<\frac{3}{7}$}, 
      author={G. Fournodavlos and E. Marshall and T.A. Oliynyk},
      year={2024},
      eprint={2404.06789},
      archivePrefix={arXiv},
      primaryClass={math.AP},
      url={https://arxiv.org/abs/2404.06789}}

@article{Lim_et_al:2004,
   title={{A}symptotic {I}sotropization in {I}nhomogeneous {C}osmology},
   volume={69},
   ISSN={1550-2368},
   url={http://dx.doi.org/10.1103/PhysRevD.69.103507},
   DOI={10.1103/physrevd.69.103507},
   number={10},
   journal={Physical Review D},
   publisher={American Physical Society (APS)},
   author={Lim, W.C. and van Elst, H. and Uggla, C. and Wainwright, J.},
   year={2004},
   month=may }

@article{Hewitt_et_al:2001,
   title={{T}he {A}symptotic {R}egimes of {T}ilted {B}ianchi {II} {C}osmologies},
   volume={33},
   ISSN={1572-9532},
   url={http://dx.doi.org/10.1023/A:1002075902953},
   DOI={10.1023/a:1002075902953},
   number={1},
   journal={General Relativity and Gravitation},
   publisher={Springer Science and Business Media LLC},
   author={Hewitt, C.G. and Bridson, R. and Wainwright, J.},
   year={2001},
   month=jan, pages={65–94} }

@article{HernStewart:1998,
   title={{T}he {G}owdy {C}osmologies {R}evisited},
   volume={15},
   ISSN={1361-6382},
   url={http://dx.doi.org/10.1088/0264-9381/15/6/014},
   DOI={10.1088/0264-9381/15/6/014},
   number={6},
   journal={Classical and Quantum Gravity},
   publisher={IOP Publishing},
   author={Hern, S.D. and Stewart, J.M.},
   year={1998},
   month=jun, pages={1581–1593} }

@article{GarfinklePretorius:2020,
   title={{S}pike {B}ehavior in the {A}pproach to {S}pacetime {S}ingularities},
   volume={102},
   ISSN={2470-0029},
   url={http://dx.doi.org/10.1103/PhysRevD.102.124067},
   DOI={10.1103/physrevd.102.124067},
   number={12},
   journal={Physical Review D},
   publisher={American Physical Society (APS)},
   author={Garfinkle, D. and Pretorius, F.},
   year={2020},
   month=dec }

@article{BergerMoncrief:1998b,
   title={{E}vidence for an {O}scillatory {S}ingularity in {G}eneric {$U(1)$} {S}ymmetric {C}osmologies on {$T^{3}\times R$}},
   volume={58},
   ISSN={1089-4918},
   url={http://dx.doi.org/10.1103/PhysRevD.58.064023},
   DOI={10.1103/physrevd.58.064023},
   number={6},
   journal={Physical Review D},
   publisher={American Physical Society (APS)},
   author={Berger, B.K. and Moncrief, V.},
   year={1998},
   month=aug }

@article{BergerMoncrief1998a,
   title={{N}umerical {E}vidence that the {S}ingularity in {P}olarized {$U(1)$} {S}ymmetric {C}osmologies on {$T^{3}\times R$} is {V}elocity {D}ominated},
   volume={57},
   ISSN={1089-4918},
   url={http://dx.doi.org/10.1103/PhysRevD.57.7235},
   DOI={10.1103/physrevd.57.7235},
   number={12},
   journal={Physical Review D},
   publisher={American Physical Society (APS)},
   author={Berger, B.K. and Moncrief, V.},
   year={1998},
   month=jun, pages={7235–7240} }

@book{Strikwerda:2004,
  title={{F}inite {D}ifference {S}chemes and {P}artial {D}ifferential {E}quations},
  author={Strikwerda, J.C.},
  year={2004},
  publisher={SIAM}
}

@article{ColeyLim:2013,
   title={General {R}elativistic {D}ensity {P}erturbations},
   volume={31},
   ISSN={1361-6382},
   url={http://dx.doi.org/10.1088/0264-9381/31/1/015020},
   DOI={10.1088/0264-9381/31/1/015020},
   number={1},
   journal={Classical and Quantum Gravity},
   publisher={IOP Publishing},
   author={Lim, W.C. and Coley, A.A.},
   year={2013},
   month=nov, pages={015020} }

@article{ColeyLim:2012,
  title = {{G}enerating {M}atter {I}nhomogeneities in {G}eneral {R}elativity},
  author = {Coley, A.A. and Lim, W.C.},
  journal = {Phys. Rev. Lett.},
  volume = {108},
  issue = {19},
  pages = {191101},
  numpages = {5},
  year = {2012},
  publisher = {American Physical Society},
  doi = {10.1103/PhysRevLett.108.191101},
  url = {https://link.aps.org/doi/10.1103/PhysRevLett.108.191101}
}

@article{Oliynyk:2024,
  title={{O}n the {F}ractional {D}ensity {G}radient {B}low-{U}p {C}onjecture of {R}endall},
  author={Oliynyk, T.A.},
  journal={Communications in Mathematical Physics},
  volume={405},
  number={8},
  pages={197},
  year={2024},
  publisher={Springer}
}

@article{Hervik_et_al:2010,
  title={{F}uture {A}symptotics of {T}ilted {B}ianchi {T}ype {II} {C}osmologies},
  author={Hervik, S. and Lim, W.C. and Sandin, P. and Uggla, C.},
  journal={Class. Quantum Gravity},
  volume={27},
  number={18},
  pages={185006},
  year={2010},
  publisher={IOP Publishing}
}

@article{Fajman_et_al:2025,
      title={Stability of {F}luids in {S}pacetimes with {D}ecelerated {E}xpansion}, 
      author={D. Fajman and M. Ofner and T.A. Oliynyk and Z. Wyatt},
      year={2025},
      eprint={2501.12798},
      archivePrefix={arXiv},
      primaryClass={gr-qc},
      url={https://arxiv.org/abs/2501.12798}, 
}

@article{Mondal:2024,
  title={The {N}on-{L}inear {S}tability of (n+1)-{D}imensional {FLRW} {S}pacetimes},
  author={Mondal, P.},
  journal={J. Hyperbolic Differ. Equ.},
  volume={21},
  number={02},
  pages={329--422},
  year={2024},
  publisher={World Scientific}
}

@article{Mondal:2021,
  title={The {L}inear {S}tability of the (n+1)-{D}imensional {FLRW} {S}pacetimes},
  author={Mondal, P.},
  journal={Class. Quantum Gravity},
  volume={38},
  number={22},
  pages={225009},
  year={2021},
  publisher={IOP Publishing}
}

@phdthesis{Lim_Thesis:2004,
    title={The {D}ynamics of {I}nhomogeneous {C}osmologies}, 
    author={W.C. Lim},
    year={2004},
    school = {University of {W}aterloo},
    eprint={gr-qc/0410126},
      archivePrefix={arXiv},
      primaryClass={gr-qc},
      url={https://arxiv.org/abs/gr-qc/0410126}
}

@article{Ellis:1967,
  title={{D}ynamics of {P}ressure-{F}ree {M}atter in {G}eneral {R}elativity},
  author={Ellis, G.F.R.},
  journal={Journal of Mathematical Physics},
  volume={8},
  number={5},
  pages={1171--1194},
  year={1967},
  publisher={American Institute of Physics}
}

@article{FMO:2025,
      title={{F}uture {S}tability of {T}ilted {T}wo-{F}luid {B}ianchi {I} {S}pacetimes}, 
      author={G. Fournodavlos and E. Marshall and T.A. Oliynyk},
      year={2025},
      eprint={2508.15155},
      archivePrefix={arXiv},
      primaryClass={gr-qc},
      url={https://arxiv.org/abs/2508.15155}, 
}

@article{Wainwright:1981,
  title={{E}xact {S}patially {I}nhomogeneous {C}osmologies},
  author={Wainwright, J.},
  journal={Journal of Physics A: Mathematical and General},
  volume={14},
  number={5},
  pages={1131},
  year={1981},
  publisher={IOP Publishing}
}

@article{Marshall:2026,
  title = {Mixmaster {F}luids {N}ear the {B}ig {B}ang},
  author = {Marshall, E.},
  journal = {Phys. Rev. D},
  volume = {113},
  issue = {2},
  pages = {024053},
  numpages = {17},
  year = {2026},
  publisher = {American Physical Society},
  doi = {10.1103/86cf-6yct}
}

@article{Moncrief:1986,
  title = {Reduction of {{Einstein}}'s {E}quations for {V}acuum {S}pace-{T}imes with {S}pacelike {{U}}(1) {I}sometry {G}roups},
  author = {Moncrief, V.},
  year = 1986,
  journal = {Ann. Phys.},
  volume = {167},
  number = {1},
  pages = {118--142},
  issn = {00034916},
  doi = {10.1016/S0003-4916(86)80009-4},
  urldate = {2026-02-06},
  langid = {english},
  note = {DOI:~\href{https://doi.org/10.1016/S0003-4916(86)80009-4}{10.1016/S0003-4916(86)80009-4}}
}

@article{Garfinkle_et_al:2023,
  title={Initial {C}onditions {P}roblem in {C}osmological {I}nflation {R}evisited},
  author={Garfinkle, D. and Ijjas, A. and Steinhardt, P.J.},
  journal={Physics Letters B},
  volume={843},
  pages={138028},
  year={2023},
  publisher={Elsevier}
}

@article{Ijjas_et_al:2024,
  title={Smoothing and {F}lattening the {U}niverse {T}hrough {S}low {C}ontraction {V}ersus {I}nflation},
  author={Ijjas, A. and Steinhardt, P.J. and Garfinkle, D. and Cook, W.G.},
  journal={Journal of Cosmology and Astroparticle Physics},
  volume={2024},
  number={07},
  pages={077},
  year={2024},
  publisher={IOP Publishing}
}

@incollection{Miller:1980,
	author = {J.G. Miller},
    editor = {A.R. Marlow},
	booktitle = {Quantum {T}heory and {G}ravitation},
	pages = {221--232},
	publisher = {Academic Press, New York},
	title = {{K}aluza and {K}lein's {F}ive-{D}imensional {R}elativity},
	year = {1980}
}

@article{Parker:1984,
  title={On {S}ome {T}heorems of {G}eroch and {S}tiefel},
  author={Parker, P.E.},
  journal={Journal of mathematical physics},
  volume={25},
  number={3},
  pages={597--599},
  year={1984},
  publisher={American Institute of Physics}
}

@article{Dong:2026,
      title={The {P}ast {S}tability of {K}asner {S}ingularities for the $(3+1)$-{D}imensional {E}instein {V}acuum {S}pacetime under {P}olarized ${U}(1)$-{S}ymmetry}, 
      author={K. Dong},
      year={2026},
      eprint={2601.06957},
      archivePrefix={arXiv},
      primaryClass={gr-qc},
      url={https://arxiv.org/abs/2601.06957}, 
}

@article{Moncrief:1990,
  title={Reduction of the {E}instein-{M}axwell and {E}instein-{M}axwell-{H}iggs {E}quations for {C}osmological {S}pacetimes with {S}pacelike {U(1)} {I}sometry {G}roups},
  author={Moncrief, V.},
  journal={Classical and Quantum Gravity},
  volume={7},
  number={3},
  pages={329--352},
  year={1990}
}

@article{ChoquetBruhatMoncrief:2001,
   title={Future {G}lobal in {T}ime {E}insteinian {S}pacetimes with {U(1)} {I}sometry {G}roup},
   volume={2},
   ISSN={1424-0637},
   url={http://dx.doi.org/10.1007/s00023-001-8602-5},
   DOI={10.1007/s00023-001-8602-5},
   number={6},
   journal={Annales Henri Poincaré},
   publisher={Springer Science and Business Media LLC},
   author={Choquet-Bruhat, Y. and Moncrief, V.},
   year={2001},
   month=Dec, pages={1007–1064} }

@article{ChoquetBruhat:2003,
  title={Future {C}omplete {S1} {S}ymmetric {E}insteinian {S}pacetimes, the {U}npolarized {C}ase},
  author={Choquet-Bruhat, Y.},
  journal={Comptes rendus. Math{\'e}matique},
  volume={337},
  number={2},
  pages={129--136},
  year={2003}
}

@article{York:1972,
  title={Role of {C}onformal {T}hree-{G}eometry in the {D}ynamics of {G}ravitation},
  author={York Jr, J.W.},
  journal={Physical review letters},
  volume={28},
  number={16},
  pages={1082},
  year={1972},
  publisher={APS}
}

@article{Lichnerowicz:1944,
  title={L’int{\'e}gration des {\'e}quations de la gravitation relativiste et le probl{\`e}me des $ n $ corps},
  author={Lichnerowicz, A.},
  journal={Journal de mathematiques pures et appliquees},
  volume={23},
  pages={37--63},
  year={1944}
}

@article{Carrasco_et_al:2012,
  title={Turbulent {F}lows for {R}elativistic {C}onformal {F}luids in 2+1 {D}imensions},
  author={Carrasco, F. and Lehner, L. and Myers, R.C. and Reula, O. and Singh, A.},
  journal={Physical Review D—Particles, Fields, Gravitation, and Cosmology},
  volume={86},
  number={12},
  pages={126006},
  year={2012},
  publisher={APS}
}

@Article{numpy:2020,
 title         = {Array {P}rogramming with {NumPy}},
 author        = {C.R. Harris and K.J. Millman and S.J.
                 van der Walt and R. Gommers and P. Virtanen and D.
                 Cournapeau and E. Wieser and J. Taylor and S.
                 Berg and N.J. Smith and R. Kern and M. Picus
                 and S. Hoyer and M.H. van Kerkwijk and M.
                 Brett and A. Haldane and J. Fern{\'{a}}ndez del
                 R{\'{i}}o and M. Wiebe and P. Peterson and P.
                 G{\'{e}}rard-Marchant and K. Sheppard and T. Reddy and
                 W. Weckesser and H. Abbasi and C. Gohlke and
                 T.E. Oliphant},
 year          = {2020},
 month         = sep,
 journal       = {Nature},
 volume        = {585},
 number        = {7825},
 pages         = {357--362},
 doi           = {10.1038/s41586-020-2649-2},
 publisher     = {Springer Science and Business Media {LLC}},
 url           = {https://doi.org/10.1038/s41586-020-2649-2}
}

@article{Bernhardt:2025,
  title={Future {S}tability of {S}olutions of the {E}instein-{N}onlinear {S}calar {F}ield {S}ystem with {D}ecelerated {E}xpansion},
  author={Bernhardt, L.},
  journal={arXiv preprint arXiv:2508.15303},
  year={2025}
}

@article{KehleMoschidis:2026,
      title={Weakly {T}urbulent {D}ynamics on {S}chwarzschild-{A}d{S} {B}lack {H}ole {S}pacetimes}, 
      author={C. Kehle and G. Moschidis},
      year={2026},
      eprint={2604.12118},
      archivePrefix={arXiv},
      primaryClass={gr-qc},
      url={https://arxiv.org/abs/2604.12118}, 
}

@article{Moschidis:2020,
   title={A {P}roof of the {I}nstability of {A}d{S} for the {E}instein-{N}ull {D}ust {S}ystem with an {I}nner {M}irror},
   volume={13},
   ISSN={2157-5045},
   url={http://dx.doi.org/10.2140/apde.2020.13.1671},
   DOI={10.2140/apde.2020.13.1671},
   number={6},
   journal={Analysis \& PDE},
   publisher={Mathematical Sciences Publishers},
   author={Moschidis, G.},
   year={2020},
   pages={1671–1754} }

@article{Moschidis:2023,
	author = {Moschidis, G.},
	doi = {10.1007/s00222-022-01152-7},
	id = {Moschidis2023},
	isbn = {1432-1297},
	journal = {Inventiones mathematicae},
	number = {2},
	pages = {467--672},
	title = {A {P}roof of the {I}nstability of {A}d{S} for the {E}instein-{M}assless {V}lasov {S}ystem},
	url = {https://doi.org/10.1007/s00222-022-01152-7},
	volume = {231},
	year = {2023}}

@article{BizonRostworowski:2011,
  title={Weakly {T}urbulent {I}nstability of {A}nti--de {S}itter {S}pacetime},
  author={Bizo{\'n}, P. and Rostworowski, A.},
  journal={Physical Review Letters},
  volume={107},
  number={3},
  pages={031102},
  year={2011},
  publisher={APS}
}

@article{Beyer:2026,
  title={The {E}xtremely-{T}ilted {F}luid {R}egime {N}ear {A}symptotically {K}asner {B}ig {B}ang {S}ingularities},
  author={Beyer, F.},
  journal={arXiv preprint arXiv:2602.19361},
  year={2026}
}

@article{Zheng:2026,
  title={Localized {B}ig {B}ang {S}tability of {S}pacetime {D}imensions $n\geq4$},
  author={Zheng, W.},
  journal={arXiv preprint arXiv:2601.21677},
  year={2026}
}

@article{Bizon_et_al:2015,
  title={Resonant {D}ynamics and the {I}nstability of {A}nti--de {S}itter {S}pacetime},
  author={Bizo{\'n}, P. and Maliborski, M. and Rostworowski, A.},
  journal={Physical {R}eview {L}etters},
  volume={115},
  number={8},
  pages={081103},
  year={2015},
  publisher={APS}
}

@article{Jalmuzna_et_al:2011,
  title={AdS {C}ollapse of a {S}calar {F}ield in {H}igher {D}imensions},
  author={Ja{\l}mu{\.z}na, J. and Rostworowski, A. and Bizo{\'n}, P.},
  journal={Physical Review D—Particles, Fields, Gravitation, and Cosmology},
  volume={84},
  number={8},
  pages={085021},
  year={2011},
  publisher={APS}
}

\appendix

\section{Berger-Moncrief Parameterisation of U(1)-Symmetric Metrics}
\label{app:BergerMoncrief_FrameExample}
In \cites{BergerMoncrief1998a,BergerMoncrief:1998b}, Berger and Moncrief numerically investigated the behaviour of $U(1)$-symmetric cosmologies near initial spacelike singularities. Building on work of \cite{Moncrief:1986}, they used the following parametrisation of a general $U(1)$-symmetric metric
\begin{align}
\label{eqn:BergerMoncrief_U1_Metric}
    g &= -e^{-2\varphi + 2\Lambda -4\tau}d\tau^{2} + e^{-2\tau+\Lambda-2\varphi}\Big[\frac{1}{2}(e^{2z}+e^{-2z}(1+x)^{2})du^{2} + (e^{2z}+e^{-2z}(x^{2}-1)dudv \nonumber \\
    &+ \frac{1}{2}(e^{2z} + e^{-2z}(1-x)^{2})dv^{2}\Big] + e^{2\varphi}(dx^{3} + \beta_{1}du + \beta_{2}dv)^{2},
\end{align}
where the spacelike Killing vector is given by $\xi = \frac{\del}{\del x^{3}}$ and $\varphi$, $\Lambda$, $z$, $x$, $\beta_{1}$, and $\beta_{2}$ are all functions of $(\tau,u,v)$ alone. It is straightforward to verify that an orthogonal dual frame $\theta^{a}$ is given by
\begin{equation}
\begin{gathered}
\label{eqn:MoncriefBerger_DualFrame}
    \theta^{0} = e^{-\varphi + \Lambda -2\tau}d\tau, \quad \theta^{1} = \frac{1}{\sqrt{2}}e^{-\tau + \frac{\Lambda}{2}-\varphi}\big(e^{z}du + e^{-z}dv), \\
    \theta^{2} = \frac{1}{\sqrt{2}}e^{-\tau + \frac{\Lambda}{2}-\varphi}\big(e^{-z}(x+1)du + e^{-z}(x-1)dv\big), \quad \theta^{3} = e^{\varphi}(dx^{3} + \beta_{1}du + \beta_{2}dv).
\end{gathered}
\end{equation}
Similarly, the corresponding orthonormal frame vectors $e_{a}$ are given by
\begin{equation}
\begin{gathered}
\label{eqn:MoncriefBerger_Frame}
    e_{0} = e^{2\tau-\Lambda + \varphi}\del_{	\tau}, \quad e_{1} = \frac{1}{\sqrt{2}}e^{\tau-z-\frac{\Lambda}{2}+\varphi}\Big( (1-x)\del_{u} + (1+x)\del_{v} + \big((x-1) \beta_{1} - (1+x)\beta_{2}\big)\del_{x^{3}}\Big) \\
    e_{2} = \frac{1}{\sqrt{2}}e^{\tau+z-\frac{\Lambda}{2}+\varphi}\Big(\del_{u} -\del_{v} +(\beta_{2}-\beta_{1})\del_{x^{3}}\Big), \quad e_{3} = e^{-\varphi}\del_{x^{3}}.
\end{gathered}
\end{equation}
In particular, \eqref{eqn:MoncriefBerger_Frame} is an example of a group-invariant frame with $e_{3}=\lambda^{\frac{-1}{2}}\xi$. \newline \par

Next, using the decomposition of the commutator coefficients \cite{RöhrUggla:2005}*{Page 6}
\begin{equation*}
\begin{gathered}
    H = -\frac{1}{3}\tensor{c}{_0^A_A}, \;\; \sigma_{AB} = -\tensor{c}{_0^C_{\langle A}}\delta_{B \rangle C}, \;\; \dot{u}_{A} = \tensor{c}{_0^0_A}, \\
    w_{A} = \frac{1}{4}\tensor{\eps}{_A^{BC}}\tensor{c}{_B^0_C}, \;\; w_{A} + \Omega_{A} = \frac{1}{2}\tensor{\eps}{_{AB}^C}\tensor{c}{_0^B_C}, \;\; n^{AB} = \frac{1}{2}\eps^{ab(A}\tensor{c}{_a^{B)}_b}, \;\; a_{A} = \frac{1}{2}\tensor{c}{_A^B_B},
\end{gathered}
\end{equation*}
the identity 
\begin{align*}
\tensor{c}{_a^c_b} = \tensor{\omega}{_a^c_b} -\tensor{\omega}{_b^c_a},
\end{align*}
and the formula for the connection coefficients \cite{Wald:1984}*{Page 50},
\begin{align*}
\omega_{abc} = e_{a}^{\gamma}e_{b}^{\mu}\nabla_{\gamma}e_{c\mu},
\end{align*}
we obtain the identities
\begin{align}
\label{eqn:MoncriefBerger_U1_FrameDecomposition}
\dot{u}_{3} &= a_{3} = n_{11} = n_{12} = n_{22} = \omega_{A} =  0, \nonumber \\
\dot{u}_{1} &= -\frac{1}{\sqrt{2}}e^{\tau-x-\frac{1}{2}\Lambda + \varphi}\Big[-(1+x)\del_{v}\Lambda + (1+x)\del_{v}\varphi + (x-1)(\del_{u}\Lambda - \del_{u}\varphi)\Big], \nonumber \\
\dot{u}_{2} &= \frac{1}{\sqrt{2}}e^{\tau+z-\frac{1}{2}\Lambda + \varphi}\Big(-\del_{v}\Lambda + \del_{v}\varphi + \del_{u}\Lambda - \del_{u}\varphi\Big), \nonumber \\
a_{1} &= -\frac{1}{4\sqrt{2}}e^{\tau-z-\frac{1}{2}\Lambda+\varphi}\Big(2\del_{v}x - 2(1+x)\del_{v}z + (1+x)\del_{v}\Lambda - 2\del_{u}x + (x-1)(2\del_{u}z - \del_{u}\Lambda)\Big), \nonumber \\
a_{2} &= \frac{1}{2\sqrt{2}}e^{\tau+z-\frac{1}{2}\Lambda + \varphi}(\del_{v}\varphi - \del_{u}\varphi), \quad H = \frac{1}{3}e^{2t-\Lambda+\varphi}(-2 + \del_{\tau}\Lambda - \del_{\tau}\varphi), \nonumber \\
\sigma_{11} &= \frac{1}{6}e^{2\tau - \lambda + \varphi}(-2 + 6\del_{\tau}z + \del_{\tau}\Lambda - 4\del_{\tau}\varphi) , \quad \sigma_{12} = \frac{1}{2}e^{2\tau-2z-\Lambda+\varphi}\del_{\tau}x, \nonumber \\
\sigma_{13} &= \frac{1}{2\sqrt{2}}e^{-z-\frac{3}{2}\lambda + 3(\tau+\varphi)}\big((1-x)\del_{\tau}\beta_{1} + (1+x)\del_{\tau}\beta_{2}\big), \\
\sigma_{22} &= \frac{1}{6}e^{2\tau -\Lambda +\varphi}(-2 -6\del_{\tau}z + \del_{\tau}\Lambda - 4\del_{\tau}\varphi),\quad \sigma_{23} = \frac{1}{2\sqrt{2}}e^{z-\frac{3}{2}\Lambda + 3(\tau+\varphi)}(\del_{\tau}\beta_{1} - \del_{\tau}\beta_{2}), \nonumber \\
\sigma_{33} &= \frac{-1}{3}e^{2\tau - \Lambda + \varphi}(-2 + \del_{\tau}\Lambda -4\del_{\tau}\varphi), \nonumber \\
n_{13} &= \frac{1}{4\sqrt{2}}e^{\tau+z-\frac{1}{2}\Lambda+\varphi}(-2\del_{v}z - \del_{v}\Lambda + 4\del_{v}\varphi + 2\del_{u}z + \del_{u}\Lambda -4\del_{u}\varphi), \nonumber \\
n_{23} &= \frac{1}{2\sqrt{2}}e^{\tau-z-\frac{1}{2}\Lambda + \varphi}\Big(-2\del_{v}x + 2(1+x)\del_{v}z - (1+x)\del_{v}\Lambda + 4\del_{v}\varphi + 2\del_{u}x + 2\del_{u}z - \del_{u}\Lambda \nonumber \\
&+ x\big(4\del_{v}\varphi - 2\del_{u}z + \del_{u}\Lambda - 4\del_{u}\varphi\big) + 4\del_{u}\varphi\Big), \nonumber \\
n_{33} &= e^{2\tau -\Lambda + 3\varphi}(\del_{u}\beta_{2} - \del_{v}\beta_{1}), \quad \Omega_{1} = \frac{1}{2\sqrt{2}}e^{z-\frac{3}{2}\Lambda + 3(\tau+\varphi)}(\del_{\tau}\beta_{1} - \del_{\tau}\beta_{2}), \nonumber \\
\Omega_{2} &= \frac{1}{2\sqrt{2}}e^{-z - \frac{3}{2}\Lambda + 3(\tau+\varphi)}\Big((x-1)\del_{\tau}\beta_{1} - (1+x)\del_{\tau}\beta_{2}\Big), \quad \Omega_{3} = \frac{1}{2}e^{2\tau - 2z - \Lambda + \varphi}\del_{\tau}x. \nonumber
\end{align}
Observe that the conditions $\dot{u}_{3}=a_{3}=n_{11}=n_{12}=n_{22}=\omega_{A}=0$ are consistent with the simplifications obtained from using a group invariant frame \eqref{eqn:U1_Frame_Simplifications} in Section \ref{sec:U1_OrthonormalFrame_Formalism}. Similarly, from \eqref{eqn:MoncriefBerger_U1_FrameDecomposition} we see that  
\begin{align*}
    \Omega_{2} = -\sigma_{13},  \quad \Omega_{1} = \sigma_{23},
\end{align*}
which is also consistent with \eqref{eqn:Group_Invariant_Commutator_Simplifications}. 
The condition 
\begin{align*}
    \Omega_{3} = \sigma_{12}
\end{align*}
is a gauge choice for the rotation of the spatial frame. Note that a polarised $U(1)$-symmetric spacetime is obtained by setting $\beta_{1}=\beta_{2}=0$ in \eqref{eqn:BergerMoncrief_U1_Metric}. In this case, using \eqref{eqn:MoncriefBerger_U1_FrameDecomposition}, we find
\begin{align*}
\sigma_{13}=\sigma_{23}=n_{33}=\Omega_{1}=\Omega_{2} =0
\end{align*}
which, again, is consistent with the simplifications obtained from $\xi$ being hypersurface orthogonal \eqref{eqn:Polarised_U1_Frame_Simplifications_a}-\eqref{eqn:Polarised_U1_Frame_Simplifications_b} in Section \ref{sec:U1_OrthonormalFrame_Formalism}.

\section{3+1 Orthonormal Frame Equations}
\label{app:3+1_Frame_Eqns}
For a general orthonormal frame, we can introduce local coordinates such that
\begin{align*}
    e_{0} = \alpha^{-1}(\del_{t} - \beta^{\Omega}\del_{\Omega}), \quad e_{A} = e^{\Omega}_{A}\del_{\Omega}.
\end{align*}
The field equations, without symmetry assumptions and $w_{A}=0$, are given by \cite{RöhrUggla:2005}*{Pages 6-7}
\begin{align}
\label{eqn:frame_evo_3+1}
e_{0}(e_{A}^{\Sigma}) &= -(H\delta_{A}^{B} + \sigma_{A}^{B}+\tensor{\eps}{_A^B_C}\Omega^{C})e_{B}^{\Sigma}, \\
e_{0}(a_{A}) &= -e_{A}(H) + \frac{1}{2}e_{B}(\tensor{\sigma}{_{A}^{B}}) +\frac{1}{2}\tensor{\eps}{_A^B_C}e_{B}(\Omega^{C}) - H(\dot{u}_{A}+a_{A}) \nonumber \\
&+ \tensor{\sigma}{_A^B}(\frac{1}{2}\dot{u}_{B}-a_{B}) +\tensor{\eps}{_A^B_C}\Omega^{C}(\frac{1}{2}\dot{u}_{B}-a_{B}), \\
e_{0}(n^{AB}) &= -Hn^{AB} + 2\tensor{n}{^{(A}_{C}}\sigma^{B)C} +2\tensor{\eps}{^C_D^{(A}}\Omega_{C}n^{B)D} \nonumber \\
&- \epsilon^{CD(A}\Big[e_{C}(\tensor{\sigma}{_D^{B)}})+\dot{u}_{C}\tensor{\sigma}{_D^{B)}}\Big]  +e_{C}(\delta^{C(A}\Omega^{B)}-
\delta^{AB}\Omega^{C}) \nonumber \\
&+ \dot{u}_{C}(\delta^{C(A}\Omega^{B)}-\Omega^{C}\delta^{AB}), \\
e_{0}(H) &= -H^{2}+\frac{1}{3}e_{A}(\dot{u}^{A}) + \frac{1}{3}\dot{u}_{A}(\dot{u}^{A}-2a^{A})-\frac{1}{3}\sigma_{AB}\sigma^{AB}\nonumber \\
&-\frac{1}{6}(T_{00}+\tensor{T}{_A^A}) , \\
\label{eqn:sigma_evo_3+1}
e_{0}(\sigma_{AB}) &= -3H\sigma_{AB} +e_{\langle A}(\dot{u}_{B \rangle}) +\dot{u}_{\langle A}\dot{u}_{B \rangle} -e_{\langle A}(a_{B \rangle})+a_{\langle A}\dot{u}_{B \rangle} \nonumber \\
&+\epsilon_{CD(A}\Big[e^{C}(\tensor{n}{_{B)}^D})+\tensor{n}{_{B)}^D}\dot{u}^{C}-2\tensor{n}{_{B)}^D}a^{C} +2\tensor{\sigma}{_{B)}^{D}}\Omega^{C}\Big] \nonumber \\
&-2\tensor{n}{_{\langle A}^C}\tensor{n}{_{B \rangle}_{C}} +nn_{\langle AB \rangle} +T_{\langle AB \rangle}.
\end{align}
The constraint equations are 
\begin{align}
\label{eqn:C1_Constraint_3+1}
0 = \mathcal{C}_{1} &:= 2e_{[A}(e_{B]}^{\Sigma}) - 2a_{[A}e^{\Sigma}_{B]} - \epsilon_{ABD}n^{DC}e^{\Sigma}_{C}, \\
0 = \mathcal{C}_{2} &:= 2e_{[B}(\dot{u}_{A]}) + 2a_{[A}\dot{u}_{B]} + \epsilon_{ABC}n^{CD}\dot{u}_{D}, \\
\label{eqn:C3_Constraint_3+1}
0 = \mathcal{C}_{3} &:= \dot{u}_{A} - \alpha^{-1}e_{A}(\alpha), \\
0 = \mathcal{C}_{M} &:= e_{B}(\tensor{\sigma}{_A^B}) -2e_{A}(H) -3\tensor{\sigma}{_A^B}a_{B} - \epsilon_{ABC}n^{BD}\tensor{\sigma}{_D^C} - T_{0A} , \\
\label{eqn:Hamiltonian_Constraint_3+1}
0 = \mathcal{C}_{H} &:= 4e_{A}(a^{A}) +6H^{2} - 6a^{A}a_{A} -n^{AB}n_{AB} +\frac{1}{2}n^{2} -\sigma_{AB}\sigma^{AB} \nonumber \\
&-2T_{00}.
\end{align}

\section{Frame Expressions for Additional Quantities}
\label{app:Frame_Quantities}
The twist $w_{\alpha}$ of the Killing vector $\xi^{\mu}$ is defined by
\begin{align*}
    w_{\alpha} = \eps_{\alpha\beta\gamma\delta}\xi^{\beta}\nabla^{\gamma}\xi^{\delta}.
\end{align*}
Expressed in terms of our group invariant, orbit aligned orthonormal frame we then have
\begin{align*}
    w_{a} &= \tensor{\eps}{_{a3}^{cd}}\lambda^{\frac{1}{2}}\Big[e_{c}(\lambda^{\frac{1}{2}}\delta_{3d}) + \lambda^{\frac{1}{2}}\tensor{\omega}{_c^3_d} \Big] \nonumber \\
    &= \lambda \tensor{\eps}{_{a3}^{cd}}\tensor{\omega}{_c^3_d} \nonumber \\
    &= \lambda \Big[\eps_{a123}\big(\tensor{\omega}{_1^3_2} - \tensor{\omega}{_2^3_1}\big) + \eps_{0a23}\big(\tensor{\omega}{_0^3_2} - \tensor{\omega}{_2^3_0}\big) + \eps_{01a3}\big(\tensor{\omega}{_1^3_0} - \tensor{\omega}{_1^3_0}\big)\Big] \nonumber \\
    &= \lambda \Big[\delta_{a}^{0}\tensor{c}{_1^3_2} + \delta_{a}^{1}\tensor{c}{_0^3_2}  + \delta_{a}^{2}\tensor{c}{_1^3_0} \Big].
\end{align*}
where we have used the fact that the frame component of $\xi^{a}$ is given by $\lambda^{\frac{1}{2}}\delta^{a}_{3}$. From this expression we see that $e_{3}(w_{a}) = \xi(w_{a})=0$ since both $\lambda$ and $\tensor{c}{_a^c_b}$ are constant on the group orbits (cf. \eqref{eqn:lambda_invariance}, \eqref{eqn:e3_commutator_independence}). Moreover, we see that the twist vanishes when 
\begin{align*}
   \tensor{c}{_1^3_2} =  \tensor{c}{_0^3_2}  = \tensor{c}{_1^3_0} = 0,
\end{align*}
which is exactly the condition for hypersurface orthogonality of $\xi$ derived from the Lie bracket in \eqref{eqn:HO_Killing_Commutator_Simplifications}, as required. \newline \par

In terms of the variables from Section \ref{sec:Sym_Hyp_U1_Eqns}, the spatial Ricci scalar is given by
\begin{align}
\label{eqn:Spatial_Ricci_Scalar_NewVars}
    {}^{(3)}R &= 4\sqrt{3}e_{1}(\hat{a}_{1}) + 4\sqrt{3}e_{2}(\hat{a}_{2}) - 18(\hat{a}_{1}^{2} + \hat{a}_{2}^{2}) - 6(\hat{n}^{2} + n_{\times}^{2} + n_{+}^{2}).
\end{align}

\section{Numerically Solving the Elliptic Constraint}
\label{app:Numerical_Constraint_Solver}
In this section we outline the process for numerically solving the elliptic equation for the conformal factor $\psi$,
\begin{align}
\label{eqn:elliptic_constraint_app}
    \Delta\psi - \frac{3}{4}\psi^{5}H^{2} + \frac{1}{4}\psi^{5}(\Lambda + T_{00}^{\text{fl}}) + \frac{3}{4}\psi^{-7}(Z_{+}^{2} + Z_{-}^{2} + Z_{2}^{2}) = 0.
\end{align}
After discretising \eqref{eqn:elliptic_constraint_app}, we obtain a system of non-linear equations which can be schematically written as 
\begin{align}
\label{eqn:Functional_ConstraintForm}
    F(\psi) := A\psi - f(\psi) = 0
\end{align}
where $A$ is the fourth order accurate discrete Laplacian with $\Delta x := \Delta x^{1} = \Delta x^{2}$,
\begin{align*}
  (A\psi)_{i,j} = \frac{-\psi_{i-2,j} + 16\psi_{i-1,j} + 16\psi_{i+1,j} - \psi_{i+2,j} -\psi_{i,j-2} + 16\psi_{i,j-1} + 16\psi_{i,j+1} - \psi_{i,j+2} - 60\psi_{i,j}}{12(\Delta x)^{2}} 
\end{align*}
and $f(\psi)$ is given by
\begin{align*}
    f(\psi_{i,j}) :=  -\Big(\frac{1}{4}(\Lambda + T_{00}^{\text{fl}}) -\frac{3}{4}H^{2} 
    \Big)\psi_{i,j}^{5} - \frac{3}{4}(Z_{+}^{2} + Z_{-}^{2} + Z_{2}^{2})\psi_{i,j}^{-7}.
\end{align*}
We can obtain an approximate solution to \eqref{eqn:Functional_ConstraintForm} using Newton's method. In this case, the update formula is given by
\begin{align}
\label{eqn:Newton_Update_Elliptic}
    \psi^{(n+1)} = \psi^{(n)} - \delta\psi^{(n)}
\end{align}
where $\delta\psi^{(n)}$ solves the linear system
\begin{align*}
    J^{(n)}\delta\psi^{(n)} = F(\psi^{(n)})
\end{align*}
and $J^{(n)} := F^{\prime}(\psi^{(n)})$ is the Jacobian matrix evaluated at the current iterate. This linear system can be explicitly written as 
\begin{align*}
    \Delta \delta\psi^{(n)} &+ 5\Big(\frac{1}{4}(\Lambda + T_{00}^{\text{fl}}) -\frac{3}{4}H^{2} 
    \Big)(\psi^{(n)})^{4}\delta\psi^{(n)} - \frac{21}{4}(Z_{+}^{2} + Z_{-}^{2} + Z_{2}^{2})(\psi^{(n)})^{-8}\delta\psi^{(n)} \nonumber \\
    &= \Delta \psi^{(n)} + \Big(\frac{1}{4}(\Lambda + T_{00}^{\text{fl}}) -\frac{3}{4}H^{2} 
    \Big)(\psi^{(n)})^{5} + \frac{3}{4}(Z_{+}^{2} + Z_{-}^{2} + Z_{2}^{2})(\psi^{(n)})^{-7}.
\end{align*}
We solve this system using successive-over-relaxation, cf. \cite{Strikwerda:2004}*{Chapter 13}
\begin{align*}
    \delta\psi^{(n,k+1)}_{i,j} = (1-\omega)\delta\psi^{(n,k)}_{i,j} + \omega\frac{(\Delta x)^{2}F(\psi^{(n)}_{i,j}) - \frac{1}{12}\gamma^{(n,k)}}{-5 + (\Delta x)^{2}\Big[5\big(\frac{1}{4}(\Lambda + T_{00}^{\text{fl}}) -\frac{3}{4}H^{2} 
    \big)(\psi_{i,j}^{(n)})^{4} - \frac{21}{4}(Z_{+}^{2} + Z_{-}^{2} + Z_{2}^{2})(\psi_{i,j}^{(n)})^{-8}\Big]}
\end{align*}
where $\omega \in [1,2)$ is the relaxation parameter, $k$ is the iteration index for the linear problem, and $\gamma^{(n,k)}$ is defined by
\begin{align*}
    \gamma^{(n,k)} := -\delta\psi^{(n,k)}_{i-2,j} + 16\delta\psi^{(n,k)}_{i-1,j}  + 16\delta\psi^{(n,k)}_{i+1,j}  - \delta\psi^{(n,k)}_{i-2,j} -\delta\psi^{(n,k)}_{i,j-2} + 16\delta\psi^{(n,k)}_{i,j-1}  + 16\delta\psi^{(n,k)}_{i,j+1}  - \delta\psi^{(n,k)}_{i,j+2}.
\end{align*}
Once we have obtained a suitably accurate approximation of $\delta\psi^{(n)}$, we compute $\psi^{(n+1)}$ using the Newton update \eqref{eqn:Newton_Update_Elliptic}.

\end{document}